\theoremstyle{plain}
\newtheorem{theorem}{Theorem}
\newtheorem{proposition}[theorem]{Proposition}
\newtheorem{corol}[theorem]{Corollary}
\newtheorem{lemma}[theorem]{Lemma}
\theoremstyle{definition}
\newtheorem{example}[theorem]{Example}
\newtheorem{definition}[theorem]{Definition}
\newcommand{\U}{U} 
\newcommand{\V}{V} 
\newcommand{\lab}{\Lambda} 
\newcommand{\R}{R} 
\newcommand{\G}{\mathcal{G}} 
\newcommand{\HG}{\mathcal{H}} 
\newcommand{\A}{\mathcal{A}} 
\newcommand{\B}{\mathcal{B}} 
\newcommand{\mG}{{\overline{\G}}} 
\newcommand{\C}{\mathcal{C}} 
\newcommand{\F}{\mathcal{F}} 
\newcommand{\degree}{\mathrm{deg}} 
\newcommand{\Zet}{Z}
\newcommand{\W}{\underline W}
\newcommand{\cempty}{0} 
\newcommand{\id}{1'} 
\newcommand{\di}{0'} 
\newcommand{\all}{1} 
\newcommand{\conv}[1]{#1^{-1}}
\newcommand{\compl}[1]{#1^c}
\newcommand{\cpi}{\bar{\pi}} 
\newcommand{\compos}{\circ} 
\DeclareMathOperator{\rres}{\backslash} 
\DeclareMathOperator{\lres}{/} 
\newcommand{\bisimeq}[2]{\simeq^{#1}_{#2}} 
\newcommand{\mysim}[2]{\preceq^{#1}_{#2}} 
\newcommand{\indistk}[2]{\equiv^{#1}_{#2}} 
\newcommand{\contndk}[2]{\Rrightarrow^{#1}_{#2}} 
\newcommand{\indist}[1]{\equiv^{#1}} 
\newcommand{\contnd}[1]{\Rrightarrow^{#1}} 
\newcommand{\BiSim}[3]{\mathrm{BiSim}^{#1}(#2,#3)}
\newcommand{\Simforth}[3]{\mathrm{Sim}_{\rm forth}^{#1}(#2,#3)}
\newcommand{\Simback}[3]{\mathrm{Sim}_{\rm back}^{#1}(#2,#3)}
\newcommand{\typek}[3]{\mathrm{tp}^{#1}_{#2}(#3)}
\newcommand{\type}[2]{\mathrm{tp}^{#1}(#2)}
\newcommand{\atype}[2]{\mathrm{atp}^{#1}(#2)}
\newcommand{\atypeq}[1]{\mathrm{atp}^{#1}}
\DeclareMathOperator{\aexp}{aexp}
\DeclareMathOperator{\paths}{paths}
\newcommand{\faths}[1]{\paths^\F_{#1}}
\newcommand{\fpaths}{\paths^\F}
\DeclareMathOperator{\upaths}{upaths}
\newcommand{\graph}[1]{\mathit{graph}(#1)}
\newcommand{\ugraph}[1]{\mathit{ugraph}(#1)}
\newcommand{\edgelabel}[1]{\ensuremath{\mathit{#1}}}
\newcommand{\knows}{\edgelabel{knows}}
\newcommand{\worksAt}{\edgelabel{worksAt}}
\newcommand{\patientOf}{\edgelabel{patientOf}}
\newcommand{\hasDisease}{\edgelabel{hasDisease}}
\newcommand{\Atom}{\mathrm{Atom}}
\newcommand{\phigenp}{\varphi^{\F,k}_{\mG_1,P}}
\newcommand{\phiposatoms}{\varphi^{\F}_{\mG_1,\text{posatoms}}}
\newcommand{\phinegatoms}{\varphi^{\F}_{\mG_1,\text{negatoms}}}
\newcommand{\phicompforth}{\varphi^{\F,k}_{\mG_1,\text{composition forth}}}
\newcommand{\phicompback}{\varphi^{\F,k}_{\mG_1,\text{composition back}}}
\newcommand{\phiprojforth}{\varphi^{\F,k}_{\mG_1,\text{projection forth}}}
\newcommand{\phiprojback}{\varphi^{\F,k}_{\mG_1,\text{projection back}}}
\newcommand{\philresforth}{\varphi^{\F,k}_{\mG_1,\text{leftres forth}}}
\newcommand{\philresback}{\varphi^{\F,k}_{\mG_1,\text{leftres back}}}
\newcommand{\phirresforth}{\varphi^{\F,k}_{\mG_1,\text{rightres forth}}}
\newcommand{\phirresback}{\varphi^{\F,k}_{\mG_1,\text{rightres back}}}
\newcommand{\psigenp}{\psi^{\F,k}_{\mG_1,P}}
\newcommand{\psicompforth}{\psi^{\F,k}_{\mG_1,\text{composition forth}}}
\newcommand{\psicompback}{\psi^{\F,k}_{\mG_1,\text{composition back}}}
\newcommand{\psiprojforth}{\psi^{\F,k}_{\mG_1,\text{projection forth}}}
\newcommand{\psiprojback}{\psi^{\F,k}_{\mG_1,\text{projection back}}}
\newcommand{\psicprojforth}{\psi^{\F,k}_{\mG_1,\text{coproj forth}}}
\newcommand{\psicprojback}{\psi^{\F,k}_{\mG_1,\text{coproj back}}}
\newcommand{\psilresforth}{\psi^{\F,k}_{\mG_1,\text{leftres forth}}}
\newcommand{\psilresback}{\psi^{\F,k}_{\mG_1,\text{leftres back}}}
\newcommand{\psirresforth}{\psi^{\F,k}_{\mG_1,\text{rightres forth}}}
\newcommand{\psirresback}{\psi^{\F,k}_{\mG_1,\text{rightres back}}}
\begin{document}

\begin{titlepage}

\title{Similarity and bisimilarity notions appropriate for
characterizing indistinguishability in fragments of the calculus of relations} 
\author{George H.L. Fletcher \\ Eindhoven University of
Technology \and Marc Gyssens \\ Hasselt University \\
transnational University of Limburg \and Dirk Leinders \\ Hasselt
University \\ transnational university of Limburg
\and Jan Van den Bussche\thanks{Corresponding author.  Full
address: Jan Van den Bussche, Hasselt University, Martelarenlaan 42, 3500 Hasselt,
Belgium. Email: jan.vandenbussche@uhasselt.be. Phone: +32 11 26
82 26.} \\ Hasselt University \\ transnational
University of Limburg \and Dirk Van Gucht \\ Indiana University
\and Stijn Vansummeren \\ Universit\'e Libre de Bruxelles \\ (ULB)}
\maketitle

\thispagestyle{empty}

\begin{abstract}
Motivated by applications in databases,
this paper considers various fragments of the calculus of binary relations.
The fragments are obtained by leaving out, or keeping in,
some of the standard operators, along with some derived operators
such as set difference, projection, coprojection, and
residuation.  For each considered fragment, a characterization is
obtained for when two given binary relational structures are
indistinguishable by expressions in that fragment.  The
characterizations are based on appropriately adapted notions of simulation
and bisimulation.

\bigskip \textbf{Keywords:} calculus of relations,
indistinguishability, bisimulation, simulation, coprojection,
residuation

\end{abstract}

\end{titlepage}

\section{Introduction} \label{sec:introduction}

The calculus of relations \cite{Tarski41,Givant06,Maddux91,
Pratt92} consists of five natural operations on binary
relations: union, intersection, complementation, composition, and
converse.  These operators can be applied to given binary
relations, combined with the four standard constant relations:
empty, all, identity, and diversity.  The calculus of relations
is a very natural formalism and occurs within logics for
reasoning about binary relations, notably dynamic and description
logics \cite{dynamiclogicbook,dlhandbook}.  The calculus also has
historically motivated the development of the theory of relation algebras
\cite{maddux_book,hh_relalgames}.  In the present paper, however,
we are not looking at abstract relation algebras, but rather at
the question of \emph{indistinguishability} of two given finite
binary relational structures within the calculus of relations.

This paper has been inspired by the authors' ongoing research
program to understand in a precise way the expressive power of
the calculus of relations as a database query language for binary
relation structures
\cite{good_tarski,gsvg_tagging,marc_pods2006,wu_pospathfragment,rafragments,amai_dipi}.
Indistinguishability of structures in various logics is one of
the most basic tools in the study of the expressive power of
database query languages as well as in finite model theory
\cite{EF99,Libkin04,ahv_book}.  Indeed the calculus of relations, as
a core relational algebra query language on binary relations, is
very relevant to the field of databases.  Binary relations
(directed graphs)
show up naturally in data on the
Web \cite{ABS99,FLM98}, dataspaces \cite{FHM05}, Linked Data
\cite{BHB09}, and RDF data \cite{RDFPrimer}.  Moreover, in
restriction to directed graphs that are trees, the relational
calculus is closely tied to the XML query language XPath, and the
expressive power of XPath and various fragments has been
intensively investigated \cite{BFK05,Marx05,MR05,marc_pods2006}.

Here, working with general finite binary relation structures
rather than trees, we consider, in addition to the five binary
relation operations and four constant binary relations mentioned
above, also four derived operations that are well known in the
literature: set difference; projection;
coprojection; and residuation.  These derived operations can be
expressed in terms of the other operations and constants, but can
still be interesting on their own when considering fragments
where some other operations or constants have been left out.  We
consider set difference because it is the standard
domain-independent alternative to complementation in database
query languages \cite{ahv_book}.  We consider projection and coprojection
(existential and universal quantification) because they are
standard logical operations, and have been shown important in the
XPath setting \cite{MR05}, so it is natural to study their behaviour when
generalising from trees to general graphs.  Finally, we consider
residuation because it is similar to the standard relational
division operation in databases, and corresponds to the set
containment join \cite{mamoulis_setjoin}.  Obviously, one could
keep on inventing additional operations on binary relations and
study their interdependencies, but our chosen set of
operations is not too large and well-motivated from the point of
view of query languages.

Our goal now is to understand the relative importance of the
various operations and the effect of their presence on
indistinguishability.  Thereto we consider all possible fragments
of the calculus of relations that can be constructed as follows.
The most basic fragment we consider has the empty and identity
relations as constants, and the operations union, composition,
and intersection.  Then all other fragments arise by adding any
choice of the remaining operations and constants.  For each
fragment, we provide a characterization of when two finite binary
relation structures are indistinguishable by expressions in the
fragment.  Our approach follows the one outlined by Goranko and
Otto \cite{gorankotto}: we provide new notions of finite-round
(degree-bounded) bisimulations, appropriate for fragments of the
calculus of relations, and give characteristic expressions for
them.  For finite structures, such an approach immediately leads
to a Hennessy--Milner-type theorem
\cite{hennessymilner,blackburn_modallogic}.

One may ask why intersection is present in all the fragments we
consider.  Intersection is the most basic query language
primitive \cite{ahv_book}. Our results rely heavily on its
presence, in the same way as the classical Hennessy--Milner
theorem relies heavily on the presence of conjunction in the
modal logic.  Intersection is known not to be ``safe for
bisimulation'' \cite{benthem_safe,blackburn_modallogic}.
Nevertheless, the question whether or not all operations in a
certain query language are safe for bisimulation appears to be
quite different from the problem of the present paper: that of
determining whether or not two given structures can be
distinguished in a certain query language.

This is not to say, however, that indistinguishability of
structures in fragments lacking intersection is uninteresting.
But it changes the nature of the problem so drastically that we
leave omitting intersection outside the scope of the present
paper.  For example, consider the fragment consisting only of
composition and nothing else.  Then indistinguishability of
finite structures amounts to the equivalence problem for finite
automata, which is PSPACE-complete \cite{ahu}.  In contrast, for
all fragments considered in this paper, we will see that
indistinguishability is decidable in polynomial time.

As mentioned above, bisimilarity-like characterizations of
indistinguishability are common in modal logics.  For non-modal
logics, such as first-order logic, indistinguishability is
typically captured by Ehrenfeucht-Fra\"\i ss\'e games.  There
also exist intermediate fragments of first-order logic, such as
the guarded and the packed fragments
\cite{marxvenema_fmta}, where indistinguishability
can still be captured by appropriate notions of bisimulation.
Note, however, that most of the fragments considered in this
paper are not subsumed by the packed fragment.  For example, the
expressions $(R
/ S) \circ T$, or $((R \circ S) - (R \circ T)) \circ S$, or
$\cpi_1(R) \circ S \circ \cpi_2(T)$, are not expressible
in the packed fragment.\footnote{In
the example, $R/S$ stands for the left residual, $-$ stands for
set difference, and $\cpi$ stands for coprojection; these
operators will be defined in the next Section.}  Note als that
projection such as as occurring in, e.g., $\pi_1(R /S)$, is in
general not equivalent to guarded existential quantification.
Only the ``positive'' fragments, that include none of the
residuals, set difference, complementation, and coprojection,
fall in the packed fragment.

To conclude this Introduction, we note another motivation to
understand indistinguishability in database query language
fragments, apart from the relevance to expressive power and
the intrinsic foundational motivation.
This is the new approach of \emph{structural} indexing to
database query processing, proposed by some of us and others
\cite{FVW09,wu_pospathfragment,indexingtriples},
whereby a given query expression is
processed by accessing blocks of data indistinguishable by the
operations used in the given expression.
By the results of our work, these blocks can be computed using
similarity or bisimilarity checks.

\paragraph{Summary}
The further contents of this paper may be summarized as follows.
In Section~2 we define the language fragments formally, and
define the notion of indistinguishability.  In Section~3 we
discuss different ways how indistinguishability can be
characterized; in particular we discuss the connection with
multi-dimensional modal logics, and the 3-variable fragment of
first-order logic.  In Section~4, we define finite-round
bisimulations appropriate for the fragments with the set
difference operation.  In Section~5, we define finite-round
simulations appropriate for the fragments without set difference.
In Section~6 we given Hennessy--Milner-type theorems for
indistinguishability of finite structures.  We conclude in
Section~7.

\section{Language fragments and indistinguishability}
\label{sec:lang-struct-indist}

We assume an infinite universe of atomic data elements, denoted by
$\U$.  A \emph{binary relation} on $\U$ is a subset of $\U^2=U
\times U$.  We further fix an arbitrary finite set $\lab$ of
\emph{relation names}, called the \emph{vocabulary}.
In the calculus of relations, a
\emph{structure} is a pair $\G = (\V, (R^{\G})_{R \in
\Lambda})$ where $\V$ is a subset of $\U$ and each $R^{\G}$ is a
binary relation on $\V$. The set $\V$ is called the set of nodes
of $\G$; the vocabulary $\Lambda$ can be thought of as a set of
edge labels whereby $\G$ can be thought of as an edge-labeled
directed graph.  When $\V$ is finite, the structure is said to be
a \emph{finite} structure.

Expressions in the calculus of relations are built recursively from
the relation names $R \in \Lambda$,
and the constant symbols empty ($\cempty$), all
($\all$), diversity ($\di$), and identity ($\id$), using
the following standard and/or derived operations. The standard
operations are union $(e_1 \cup e_2)$, intersection $(e_1 \cap e_2)$,
complementation $(\compl{e})$, composition $(e_1 \compos e_2)$, and
converse $(\conv{e})$; the derived operations we consider are set
difference $(e_1 - e_2)$, projection ($\pi_1{e}$ or $\pi_2{e}$),
co-projection ($\cpi_1{e}$ or $\cpi_2{e}$), left residual ($e_1\lres
e_2$) and right residual ($e_1\rres e_2$).\footnote{To distinguish
  between set difference and the right residual, we use the minus sign
  ($-$) for set difference.}

Semantically, on any structure $\G$ as above, an expression $e$ defines a
binary relation, denoted by $e(\G)$. For convenience, we recall the semantics of
the constants and the standard operations.
\begin{align*}
  \R(\G)             &= \R^{\G};\\
  \cempty(\G)                &= \emptyset;\\
  \all(\G)                        &= \V^2;\\
  \di(\G)                         &= \{(s,t) \mid s,t \in \V\ \&\ s \neq t\};\\
  \id(\G)                         &= \{(s,s) \mid s \in \V\};\\
  (e_1 \cup e_2) (\G)         &= e_1(\G) \cup e_2(\G);\\
  (e_1 \cap e_2) (\G)         &= e_1(\G) \cap e_2(\G);\\ 
  \compl{e}(\G)              &= \{(s,t) \mid s,t\in V\
  \&\ (s,t) \notin e(\G)\};\\
  (e_1 \compos e_2)(\G)   &= \{(s,t) \mid (\exists v)((s,v)\in e_1(\G)\ \& \ (v,t)\in e_2(\G))\};\\
  (\conv{e})(\G)                 &= \{(s,t) \mid (t,s) \in e(\G)\}.
\end{align*}
The semantics of the derived operations is as follows:
\begin{align*}
  (e_1 - e_2)(\G)               &= \{(s,t) \mid (s,t) \in e_1(\G)
  \ \&\ (s,t) \notin e_2(\G)\}\\
  \pi_1(e)(\G)                 &= \{(s,s) \mid (\exists t)(s,t) \in e(\G)\}\\
  \pi_2(e)(\G)                 &= \{(s,s) \mid (\exists t)(t,s) \in e(\G)\}\\
  \cpi_1(e)(\G)               &= \{(s,s) \mid s\in V \ \& \ \lnot(\exists t)(s,t) \in e(\G)\}\\
  \cpi_2(e)(\G)               &= \{(s,s) \mid s\in V \ \& \ \lnot(\exists t)(t,s) \in e(\G)\}\\
  (e_1\lres e_2)(\G)          &= \{(s,t) \mid (\forall v)((t,v) \in e_2(\G) \rightarrow (s,v) \in e_1(\G)) \}\\
  (e_1\rres e_2)(\G)          &= \{(s,t) \mid (\forall v)((v,s) \in e_1(\G) \rightarrow (v,t) \in e_2(\G)) \}
\end{align*}

\begin{figure}
\begin{center}
\resizebox{0.9\textwidth}{!}{\includegraphics{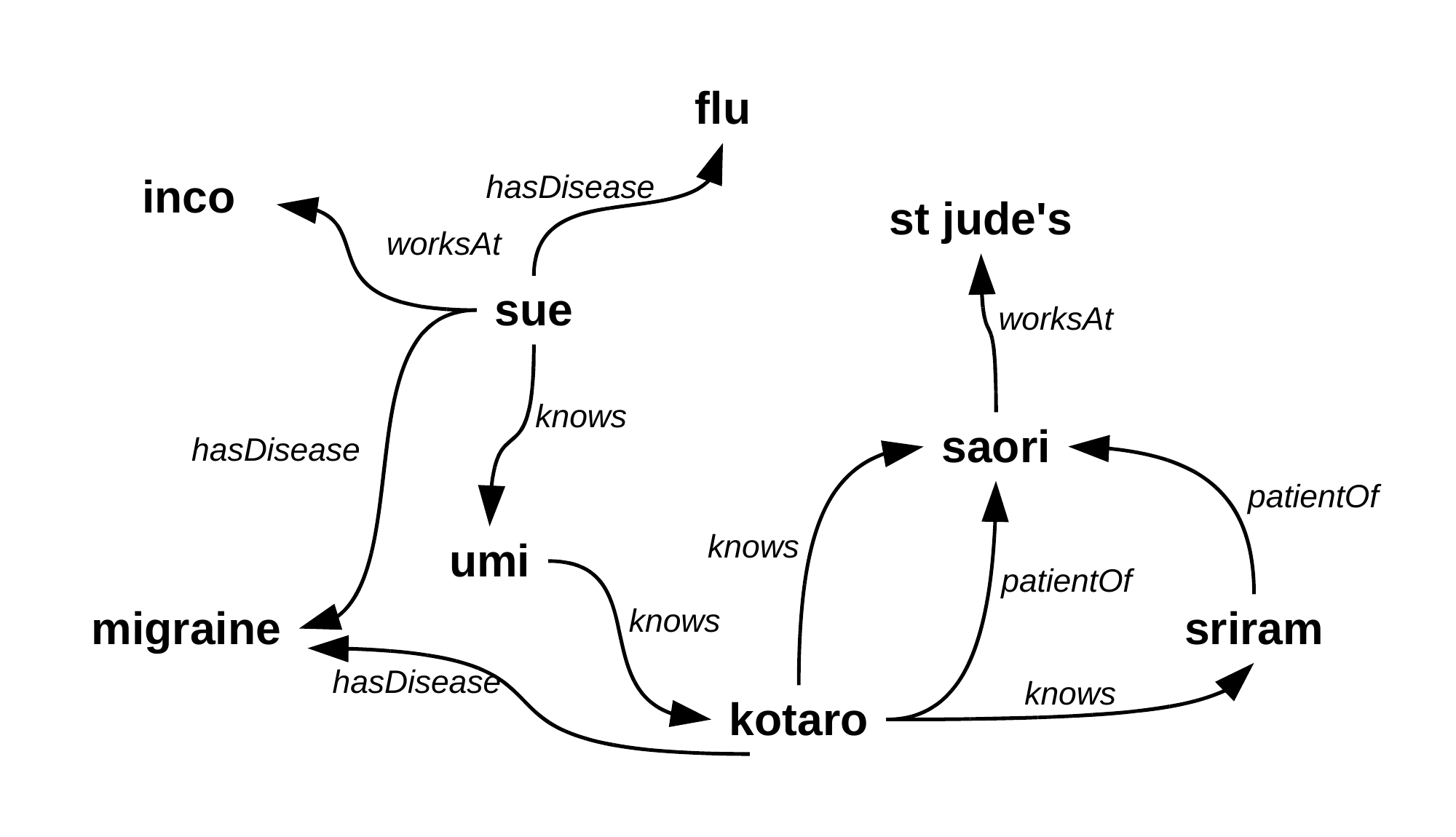}}
\end{center}
  \caption{Example structure from Example~\ref{ex:intro-example}.}
  \label{fig:intro-example}
\end{figure}

\begin{example} \label{ex:intro-example}
Figure~\ref{fig:intro-example} shows a finite structure $\G$.
The set of nodes equals
$\{\text{migraine}, {\rm flu}, {\rm sue}, {\rm umi}, {\rm saori},
{\rm sriram}, \text{st jude's}, {\rm inco}\}$,
and the vocabulary $\Lambda$ equals $\{\knows, \worksAt,
  \patientOf, \hasDisease \}$.
\begin{itemize}
\item
  The doctors (i.e., persons having patients), can be retrieved
  from $\G$ by the expression
  \begin{equation*}
  \label{eq:doctors}
  e_1 = \pi_2(\patientOf)
\end{equation*}
resulting in $e_1 ( \G ) = \{ \text{(saori,saori)} \}$.
\item
 The people and the doctors they know can be obtained by the expression
 \begin{equation*}
   \label{eq:know-doctor}
   e_2 = \knows \circ e_1
 \end{equation*}
 resulting in $e_2(\G) = \{ \text{(kotaro,saori)} \}$.
 \item
 The doctors and the hospitals where they practice:
 \begin{equation*}
   \label{eq:doctor-hospital}
   (e_1 \circ \worksAt) (\G) = \{ \text{(saori,st jude's)} \}.
 \end{equation*}
 \item
Ill people without medical care:
 \begin{equation*}
   (\pi_1(\hasDisease) - \pi_1(\patientOf)) (\G) = \{ \text{(sue,sue)} \}.
 \end{equation*}
 \item
 Healthy doctors:
 $$ (e_1 \cap \cpi_1(\hasDisease))(\G) = \{ \text{(saori,saori)} \}. $$
 \item
 Finally, the doctors who know all the patients of some other
 doctor can be retrieved by the expression
 $$ e_1 \cap \pi_1((\knows \lres \patientOf) \cap \di), $$ which
 on our example graph yields the empty relation, since the graph
 contains only one doctor.
\end{itemize}
\end{example}

\subsection{Queries and equivalence}
Expressions in the calculus of relations express queries.
Formally, a \emph{query} is a mapping $Q$ from the set of all
structures to the set of all binary relations on $U$, such that
for each structure $\G$, if $V$ is the node set of $\G$, then
$Q(\G)$ is a binary relation on $V$.  Obviously the query $Q$
expressed by an expression $e$ is simply defined by $Q(\G) :=
e(\G)$.

Two expressions $e_1$ and $e_2$ are now called \emph{equivalent}, denoted by
$e_1 \equiv e_2$, if they express the same query, i.e., if
$e_1(\G) = e_2(\G)$ for all possible
structures $\G$.  The following equivalences demonstrate that the
derived operations are indeed derived, and also present some
additional interdependencies among the constants and operations
considered in this paper:
\begin{align*}
& 1 \equiv 0^c \equiv \id \cup \di  \equiv 0 \lres 0 \equiv 0 \rres 0\\
& 0' \equiv {1'}^c \\
& e_1 - e_2 \equiv e_1 \cap e_2^c \\
& e^c \equiv 1 - e \\
& \pi_1(e) \equiv
(e \compos \conv e) \cap \id \equiv (e \compos 1) \cap \id
\equiv \cpi_1(\cpi_1(e)) \\
& \pi_2(e) \equiv
(\conv e \compos e) \cap \id \equiv (1 \compos e) \cap \id
\equiv \cpi_2(\cpi_2(e)) \\
& \cpi_i(e) \equiv 1' - \pi_i(e) \\
& e_1 \lres e_2 \equiv \compl{(\compl e_1 \compos \conv e_2)}
\\
& e_1 \rres e_2 \equiv \compl{(\conv e_1 \compos \compl e_2)}
\end{align*}

Of course the above list of equivalences is by no means complete.
For example, another well-known equivalence is $(e_1 \circ
e_2)^{-1} \equiv e_2^{-1} \circ e_1^{-1}$.  It will be useful to
have the following generalization of this equivalence:

\begin{proposition} \label{convprop}
Every expression $e$ is equivalent to an expression $e'$ that
uses the same operations as $e$, and in which converse
is only applied to relation names.
\end{proposition}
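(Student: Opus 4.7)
The plan is to proceed by structural induction on $e$, producing for each expression an equivalent $e'$ in which every occurrence of converse is applied directly to a relation name while introducing no operation not already used in $e$. The base cases are immediate: if $e$ is a relation name, take $e' = e$; if $e$ is one of the constants $\cempty$, $\all$, $\id$, $\di$, again take $e' = e$ and note that each of these relations is symmetric, so any enclosing converse can later be erased.

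For the inductive step, when the outermost operator of $e$ is not converse, I apply the induction hypothesis to each immediate subexpression and reassemble with the same outer operator. When $e = f^{-1}$, I perform a case analysis on the outermost operator of $f$, applying the standard distributivity equivalences to push the converse one level inward: $(e_1 \cup e_2)^{-1} \equiv e_1^{-1} \cup e_2^{-1}$ and similarly for $\cap$ and $-$; $(\compl{e})^{-1} \equiv \compl{(e^{-1})}$; $(e_1 \compos e_2)^{-1} \equiv e_2^{-1} \compos e_1^{-1}$; $(e^{-1})^{-1} \equiv e$; and, since $\pi_i(e)$ and $\cpi_i(e)$ always return subsets of the identity and are therefore symmetric, $(\pi_i(e))^{-1} \equiv \pi_i(e)$ and $(\cpi_i(e))^{-1} \equiv \cpi_i(e)$. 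Each such step replaces $f^{-1}$ by an expression all of whose converses sit on strictly smaller subexpressions, to which the induction hypothesis then applies.

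The only nontrivial case is the residuals, where a direct unfolding of the semantics gives $(e_1 \lres e_2)^{-1} \equiv e_2^{-1} \rres e_1^{-1}$ and symmetrically $(e_1 \rres e_2)^{-1} \equiv e_2^{-1} \lres e_1^{-1}$. Pushing converse through a residual therefore swaps its handedness, and the proposition has to be read with the understanding that left and right residual count as the same operation for the purposes of ``uses the same operations''---a reading consistent with the paper's presentation, where $\lres$ and $\rres$ are introduced as a dual pair. With this convention, the residual case is handled exactly like the others: invoke the swap identity, then apply the induction hypothesis to $e_1^{-1}$ and $e_2^{-1}$, which are strictly smaller than the original expression.
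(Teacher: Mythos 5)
Your proof is correct and follows essentially the same route as the paper's: an induction that pushes converse inward through each operator using the standard distributivity identities, with $\conv{(e_1 \lres e_2)} \equiv \conv{e_2} \rres \conv{e_1}$ and its mirror for the residuals (the paper organizes this as a structural induction proving the claim for $\conv{e}$ rather than $e$, which avoids your appeal to a size measure, but the content is identical). Your observation that the residual case swaps $\lres$ and $\rres$, so that ``uses the same operations'' must treat the two residuals as a pair, is a fair point that the paper's own proof relies on silently.
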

\begin{proof}
We actually show the claim not for $e$ but for $e^{-1}$; the
claim for $e$ then follows by applying it to each topmost
application of converse within $e$.
We go by induction on the structure of $e$.
The case where $e$ is a relation name is trivial.
The constants are all equivalent to
their converse.  When $e$ is of the form $e_1^{-1}$, we
have $e^{-1} \equiv e_1$, which
can be put in the required form by the induction hypothesis.
When $e$ is $e_1 \cup e_2$, we have $\conv e \equiv \conv{e_1}
\cup \conv{e_2}$, and similarly when $e$ is $e_1 \cap e_2$ or $e_1 - e_2$.
When $e$ is $e_1 \circ e_2$, we have $\conv e \equiv \conv{e_2}
\circ \conv{e_1}$.  When $e$ is $\pi_1(e_1)$, we
have $\conv e \equiv e$, and similarly for 
$\pi_2(e_1)$, $\cpi_1(e_1)$, and $\cpi_2(e_2)$.  Finally, we have
$\conv{(e_1 \lres e_2)} \equiv \conv{e_2} \rres \conv{e_1}$ and
$\conv{(e_1 \rres e_2)} \equiv \conv{e_2} \lres \conv{e_1}$.
\end{proof}

\subsection{Language fragments} We will consider various fragments
of the calculus of relations.  The most basic fragment we
consider is denoted by $\C$: it has the constants $\cempty$ and
$\id$ and the operators composition, union, and intersection. All
other fragments are defined by adding to $\C$ some additional
constants and operators.

Formally, for any subset\footnote{In order to
simplify our presentation somewhat, we only consider fragments
containing both the first and second projection ($\pi_1$ and
$\pi_2$) or none of them, and similarly for coprojection.  This
simplification is not essential to our results, however.}
$\F$ of $\{\di,\all, {}^{-1},{}^c,
\pi,\cpi, -,\lres,\rres\}$, we define the fragment $\C(\F)$
consisting of the expressions built up from the relation names,
$0$, $\id$, and the constants from $\F$,
using the operations composition, union,
intersection, and the operations from $\F$.

The fragment $\C({}^{-1},{}^c)$ already amounts to the full calculus,
since, by the equivalences listed above,
all other operations can be derived in it.
More precisely, from the listed equivalences, we can note the following:
\begin{itemize}
\item
Any fragment containing complement also includes $\all$, $\di$,
difference, projection, and coprojection.
\item
Any fragment containing $\di$ also includes $\all$.
\item
Any fragment containing converse, $\di$, or coprojection,
also includes projection.
\item
Any fragment containing projection and difference also contains
coprojection.
\item
Any fragment containing both converse and complement also
contains both residuals.
\end{itemize}
Accordingly, we say about a fragment $\C(\F)$ that
\begin{itemize}
\item
\emph{1 is present in $\C(\F)$ at degree 0} if $\F$ contains 1, $0'$, or
complement;
\item
\emph{1 is present in $\C(\F)$ at degree 1} if 1 is not 
present at degree 0, and $\F$ contains the left or right residual.
\item
\emph{1 is absent from $\C(\F)$} if 1 is neither present at
degree 0 nor at degree 1.
\end{itemize}
The idea behind these notions is that $1$ is expressible as
$0 \lres 0$ or as $0 \rres 0$.  We will see later that these two
expressions have degree one, as opposed to the expressions
for 1 that have degree zero, viz., 1 itself, $0' \cup 1'$,
or $0^c$.  The distinction between presence at degree 0, presence at
degree 1, and absence of 1 in a fragment will
manifest itself in Definition~\ref{defpath}.

That $1$ is expressible in terms of $0$ and the residuals shows that
our choice in this paper to include $0$ by default in all fragments is not
totally innocent, at least not in the presence of the residuals.
We do not anticipate, however, that
adapting the results of this paper to a setting where $0$ is
absent should require new techniques.  On a related note,
it is also interesting to point out\footnote{We thank the
anonymous referee for this remark.} that using $0$ and the left
residual, the first coprojection becomes expressible as $\cpi_1(e)
= (0 / e) \cap 1'$.

\subsection{Degrees and paths}
It is customary in finite model theory \cite{EF99}
to parameterize characterizations of indistinguishability
by the quantifier rank of formulas.  In our setting, the
role of quantifier rank will be played by what we call the degree.

For an expression $e$, we define the
\emph{degree} $\degree(e)$ of $e$ as follows.
Every relation name and constant symbol has degree zero.  Then,
\begin{align*}
  &\degree(e_1 \cup e_2) = \degree(e_1\cap e_2) = \degree(e_1 - e_2) = \max(\degree(e_1),\degree(e_2));\\
  &\degree(\compl{e}) = \degree(\conv{e}) = \degree(e);\\
  &\degree(e_1 \compos e_2) = \degree(e_1 \lres e_2) = \degree(e_1 \rres e_2) = 1 + \max(\degree(e_1), \degree(e_2));\\
  &\degree(\pi_1(e)) = \degree(\pi_2(e)) = \degree(\cpi_1(e)) = \degree(\cpi_2(e)) = 1 + \degree(e).
\end{align*}
The degree of an expression is the maximum depth of nested
applications of the composition, projection, co-projection, and the
left and right residual operation. Intuitively, the degree corresponds
to the quantifier rank of $e$ translated into first-order logic.

For a fragment $\F$ of the calculus of relations and a natural
number $k$, we denote the set of expressions in $\C(\F)$ of degree at
most $k$ by $\C(\F)_k$.

Before introducing the crucial Definition~\ref{defpath} below,
we need to agree on a natural way to view structures as directed, or
as undirected graphs.

\begin{definition} \label{defgraph}
Let $\G = (\V, (R^{\G})_{R \in \Lambda})$ be a structure.
Then $\graph \G$ is defined as the directed graph $(V,E)$ where
$E$ equals the set of all pairs $(x,y)$ in $V^2$ such that $(x,y)
\in R^{\G}$ for some $R \in \Lambda$.  Moreover, $\ugraph G$ is
defined as the undirected version of $\graph G$, i.e., as the
undirected graph $(V,E')$ where $E'$ is the set
of all unordered pairs $\{x,y\}$ such that $(x,y)$ or $(y,x)$ 
belongs to $E$.

For any natural number $k$, we further define $\paths_k(\G)$ as
the set of all pairs $(x,y)$ in $V^2$ such that there is a path
from $x$ to $y$ in $\graph \G$ of length at most $2^k$.  (The length of
a path equals its number of edges, and we agree that there is
a path of length $0$ from $x$ to $x$ for any $x \in V$, i.e.,
$(x,x)$ is always in $\paths_k(\G)$ for any $k$.)  
We define $\upaths_k(\G)$ similarly, but considering paths in the
undirected graph $\ugraph \G$.
\end{definition}

We now give:

\begin{definition}[$\F$-$k$-path] \label{defpath}
  Let $\C(\F)$ be a fragment of the calculus of relations and
  let $\G$ be a structure.  For any natural number $k$, we define
$\paths_k^\F(\G)$ as follows.
\begin{itemize}
\item
First, consider the case where 1 is absent in $\C(\F)$.
If $\F$ does not contain converse, then
$\paths_k^\F(\G)$ is defined to be $\paths_k(\G)$; if $\F$ does
contain converse, then $\paths_k^\F(\G)$ is defined to be
$\upaths_k(\G)$.
\item
Next, assume 1 is present at degree 1.
Then $\paths_0^\F(\G)$ is defined exactly as above for $k=0$, but
$\paths_k^\F(\G)$ for $k>0$ is simply $1(\G)$.
\item
Finally, 1 is present at degree 0, then 
$\paths_k^\F(\G)$ is again simply $1(\G)$ for all $k$ including
zero.
\end{itemize}
\end{definition}

We immediately note the following

\begin{lemma} \label{lempaths}
For each $\F$ and each $k$, the query $\paths_k^\F$ is
expressible in $\C(\F)_k$.
\end{lemma}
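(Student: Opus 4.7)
The plan is to prove the lemma by structural induction on $k$, inside a case analysis on the status of $\all$ in $\C(\F)$, following exactly the trichotomy of Definition~\ref{defpath}. In each of the three cases I will exhibit a concrete $\C(\F)$-expression of degree at most $k$ that computes $\paths^\F_k$.

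The only real work is in the case where $\all$ is absent from $\C(\F)$. Here I would define a sequence of expressions $e_0, e_1, e_2, \ldots$ by setting
\[ e_0 := \id \cup \bigcup_{\R \in \lab} \R \qquad\text{or}\qquad e_0 := \id \cup \bigcup_{\R \in \lab} (\R \cup \conv{\R}), \]
according to whether converse is absent from or present in $\F$, and then $e_{k+1} := e_k \compos e_k$. A routine degree count confirms $e_k \in \C(\F)_k$, so the content is showing $e_k(\G) = \paths^\F_k(\G)$; that is, $e_k$ captures exactly the pairs linked by a (directed, or respectively undirected) path of length at most $2^k$. The base case is immediate from the definition. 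The inductive step uses the standard doubling argument: any path of length $\ell \le 2^{k+1}$ between $x$ and $y$ can be split into two subpaths of length at most $2^k$ each, and when $\ell < 2^k$ the shorter side is padded using the identity loops supplied by $\id \subseteq e_0 \subseteq e_k$.

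In the case where $\all$ is present in $\C(\F)$ at degree $0$, Definition~\ref{defpath} forces $\paths^\F_k(\G) = \V^2$ for every $k$, and this is expressed in degree $0$ by whichever of $\all$, $\id \cup \di$, or $\compl{\cempty}$ witnesses the degree-$0$ presence of $\all$ in $\F$. In the case where $\all$ is present at degree $1$ via a residual, the expression $e_0$ of the previous paragraph still works at $k=0$ (the definition of $\paths^\F_0$ coincides), and for $k \ge 1$ one can use $\cempty \lres \cempty$ or $\cempty \rres \cempty$, each of degree $1 \le k$.

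The main obstacle is organisational rather than mathematical: keeping the three sub-definitions of $\paths^\F_k$ properly aligned with the three sub-fragments of $\C(\F)$ that witness the presence or absence of $\all$, and verifying in the inductive step that the doubling argument really covers all path lengths from $0$ up to $2^{k+1}$ --- which is precisely where the loops in $e_0$ come into play.
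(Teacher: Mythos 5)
Your proposal is correct and follows essentially the same route as the paper: a three-way case split on the presence of $\all$, with the absent case handled by the degree-$k$ doubling recursion starting from the union of atomic expressions. The only cosmetic difference is that you write $e_{k+1} := e_k \compos e_k$ and rely on the identity loops in $e_0$ for padding, where the paper writes $\paths^\F_{k+1} \equiv \paths^\F_k \cup (\paths^\F_k \compos \paths^\F_k)$; these are equivalent.
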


In the (easy) proof of the Lemma, we will use the notion of
\emph{atomic expression} which will also be used throughout the paper, so
we define it separately here:

\begin{definition}[Atomic expressions $\aexp(\F)$]
\label{defatom}
The \emph{atomic} expressions are those from
the finite set $\Atom = \{\id,\di\} \cup
\{R, \conv R \mid R \in \lab \})$.
The set of atomic expressions belonging to $\C(\F)$ is denoted by
$\aexp(\F)$.
\end{definition}
Note that by Proposition~\ref{convprop}, we can indeed assume for
any expression that the leaves of its syntax tree are labeled by
atomic expressions.

Now to the 
\begin{proof}[Proof of Lemma~\ref{lempaths}]
When 1 is absent, the lemma follows from the following equivalences:
  \begin{align*}
    \paths_0^\F & \equiv \bigcup_{e \in \aexp(\F)} e\\
    \paths_{k+1}^\F & \equiv \paths_k^\F \cup (\paths^\F_{k} \circ
  \paths^\F_{k}).
  \end{align*}

When 1 is present at degree 1, $\paths_0^\F$ is expressed as above
and $\paths_k^\F$ being equivalent to 1, 
can be expressed by $0 \lres 0$ or $0 \rres 0$.
When 1 is present at degree 0, $\paths_k^F$ is again equivalent to 1
and expressible by 1 itself, $0' \cup 1'$, or $0^c$.
\end{proof}

The next Proposition shows the relevance of $\paths_k^\F$.

\begin{proposition} \label{leminpaths}
Let $\F$ be a fragment of the calculus of relations and let $\G$
be a structure.  For any natural number $k$ and any expression $e \in
\C(\F)_k$, we have $e(\G) \subseteq \paths_k^\F(\G)$.
\end{proposition}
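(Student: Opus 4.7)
The plan is to proceed by structural induction on $e$, after first invoking Proposition~\ref{convprop} to rewrite $e$ equivalently so that converse appears only on relation names. I would also use at the outset the monotonicity $\paths_j^\F(\G) \subseteq \paths_k^\F(\G)$ for $j \leq k$, which is immediate from the three cases of Definition~\ref{defpath}. With this in hand it suffices to prove the sharper statement $e(\G) \subseteq \paths_{\degree(e)}^\F(\G)$, since then the proposition follows by monotonicity from $\degree(e) \leq k$.

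For the base cases I verify each atomic expression in turn: $\R$ and $\conv \R$ produce edges of $\graph{\G}$ and $\ugraph{\G}$, hence lie in $\paths_0(\G)$ or $\upaths_0(\G)$ as appropriate; $\id$ contributes only diagonal pairs, which are length-$0$ paths in every $\paths_0^\F(\G)$; $\cempty$ contributes nothing; and the constants $\all$ and $\di$ occur only in a regime that forces $1$ to be present in $\C(\F)$ at degree $0$, in which case $\paths_0^\F(\G) = \V^2$ absorbs their output.

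For the inductive step, the three Boolean operators $\cup$, $\cap$, and $-$ are immediate: each outputs a subset of the union of its subexpressions' outputs while its degree is the maximum of theirs, so monotonicity and the induction hypothesis suffice. For complement and for the left and right residual, the presence of such an operator forces $1$ to be present in $\C(\F)$: at degree $0$ for complement, and at degree $0$ or $1$ for the residuals. Since moreover the residuals raise degree by one, in every case the outer $\paths_{\degree(e)}^\F(\G)$ equals $\V^2$ and trivially absorbs the output. Projection and coprojection yield only diagonal pairs $(s,s)$ with $s \in \V$, which always live in $\paths_0^\F(\G)$. The main inductive case is composition: for $e_1 \compos e_2$ of degree $k$, both subexpressions have degree at most $k-1$, so by induction their outputs lie in $\paths_{k-1}^\F(\G)$; when $1$ is absent this is literally a set of paths of length at most $2^{k-1}$ in $\graph{\G}$ or $\ugraph{\G}$, and concatenating two such paths yields a path of length at most $2^k$, matching the doubling in the degree definition; when $1$ is present at degree $0$ or $1$ the outer $\paths_k^\F(\G)$ is already $\V^2$ and there is nothing further to check.

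I do not anticipate a real obstacle: the argument is a routine structural induction. The one point requiring attention is ensuring that every operator which forces $1$ to be present in the fragment (complement and the residuals) either contributes degree at least one at the relevant point or sits inside an enclosing composition where $\paths_k^\F(\G)$ has already become $\V^2$; combined with the arithmetic $2^{k-1}+2^{k-1}=2^k$ for composition, this makes all three cases of Definition~\ref{defpath} fit together uniformly.
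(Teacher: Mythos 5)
Your proof is correct and takes essentially the same route as the paper's: a structural induction whose only substantive case is composition, where the arithmetic $2^{k-1}+2^{k-1}=2^{k}$ matches the doubling in Definition~\ref{defpath}, with all other operators either preserved by monotonicity, confined to the diagonal, or absorbed because their presence forces $\paths_k^\F(\G)=\V^2$. The only difference is organizational: the paper dispatches the regimes where $1$ is present (at degree $0$ or $1$) up front and then runs the induction assuming $1$ is absent, whereas you handle those regimes locally inside each inductive case and make the monotonicity of $\paths_k^\F$ and the sharper degree-exact statement explicit; both are fine.
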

\begin{proof}
By the definition of $\paths_k^\F$,
the statement of the lemma is trivial when 1 is present in
$\C(\F)$ at degree 0.
Also when 1 is present at degree 1, the lemma
is trivial, except
for the case $k=0$, but then $\paths_0^\F = \paths_0^{\F'}$ where
$\F'$ is obtained from $\F$ by removing the residuals.  Note that
1 is absent in $\F'$ and that $\C(\F')_0 = \C(\F)_0$.

Hence, it suffices to prove the lemma for the case that 1 is
absent from $\C(\F)$.  This means that $\F$ does \emph{not} contain 1,
$0'$, complement, and the residuals.

We now proceed by structural induction on $e$.  If $e \in
\aexp(\F)$, then $e(\G) \subseteq \paths_0^\F(\G)$ by definition of 
$\paths_0^\F(\G)$.

If $e$ is $e_1 \cup e_2$, $e_1 \cap e_2$, or $e_1 - e_2$, the
  result follows immediately from the induction hypothesis.

If $e$ is $\pi_1(e_1)$, $\pi_2(e_1)$, $\cpi_1(e_1)$, or
$\cpi_2(e_1)$, the result is immediate because $\pi_1(e_1)(\G)
\subseteq \id(\G) \subseteq \paths_k^\F(\G)$. (Similarly for
$\pi_2(e_1)$, $\cpi_1(e_1)$, and $\cpi_2(e_1)$.)

Finally, if $e$ is $e_1 \compos e_2$, let $k_1 = \deg(e_1)$, $k_2
= \deg(e_2)$, and $\ell = \max(k_1,k_2)$. Note that $k = \ell
+1$. Now assume $(a,b) \in e_1 \compos e_2 (\G)$. Then, for some $c
\in \V$, we have $(a,c) \in e_1(\G)$ and $(c,b) \in e_2(\G)$. By
induction, we have $(a,c) \in \paths_{k_1}^\F(\G) \subseteq
\paths^\F_\ell$ and $(c,b) \in
\paths_{k_2}^\F(\G) \subseteq \paths^\F_\ell$, whence
$(a,b) \in \paths_k^\F(\G)$ as desired.
\end{proof}

\subsection{Indistinguishability}
A \emph{marked structure} $\mG$ is a pair $(\G,a,b)$ where $\G$
is a relational structure, and $(a,b)$ is an ordered pair of
nodes from $\G$.  Let $\C(\F)$ be a fragment of the calculus of
relations, and let $k$ be a natural number.  The
\emph{$\C(\F)_k$-type} of $\mG$, denoted by $\typek \F k\mG$,
is defined as the set of all expressions $e \in \C(\F)_k$
such that $(a,b) \in e(\G)$.
For two marked structures $\mG_1 = (\G_1,a_1,b_1)$ and $\mG_2 = (\G_2,
a_2,b_2)$, we write $\mG_1 \contndk \F k \mG_2$ if
$\typek \F k{\G_1,a_1,b_1}
\subseteq \typek \F k{\G_2,a_2,b_2}$, i.e., for every
expression $e \in \C(\F)_k$ such that $(a_1,b_1) \in e(\G_1)$, also
$(a_2,b_2) \in e(\G_2)$.  We then say that $\mG_2$ is \emph{one-sided
indistinguishable} from $\mG_1$ in $\C(\F)_k$.
When both $\mG_1 \contndk \F k \mG_2$ and $\mG_2
\contndk \F k \mG_1$,
we say that $\mG_1$ and $\mG_2$ are \emph{indistinguishable in
  $\C(\F)_k$} and denote this by $\mG_1 \indistk \F k \mG_2$.

Recalling Definition~\ref{defatom}, we also define
the \emph{atomic} $\F$-type of $\mG$, denoted by $\atype \F \mG$,
as $\Atom \cap \typek \F 0\mG$.  Note that $\atype \F \mG$ is
always a subset of $\aexp(\F)$.

Since indistinguishability is the same as one-sided
indistinguishability in both directions, it is more general to
look for a characterization of one-sided indistinguishability,
and that is what we will do.  On the other hand,
when the fragment contains complement or difference,
one-sided indistinguishability actually coincides with
indistinguishability, except in a trivial case:

\begin{proposition} \label{vier}
Let $\C(\F)$ be a fragment of the calculus of relations so that
$\F$ contains complement or difference.  Let $\mG_1 =
(\G_1,a_1,b_1)$ and $\mG_2 = (\G_2, a_2,b_2)$ be two marked
structures, and let $k$ be a natural number.  Then $$ \mG_1
\contndk \F k \mG_2 \quad \Leftrightarrow \quad \mG_1
\indistk \F k \mG_2 $$ whenever $(a_1,b_1) \in
\paths_k^\F(\G_1)$.  When $(a_1,b_1) \notin \paths_k^\F(\G_1)$,
the one-sided indistinguishability $ \mG_1 \contndk \F k \mG_2 $
holds trivially, and $\mG_1 \indistk \F k \mG_2$ holds if and
only if $(a_2,b_2) \notin \paths_k^\F(\G_2)$.
\end{proposition}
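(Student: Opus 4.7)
The plan is to split the argument into the two cases of the statement, with the bulk of the work lying in the first. The direction $\mG_1 \indistk \F k \mG_2 \Rightarrow \mG_1 \contndk \F k \mG_2$ is immediate from the definition of $\indistk \F k$ as two-sided one-sided indistinguishability, so I concentrate on the converse under the hypothesis $(a_1,b_1) \in \paths_k^\F(\G_1)$. I proceed by contrapositive: suppose $\mG_2 \contndk \F k \mG_1$ fails, so there is some $e \in \C(\F)_k$ with $(a_2,b_2) \in e(\G_2)$ and $(a_1,b_1) \notin e(\G_1)$. The key idea is to \emph{flip} this witness using the presence of $-$ in $\F$ (or of $^c$, which yields difference via $e_1 - e_2 \equiv e_1 \cap e_2^c$) by considering the diagonalization
\[
d := \paths_k^\F - e.
\]
By Lemma~\ref{lempaths}, $\paths_k^\F$ is expressible in $\C(\F)_k$, and since set difference preserves degree, $d$ lies in $\C(\F)_k$ as well. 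The hypothesis $(a_1,b_1) \in \paths_k^\F(\G_1)$ together with $(a_1,b_1) \notin e(\G_1)$ gives $(a_1,b_1) \in d(\G_1)$, while $(a_2,b_2) \in e(\G_2)$ forces $(a_2,b_2) \notin d(\G_2)$. Thus $d$ witnesses that $\mG_1 \contndk \F k \mG_2$ fails, completing the contrapositive.

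For the second case $(a_1,b_1) \notin \paths_k^\F(\G_1)$, Proposition~\ref{leminpaths} says that every $e \in \C(\F)_k$ satisfies $e(\G_1) \subseteq \paths_k^\F(\G_1)$, so $(a_1,b_1) \notin e(\G_1)$ for every such $e$, whence $\typek \F k{\G_1,a_1,b_1} = \emptyset$ and $\mG_1 \contndk \F k \mG_2$ holds vacuously. Full indistinguishability then reduces to the condition $\typek \F k{\G_2,a_2,b_2} = \emptyset$, and combining Lemma~\ref{lempaths} (which places $\paths_k^\F$ inside $\C(\F)_k$) with Proposition~\ref{leminpaths} (which bounds every expression in $\C(\F)_k$ by $\paths_k^\F$) shows that this emptiness is equivalent to $(a_2,b_2) \notin \paths_k^\F(\G_2)$, as desired.

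The only subtle point is degree bookkeeping for the diagonalization $d = \paths_k^\F - e$: the whole argument hinges on $d$ staying inside $\C(\F)_k$, and that is exactly what Lemma~\ref{lempaths} secures uniformly across the three cases (1 absent, 1 present at degree 1, 1 present at degree 0). Once that degree bound is in hand, the flipping argument itself is a short one-line verification.
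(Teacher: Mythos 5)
Your proposal is correct and follows essentially the same route as the paper: both use the expression $\paths_k^\F - e$ (realized via difference, or via complement when difference is absent), relying on Lemma~\ref{lempaths} for the degree bound and on Proposition~\ref{leminpaths} for the degenerate case; the only difference is that you phrase the main step as a contrapositive where the paper argues by contradiction, which is the same argument.
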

\begin{proof}
We first show that, when $(a_1,b_1) \in \paths_k^\F(\G_1)$, then
$\mG_1 \contndk \F k \mG_2$ implies $\mG_2 \contndk \F k
\mG_1$.  Thereto, let $e \in \C(\F)_k$ such that $(a_2,b_2) \in
e(\G_2)$; we must show that $(a_1,b_1)$ belongs to $e(\G_1)$.
Assume, for the sake of contradition, that it does not.  Then
$(a_1,b_1) \in (\paths_k^\F - e)(\G_1)$.  Note that when $\F$
would not contain difference, it would contain complement, and
then the expression $\paths_k^\F - e$ can be equivalently
written in $\C(\F)$ as $e^c(\G_1)$.  In either case, the
expression has degree $k$, so, since $\mG_1 \contndk \F k
\mG_2$, we have $(a_2,b_2) \in (\paths_k^\F - e)(\G_2)$.  In
particular it follows $(a_2,b_2) \notin e(\G_2)$ which yields the
desired contradiction.

When $(a_1,b_1) \notin \paths_k^\F(\G_1)$, then
$\mG_1 \contndk \F k \mG_2$ is indeed voidly satisfied, since in that
case, by Proposition~\ref{leminpaths}, the $\typek \F k{\mG_1}$ is empty.
Moreover then, clearly
$\mG_1 \indistk \F k \mG_2$ iff $(a_2,b_2)$ does not
belong to $e(\G_1)$, for any $e \in \C(\F)_k$, either.  We now note that the
latter holds iff $(a_2,b_2) \notin \paths^\F_k(\G_2)$.  Indeed, the
only-if is clear since $\paths^\F_k$ belongs to $\C(\F)_k$;  the
if-direction is again given by Proposition~\ref{leminpaths}.
\end{proof}

Similarly to indistinguishability in $\C(\F)_k$, i.e., for a fixed
degree $k$, we are also interested in indistinguishability in an
entire fragment $\C(\F)$.  Thus define the
\emph{$\F$-type} of a marked structure $\mG$, denoted by
$\type \F \mG$, as the set of all expressions $e$
from $\C(\F)$ such that $(a,b) \in e(\G)$.  Using this notion of
type we can now define the indistinguishability notions $ \mG_1
\contnd \F \mG_2$ and $\mG_1 \indist \F \mG_2 $ similarly
to the fixed-degree case.

\section{Approaches to bisimilarity}
\label{secapproaches}
\newcommand{\ffull}{\F_{\rm full}}

Before discussing indistinguishability for fragments of the
calculus of relations, let us first look at the full calculus
$\C(\ffull)$, with $\ffull$ consisting of complement and
converse.  Tarski and Givant showed that the calculus has equal
expressive power as FO$^3(2)$: the formulas with two free variables
in the three-variable fragment FO$^3$ of first-order logic
\cite{TG87}.  For FO$^3$, we have the three-pebble
Ehrenfeucht-Fra\"{\i}ss\'{e} game as a
characterization~\cite{EF99,Libkin04}. Marx and Venema, however,
showed that FO$^3(2)$ has also the same expressive power as arrow
logic~\cite{MV97}, a branch of multi-dimensional modal logic
devised to provide a formalization for simple reasoning about
objects that are thought of as arrows. By this correspondence,
bisimulations in terms of back-and-forth conditions that are well
known from modal logic can be used to characterize fragments of
FO$^3$, and, hence, of the calculus of relations.

Concretely, the language of arrow logic is a modal language with the
dyadic operator $\circ$, the monadic operator $\otimes$, and the modal
constant $\mathrm{id}$. Formulas in arrow logic are built up from a
set of propositional variables and the modal constant $\mathrm{id}$,
using the operators $\circ$ and $\otimes$, and the boolean connectives
$\land$, $\lor$, $\neg$. Using propositional variables to denote edge
labels; by interpreting the modal constant $\mathrm{id}$ as being true
for pairs $(a,a)$ of identical nodes; by interpreting the monadic
operator $\otimes$ as being true for pairs $((b,a),(a,b))$ of
``arrows'' such that the first arrow is the converse of the second
arrow; and finally, by interpreting the dyadic operator $\circ$ as
being true for triples $((a,b),(a,c),(c,b))$ of arrows such that the
first one is obtained by composing the second and the third arrow, we
can apply the characterization theorem of modal logic to immediately
obtain a characterization for the full calculus of relations.
We will next make this more precise.

The notion of bisimulation for multi-dimensional
modal logic, specialized to the above interpretation of arrow
logic, becomes the following:
\begin{definition} \label{def:bisim-arrow}
  Let $\G_1$ and $\G_2$ be two structures with node sets $\V_1$ and
  $\V_2$, respectively. A non-empty relation $Z \subseteq \V_1^2
  \times \V_2^2$ is an \emph{arrow-logic
  bisimulation} between $\G_1$ and $\G_2$ if
  it satisfies the following conditions:\footnote{The attentive
  reader will notice that the converse-forth condition and the
  converse-back condition are identical.  This is a consequence
  of the symmetry of the converse operator.  We could have
  simplified the definition by removing one of the identical
  conditions, but preferred to stay in line with the
  general format of bisimulation conditions for multidimensional
  modal logic.}
  \begin{description}
  \item[Atoms] if $(a_1,b_1,a_2,b_2)$ is in $Z$, then $(a_1,b_1) \in
    R(\G_1)$ if and only if $(a_2,b_2) \in R(\G_2)$, for all
    $R \in \Lambda$; 
  \item[Forth] if $(a_1,b_1,a_2,b_2) \in Z$, then
    \begin{description}
    \item[composition($\circ$)] for each $c_1 \in V_1$ there
    exist $c_2 \in V_2$ such that both
      $(a_1,c_1,a_2,c_2)$ and $(c_1,b_1,c_2,b_2)$ are in $Z$;
    \item[identity($\mathrm{id}$)] if $a_1=b_1$ then $a_2=b_2$;
    \item[converse($\otimes$)] $(b_1,a_1,b_2,a_2) \in Z$;
    \end{description}
  \item[Back] if $(a_1,b_1,a_2,b_2)$ is in $Z$, then
    \begin{description}
    \item[composition($\circ$)] for each $c_2 \in V_2$ there
    exist $c_1 \in V_1$ such that both
      $(a_1,c_1,a_2,c_2)$ and $(c_1,b_1,c_2,b_2)$ are in $Z$;
    \item[identity($\mathrm{id}$)] if $a_2=b_2$ then $a_1=b_1$;
    \item[converse($\otimes$)] $(b_1,a_1,b_2,a_2) \in Z$.
    \end{description}
  \end{description}
  A marked structure $\mG_1 = (\G_1, a_1,b_1)$ is said to be
  \emph{arrow-logic bisimilar} to a
  marked structure $\mG_2 = (\G_2,a_2,b_2)$ if there is an
  arrow-logic bisimulation
  $Z$ between $\G_1$ and $\G_2$ containing $(a_1,b_1,a_2,b_2)$.
\end{definition}

The following characterization is now given by the
Hennessy-Milner theorem \cite[Theorem 2.24]{blackburn_modallogic}:
\begin{proposition}
  Let $\mG_1 = (\G_1,a_1,b_1)$ and $\mG_2=(\G_2,a_2,b_2)$ be
  \emph{finite} marked structures.
  Then $$ \mG_1 \indist \ffull \mG_2 \quad \Leftrightarrow \quad
\text{$\mG_1$ is arrow-logic bisimilar to $\mG_2$}. $$
\end{proposition}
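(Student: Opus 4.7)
The plan is to prove the two directions separately, following the standard Hennessy--Milner blueprint adapted to the four-tuple format of arrow-logic bisimulations. Finiteness of the structures will be needed only for the direction from indistinguishability to bisimilarity.

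For the direction from bisimilarity to indistinguishability, I would assume $Z$ is an arrow-logic bisimulation containing $(a_1,b_1,a_2,b_2)$ and show by structural induction on $e \in \C(\ffull)$ that whenever $(u_1,v_1,u_2,v_2) \in Z$, we have $(u_1,v_1) \in e(\G_1)$ iff $(u_2,v_2) \in e(\G_2)$. The atomic cases handle relation names via the Atoms clause and $\id$ via the identity-forth/back clauses; Boolean combinations (including complement) propagate routinely; composition uses the composition-forth clause to transport a witness from $\G_1$ to $\G_2$ and composition-back for the other direction; converse is dispatched by the clause guaranteeing $(v_1,u_1,v_2,u_2) \in Z$, after which the inductive hypothesis applies.

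For the converse direction, define
$$Z := \{(u_1,v_1,u_2,v_2) \mid (\G_1,u_1,v_1) \indist{\ffull} (\G_2,u_2,v_2)\}.$$
Then $(a_1,b_1,a_2,b_2) \in Z$, and it remains to verify the bisimulation clauses. The Atoms clause is immediate from indistinguishability applied to each relation name; the identity clauses come from testing against $\id$; and the converse clauses follow because $\C(\ffull)$ is closed under converse, so swapping $(u_i,v_i)$ for $(v_i,u_i)$ preserves equality of types.

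The main obstacle is the composition-forth clause (composition-back being symmetric). Given $(u_1,v_1,u_2,v_2) \in Z$ and $c_1 \in V_1$, I would argue by contradiction: suppose no $c_2 \in V_2$ makes both $(u_1,c_1,u_2,c_2)$ and $(c_1,v_1,c_2,v_2)$ lie in $Z$. Partition $V_2 = L \cup R$ so that for every $c_2 \in L$ some expression distinguishes $(u_1,c_1)$ from $(u_2,c_2)$, and for every $c_2 \in R$ some expression distinguishes $(c_1,v_1)$ from $(c_2,v_2)$; complement lets us orient each chosen expression so that it holds on the $\G_1$-side and fails on the $\G_2$-side. Setting $e_L := \bigcap_{c_2 \in L} \psi^{c_2}_L$ and $e_R := \bigcap_{c_2 \in R} \psi^{c_2}_R$, which are finite intersections precisely because $V_2$ is finite, we obtain $(u_1,v_1) \in (e_L \compos e_R)(\G_1)$ via the witness $c_1$. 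On the other hand, for every potential $\G_2$-witness $c_2$, one of $e_L$ or $e_R$ excludes it by construction, so $(u_2,v_2) \notin (e_L \compos e_R)(\G_2)$. This contradicts $(u_1,v_1,u_2,v_2) \in Z$ and completes the argument; the finite-conjunction trick is exactly where the finiteness hypothesis is essentially used.
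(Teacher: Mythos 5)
Your proposal is correct and is precisely the standard Hennessy--Milner argument (invariance by induction on expressions in one direction; in the other, the type-equality relation verified to be a bisimulation, with finiteness of $V_2$ used to form the finite intersections $e_L$ and $e_R$ in the composition clauses). This is the same route the paper takes, except that the paper does not spell it out: it obtains the result by citing the Hennessy--Milner theorem for multi-dimensional modal logic via the arrow-logic correspondence, whereas you unfold that argument directly; the only details worth adding are the trivial atomic cases for $0$ and the treatment of an empty $L$ or $R$ as the expression $0^c$.
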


\newcommand{\fsafe}{\F_{\rm safe}}
In the field of databases \cite[Chapter~5]{ahv_book},
it is good practice to employ
``safe'' query languages, meaning that 1 is not expressible,
i.e., queries are domain-independent.
This would mean replacing the complementation
operator by the difference operator, and removing the diversity
relation, leading to the fragment $\C(\fsafe)$, with $\fsafe$
consisting of difference and converse.
Furthermore, in database theory
much attention is being paid to ``positive'' query
languages, i.e., without the difference operator.  Since it is
still important to understand the distinction between safe and
unsafe query languages, one might add the diversity relation back
in, which would lead one to fragments such as
$\C({}^{-1},\di)$.  (Note that adding diversity to
$\fsafe$ would bring us back to the full calculus, since 1 is
expressible as $1' \cup \di$ and then complement $e^c$ as $1-e$.)
Also, one may be interested in understanding the power of
following relations backwards, and study fragments where converse
is removed, such as
$\C(-)$ or $\C(\di)$.  Then again one may add derived operations
(projection, coprojection, residuals) that become primitive in
specific fragments, leading to new fragments such
as $C(-,\lres,\rres)$ or $\C({}^{-1},\cpi)$.

For some fragments discussed above, the Hennessy-Milner theorem
adapts easily.  As a case in point, consider the positive fragment with
diversity, $\C({}^{-1},\di)$.  To account for the absence of
complementation, it suffices in the definition of bisimulation to
remove the Back condition, thus obtaining a kind of
\emph{simulation} rather than bisimulation.
To account for the diversity relation, it suffices to add it as a nullary
modality by adding the following
part to the forth-condition:
\begin{description}
\item[diversity($\mathrm{di}$)] if $a_1 \neq b_1$, then
$a_2 \neq b_2$.
\end{description}
We can then analogously show that $(\G_1,a_1,b_1)
\contnd {{}^{-1},\di} (\G_2,a_2,b_2)$ if and only if there exists
such a diversity-simulation
from $\G_1$ to $\G_2$ containing $(a_1,b_1,a_2,b_2)$.

Many other fragments, however, require much less obvious
adaptations to the notion of bisimulation for arrow logic.  For
instance, the coprojection and residual operations cannot simply
be considered to be extra modalities in arrow logic.  Another
difficulty arises when we remove the converse operator or the
diversity relation.  Expressions in such fragments always return
paths in the graph formed by the atomic steps
(Proposition~\ref{leminpaths}).  It does not suffice now to remove
the converse-forth or the diversity-forth parts in the definition
of bisimulation;  we also need to adapt the composition-forth
part.

In the remainder of this paper, we will show how appropriate
notions of simulation and bisimulation can be defined for all
fragments $\C(\F)$ of the calculus of relations considered in
this paper. 

\subsection{Some examples}
We conclude the present section with a few examples of
(in)distinguishability.  Thereto we introduce five example
structures in the vocabulary of a single relation name $R$,
shown in Figure~\ref{figraphs}.

\begin{figure}
\begin{center}
\begin{tabular}{c@{\hspace{5em}}c}
\begin{tabular}{c}
$\G_5$
\\[2ex]
\begin{tikzpicture}
\node (1) at (0,0) [scale=0.6,circle,fill,label=below:{$1$}] {};
\node (2) at (1,0) [scale=0.6,circle,fill,label=below:{$2$}] {};
\node (3) at (2,0) [scale=0.6,circle,fill,label=below:{$3$}] {};
\node (4) at (3,0) [scale=0.6,circle,fill,label=below:{$4$}] {};
\node (5) at (4,0) [scale=0.6,circle,fill,label=below:{$5$}] {};
\path (1) edge[->] (2) 
(2) edge[->] (3) 
(3) edge[->] (4) 
(4) edge[->] (5) ;
\end{tikzpicture}
\end{tabular}
&
\begin{tabular}{c}
$\G_4$
\\[2ex]
\begin{tikzpicture}
\node (2) at (1,0) [scale=0.6,circle,fill,label=below:{$2$}] {};
\node (3) at (2,0) [scale=0.6,circle,fill,label=below:{$3$}] {};
\node (4) at (3,0) [scale=0.6,circle,fill,label=below:{$4$}] {};
\node (5) at (4,0) [scale=0.6,circle,fill,label=below:{$5$}] {};
\path (2) edge[->] (3) 
(3) edge[->] (4) 
(4) edge[->] (5) ;
\end{tikzpicture}
\end{tabular}
\\[9ex]
\begin{tabular}{c}
$\G_9$
\\[2ex]
\begin{tikzpicture}
\node (1) at (0,0) [scale=0.6,circle,fill,label=below:{$1$}] {};
\node (2) at (1,0) [scale=0.6,circle,fill,label=below:{$2$}] {};
\node (3) at (2,0) [scale=0.6,circle,fill,label=below:{$3$}] {};
\node (4) at (3,0) [scale=0.6,circle,fill,label=below:{$4$}] {};
\node (5) at (4,0) [scale=0.6,circle,fill,label=below:{$5$}] {};
\node (6) at (0.8,1) [scale=0.6,circle,fill,label=below:{$6$}] {};
\node (7) at (1.6,1) [scale=0.6,circle,fill,label=below:{$7$}] {};
\node (8) at (2.4,1) [scale=0.6,circle,fill,label=below:{$8$}] {};
\node (9) at (3.2,1) [scale=0.6,circle,fill,label=below:{$9$}] {};
\path (1) edge[->] (2) 
(2) edge[->] (3) 
(3) edge[->] (4) 
(4) edge[->] (5)
(1) edge[->] (6)
(6) edge[->] (7)
(7) edge[->] (8)
(8) edge[->] (9)
(9) edge[->] (5) ;
\end{tikzpicture}
\end{tabular}
&
\begin{tabular}{c}
$\HG_4$
\\[2ex]
\begin{tikzpicture}
\node (1) at (0,0) [scale=0.6,circle,fill,label=below:{$1$}] {};
\node (2) at (1,-0.5) [scale=0.6,circle,fill,label=below:{$2$}] {};
\node (3) at (2,-0.5) [scale=0.6,circle,fill,label=below:{$3$}] {};
\node (4) at (1,0.5) [scale=0.6,circle,fill,label=below:{$4$}] {};
\path (1) edge[->] (2)
(2) edge[->] (3)
(1) edge[->] (4) ;
\end{tikzpicture}
\end{tabular}
\\[9ex]
&
\begin{tabular}{c}
$\HG_3$
\\[2ex]
\begin{tikzpicture}
\node (1) at (0,0) [scale=0.6,circle,fill,label=below:{$1$}] {};
\node (2) at (1,-0.5) [scale=0.6,circle,fill,label=below:{$2$}] {};
\node (3) at (2,-0.5) [scale=0.6,circle,fill,label=below:{$3$}] {};
\path (1) edge[->] (2)
(2) edge[->] (3)
(1) edge[->,out=30,in=135] (3) ;
\end{tikzpicture}
\end{tabular}
\end{tabular}
\end{center}
\caption{Five structures shown as directed graphs.}
\label{figraphs}
\end{figure}
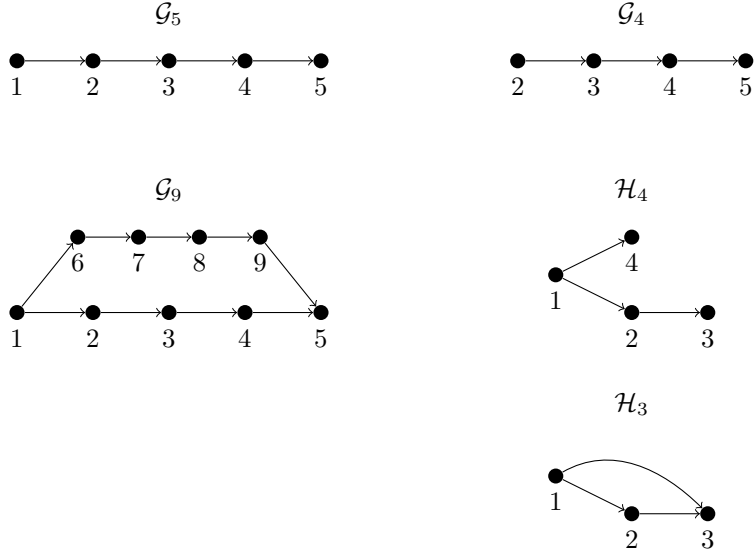

Let us begin by comparing the marked structures $(\G_5,2,5)$ and
$(\G_4,2,5)$.  They are distinguishable in the full calculus, for
example, by the expression\footnote{Recall that projection is
expressible in the full calculus.} $e_1 = \pi_2(R) \circ R^4$
which belongs to $\type \ffull{\G_5,2,5}$ but not to $\type
\ffull{\G_4,2,5}$.  In contrast, we have $(\G_5,2,5) \indist -
(\G_4,2,5)$, i.e., the two marked structures are
indistinguishable in $\C(-)$.  The intuition, following
Proposition~\ref{leminpaths}, is that expressions in this fragment
that return a pair $(a,b)$ on a structure are confined to the
part of the structure formed by all directed paths from $a$ to
$b$; on $\G_5$ and $\G_4$, the parts between $2$ and $5$ are
identical.  One way to make such an indistinguishability claim
formal will be our main result, which includes a notion of
bisimulation appropriate for the fragment $\C(-)$.

Consider now a positive fragment such as $\C(\pi)$. We obviously have
$(\G_5,2,5) \not \contnd \pi (\G_4,2,5)$ since the expression
$e_1$ above belongs to $\C(\pi)$.  Nevertheless, in the other
direction we do have $(\G_4,2,5) \contnd \pi (\G_5,2,5)$, as again
will follow from our main result.  On the other hand, in the
fragment $\C(\cpi)$ we have $(\G_4,2,5) \not \contnd \cpi
(\G_5,2,5)$ by the expression $\cpi_2(R) \circ R^4$.

We can illustrate degree-bounded indistinguishability by
$(\G_5,1,5) \indistk \fsafe 2 (\G_9,1,5)$.  Indeed, degree-two
expressions in $\C(\fsafe)$ are limited to paths of length at most
four, and both structures are identical inasfar as such paths between 1 and
5 are concerned.  Of course, the two marked structures are
  distinguishable in degree three; by the expression\footnote{$R^5$
  is expressible in degree three as $((R \circ R) \circ R) \circ (R
  \circ R)$.} $R^4 \cap R^5$ we already have distinguishability $
  (\G_9,1,5) \not \contndk{}3 (\G_5,1,5) $ in the
  most basic fragment $\C$.

Finally let us compare the marked structures $(\HG_4,1,1)$ and
$(\HG_3,1,1)$.  Clearly, in $\C(\pi)$, we have $(\HG_3,1,1) \not
\contndk \pi 1 (\HG_4,1,1)$ as witnessed by the expression
$\pi_1(R^2 \cap R)$.  In the other direction, however, we even
have $(\HG_4,1,1) \contnd {{}^{-1}} (\HG_3,1,1)$.  On the other hand,
adding diversity, we have $(\HG_4,1,1) \not \contnd {{}^{-1},\di}
(\HG_3,1,1)$ by the expression $R \circ ((R \circ \di) \cap \di)
\circ R^{-1}$.

\section{Bisimilarity and indistinguishability} \label{secbisim}

In this and the following section, we proceed as announced in the
preceding sections and define, for any fragment $\C(\F)$ and any
natural number $k$, an appropriate notion of bisimulation or
simulation between structures.  We will then show the adequacy of
the proposed notions in capturing indistinguishability in
$\C(\F)_k$.

In the present section, we deal with fragments containing
complement or difference.

\subsection{General definition of $(\F,k)$-bisimulation}
\label{subconditions}

\newcommand{\element}{(a_1,b_1,a_2,b_2)}

Let $\G_1$ and $\G_2$ be two structures with node sets $V_1$ and
$V_2$ respectively.  Let $\element$ be an arbitrary element of
$V_1^2 \times V_2^2$, and let $Z$ be an arbitrary subset of the
same set $V_1^2 \times V_2^2$.

For any given calculus fragment $\C(\F)$ that contains complement
or difference, we define the following suite of conditions.  The
conditions are relative to $\F$ in that they refer to $\atypeq
\F$ and $\paths^\F$.  All the conditions are also clearly
relative to $\G_1$ and $\G_2$.

\begin{description}

\item[Atoms Forth] We say that $\element$ has the \emph{Atoms
Forth property} if $$ \atype
\F{\G_1,a_1,b_1} \subseteq \atype \F{\G_2,a_2,b_2}. $$

\item[Atoms Back] We say that $\element$ has the \emph{Atoms
Back property} if $$ \atype
\F{\G_1,a_1,b_1} \supseteq \atype \F{\G_2,a_2,b_2}. $$

\end{description}

Furthermore, let $i>0$ be an arbitrary natural number.

\begin{description}

\item[Composition Forth] We say that $\element$
has the \emph{Composition Forth property at degree $i$ with
respect to $Z$} if for every $c_1$ in
$V_1$ with $(a_1,c_1)$ and $(c_1,b_1)$ in $\paths_{i-1}^\F(\G_1)$,
there exists $c_2$ in $V_2$ such that both $(a_1,c_1,a_2,c_2) \in
\Zet$ and $(c_1,b_1,c_2,b_2) \in \Zet$.

\item[Composition Back]
We say that $\element$ has the
\emph{Composition Back property
at degree $i$ with respect to $Z$} if for every
$c_2$ in $V_2$ with $(a_2,c_2)$ and $(c_2,b_2)$ in
$\paths_{i-1}^\F(\G_2)$, there exists $c_1$ in $V_1$ such that both
$(a_1,c_1,a_2,c_2) \in \Zet$ and $(c_1,b_1,c_2,b_2) \in
\Zet$.

\item[Projection Forth] We say that $\element$ has the
\emph{Projection Forth property at degree $i$ with respect to
$Z$} if either $a_1 \neq b_1$, or $a_1=b_1$ and $a_2=b_2$ and for
every $c_1$ in $V_1$ with $(a_1,c_1)$ in $\paths_{i-1}^\F(\G_1)$,
there exists $c_2$ in $\V_2$ such that $(a_1,c_1,a_2,c_2) \in Z$.
Moreover, if $a_1=b_1$ then also for every $c_1$ in $V_1$ with
$(c_1,a_1)$ in $\paths_{i-1}^\F(\G_1)$, there must exist $c_2$ in
$\V_2$ such that $(c_1,a_1,c_2,a_2) \in Z$.

\item[Projection Back] We say that $\element$ has the
\emph{Projection Back property at degree $i$ with respect to $Z$}
if either $a_2 \neq b_2$, or $a_2=b_2$ and $a_1=b_1$ and for
every $c_2$ in $V_2$ with $(a_2,c_2)$ in $\paths_{i-1}^\F(\G_2)$,
there exists $c_1$ in $\V_1$ such that $(a_1,c_1,a_2,c_2) \in Z$.
Moreover, if $a_2=b_2$, then also for every $c_2$ in $V_2$ with
$(c_2,a_2)$ in $\paths_{i-1}^\F(\G_2)$, there must exist $c_1$ in
$\V_1$ such that $(c_1,a_1,c_2,a_2) \in Z$.

\item[Left Residual Forth] We say that $\element$ has the
\emph{Left Residual Forth property at degree $i$ with respect to
$Z$} if for every $c_2$ in $V_2$ with $(b_2,c_2)$ in
$\paths_{i-1}^\F(\G_2)$, there exists $c_1$ in $V_1$ such that both
$(b_1,c_1,b_2,c_2) \in \Zet$ and either $(a_1,c_1) \notin
\paths^\F_{i-1}(\G_1)$ or $(a_1,c_1,a_2,c_2) \in \Zet$.

\item[Left Residual Back] We say that $\element$ has the
\emph{Left Residual Back property at degree $i$ with respect to
$Z$} if for every $c_1$ in $V_1$ with $(b_1,c_1)$ in
$\paths^\F_{i-1}(\G_1)$, there exists $c_2$ in $V_2$ such that both
$(b_1,c_1,b_2,c_2) \in \Zet$ and either $(a_2,c_2) \notin
\paths^\F_{i-1}(\G_2)$ or $(a_1,c_1,a_2,c_2) \in \Zet$.

\item[Right Residual Forth] We say that $\element$ has the
\emph{Right Residual Forth property at degree $i$ with respect to
$Z$} if for every $c_2$ in $V_2$ with $(c_2,a_2)$ in
$\paths^\F_{i-1}(\G_2)$, there exists $c_1$ in $V_1$ such that both
$(c_1,a_1,c_2,a_2) \in \Zet$ and either $(c_1,b_1) \notin
\paths^\F_{i-1}(\G_1)$ or $(c_1,b_1,c_2,b_2) \in \Zet$.

\item[Right Residual Back] We say that $\element$ has the
\emph{Right Residual Back property at degree $i$ with respect to
$Z$} if for every $c_1$ in $V_1$ with $(c_1,a_1)$ in
$\paths^\F_{i-1}(\G_1)$, there exists $c_2$ in $V_2$ such that both
$(c_1,a_1,c_2,a_2) \in Z$ and either $(c_2,b_2) \notin
\paths^\F_{i-1}(\G_2)$ or $(c_1,b_1,c_2,b_2) \in \Zet$.

\end{description}

Now let $k$ be a natural number and
let $\bar Z = (\Zet_0, \Zet_1, \dots,
\Zet_k)$ be a decreasing sequence of relations with
$\Zet_0 \subseteq V_1^2 \times V_2^2$, decreasing in the sense
that $Z_i \subseteq Z_{i-1}$ for each $i\in\{1,\dots,k\}$.

We lift the above conditions to apply to such sequences $\bar Z$
as follows:
\begin{itemize}
\item
We say that $\bar Z$ has the Atoms Forth property if every
element of $Z_0$ has this property, and similarly for the Atoms
Back property.
\item

For any of the other properties (from Composition Forth to Right
Residual Back), we say that $\bar Z$ has a certain property if
for every $i\in \{1,\dots,k\}$, every element of $Z_i$ has that
property at degree $i$ with respect to $Z_{i-1}$.

\end{itemize}

We are finally ready for our main definition.  The following
definition is the most natural and easy to state, but we will see later in
Corollary~\ref{unnecessary} that in some cases, some of the conditions
are actually redundant.

\begin{definition} \label{defbisim}
We call $\bar Z$ an \emph{$(\F,k)$-bisimulation from $\G_1$
to $\G_2$} if $\bar Z$ has
\begin{itemize}
\item
the Atoms Forth and Back properties;
\item
the Composition Forth and Back properties;
\item
the Projection Forth and Back properties when $\F$ contains
projection;
\item
the Left (Right) Residual Forth and Back properties when $\F$
contains left (right) residual.
\end{itemize}

Given two marked structures $\mG_1=(\G_1,a_1,b_1)$ and
$\mG_2=(\G_2,a_2,b_2)$, when
there exists an $(\F,k)$-bisimulation $\bar Z$ from $\G_1$ to
$\G_2$ such that $(a_1,b_1,a_2,b_2) \in Z_k$,
we say that $\mG_1$ and $\mG_2$ are
\emph{$(\F,k)$-bisimilar}, and denote this by $\mG_1 \bisimeq \F k
\mG_2$.
\end{definition}

Note that in any bisimulation $\bar Z$, since both the Forth
and Back versions of the Atoms property must be satisfied, and
since identity is always present in our fragments, for each
$(a_1,b_1,a_2,b_2) \in Z_0$ we have $a_1=b_1$ iff $a_2=b_2$.
From this it follows that $a_1 \neq b_1$ iff $a_2 \neq b_2$, so
it does not matter whether or not $\di$ belongs to $\F$.  Hence, 
there really are
only two variants of the Atoms properties, depending on whether or
not $\F$ contains converse.  Both variants stipulate
$a_1=b_1$ iff $a_2=b_2$ as just seen.  When $\F$ does not contain
converse, the Atoms conditions stipulate furthermore that exactly
the same relations $R$ from $\Lambda$ must hold for $(a_1,b_1)$
and $(a_2,b_2)$; moreover, when $\F$ does contain converse, also
the converse relations that hold must be the same.

One may wonder why there are no conditions corresponding to the
coprojection operation.  The reason is that in fragments
containing difference, coprojection plays no additional role
beyond that of projection, because $\cpi(e) \equiv \id - \pi(e)$.
Formally, we will see in the Invariance Lemma that coprojections
are also preserved by bisimulations.

\subsection{Adequacy theorem}

We establish:

\begin{theorem}[Adequacy Theorem] \label{water}
For any fragment $\C(\F)$ where $\F$ contains complement or
difference, we have
$\mG_1 \bisimeq \F k \mG_2$ if and only if $\mG_1 \indistk \F k \mG_2$.
\end{theorem}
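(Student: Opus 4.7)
The proof splits into the two directions of the biconditional, each handled by a distinct strategy.

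\textbf{Forward direction} ($\bisimeq \F k\, \Rightarrow\, \indistk \F k$). I would establish an \emph{Invariance Lemma}: whenever $\bar Z = (Z_0,\dots,Z_k)$ is an $(\F,k)$-bisimulation from $\G_1$ to $\G_2$ and $\element \in Z_{\degree(e)}$ for some $e \in \C(\F)_k$, then $(a_1,b_1) \in e(\G_1)$ iff $(a_2,b_2) \in e(\G_2)$. The proof is by structural induction on $e$. By Proposition~\ref{convprop} I may assume converse occurs only at the leaves, so the atomic case is handled by the Atoms Forth and Back properties at $Z_0$ (using the decreasing inclusion $Z_{\degree(e)} \subseteq Z_0$). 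The boolean combinators — union, intersection, difference, complement — are immediate from the inductive hypothesis. For $e_1 \compos e_2$ of degree $i$, any witness $c_1$ on the $\G_1$-side automatically satisfies $(a_1,c_1),(c_1,b_1) \in \faths{i-1}(\G_1)$ by Proposition~\ref{leminpaths}, so the Composition Forth clause at level $i$ supplies the matching $c_2$, and the induction hypothesis applied to $e_1$ and $e_2$ at level $i-1$ concludes the case; Composition Back handles the reverse inclusion. Projection and the residuals are treated analogously through their own Forth and Back clauses, and coprojection is reduced to projection via $\cpi_j(e) \equiv \id - \pi_j(e)$, which is available because $\F$ contains complement or difference.

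\textbf{Backward direction} ($\indistk \F k\, \Rightarrow\, \bisimeq \F k$). I would build the canonical candidate bisimulation out of type equivalence: for each $i \in \{0,\dots,k\}$, set
\[
  Z_i \;=\; \{(a_1,b_1,a_2,b_2) \in V_1^2\times V_2^2 \mid \typek \F i{\G_1,a_1,b_1} = \typek \F i{\G_2,a_2,b_2}\}.
\]
Since $\C(\F)_{i-1} \subseteq \C(\F)_i$, the sequence $\bar Z$ is decreasing, and the hypothesis $\mG_1 \indistk \F k \mG_2$ places $\element \in Z_k$ by definition; it remains to verify that $\bar Z$ satisfies the conditions of Definition~\ref{defbisim}. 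The Atoms clauses are immediate at $Z_0$ since atomic expressions have degree $0$. For each Forth clause at level $i$ I would argue by contradiction: if the clause fails for some $\element \in Z_i$ and some $\G_1$-side witness (e.g.\ $c_1$ for Composition Forth), then for every candidate $c_2 \in V_2$ at least one of the required level-$(i-1)$ type equalities must fail. Proposition~\ref{vier}, applicable because the offending pair sits inside $\faths{i-1}$, together with the presence of complement or difference, then supplies, in a uniform direction, a separating expression in $\C(\F)_{i-1}$ for that $c_2$. Combining these separating expressions using intersection and then feeding them into the operator associated with the failing clause — composition for Composition Forth, projection for Projection Forth, the appropriate residual for the Residual clauses — yields a single $\C(\F)_i$-expression that distinguishes $(a_1,b_1)$ in $\G_1$ from $(a_2,b_2)$ in $\G_2$, contradicting $\element \in Z_i$. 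The Back clauses are symmetric.

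The principal obstacle is the combining step in the backward direction: a naive intersection over all ``bad'' candidates $c_2$ produces a legal expression only when finitely many $c_2$-classes suffice. This is automatic when $\G_2$ is finite, and in general I expect to reduce it by grouping the candidates into their $\C(\F)_{i-1}$-type classes on the $\G_2$ side and choosing one separating expression per class — exploiting that the quantifier ranges in the clauses are already localised to $\faths{i-1}$-neighbourhoods. A secondary delicacy is the Residual clauses: the alternation between their Forth and Back formulations mirrors the inside-out universal quantifier in $\lres$ and $\rres$, so the separating expressions there must be assembled through the residual operator itself rather than through $\compos$, with matching care in the bookkeeping of the roles played by $a_i,b_i,c_i$.
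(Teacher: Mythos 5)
Your proposal is correct, and the forward direction coincides with the paper's: an Invariance Lemma proved by structural induction on $e$, with coprojection reduced to projection via $\cpi_j(e)\equiv\id-\pi_j(e)$ and with Proposition~\ref{leminpaths} guaranteeing that composition witnesses land in $\paths^\F_{i-1}$. For the backward direction, however, you take a genuinely different route. The paper proves a Characteristic Expression Lemma: for each marked $\mG_1$ and each $k$ it constructs an expression $e^{\F,k}_{\mG_1}\in\C(\F)_k$ defining, on any $\G_2$, exactly the pairs in $\paths^\F_k(\G_2)$ that are $(\F,k)$-bisimilar to $\mG_1$ (built by expressing each Forth/Back property over an expanded vocabulary encoding the maximal bisimulation $\BiSim\F{\G_1}{\G_2}$, then substituting lower-degree characteristic expressions for the auxiliary relation names); the theorem then follows in two lines from $\mG_1\bisimeq\F k\mG_1$. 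You instead take $Z_i$ to be degree-$i$ type equality and verify each clause of Definition~\ref{defbisim} by contraposition, combining separating expressions with the operator whose clause fails. Both routes rest on the same two ingredients, and you correctly identify both: Proposition~\ref{vier} to orient each separator in a uniform direction (this is exactly where complement or difference is needed, via $\paths^\F_{i-1}-h$, and it requires the relevant pair to lie in $\paths^\F_{i-1}$ of the appropriate structure, which the clause preconditions guarantee), and finiteness up to equivalence of degree-bounded expressions to turn the infinite intersections into legal ones --- the paper invokes precisely this fact at the end of its own construction, so your proposed fix is not a gap. Your approach is more economical if the adequacy theorem is the only goal, and its residual cases do require the cross-over bookkeeping you flag (the existential witness for a Forth residual clause lives on the $\G_2$ side and the response on the $\G_1$ side, so the separator must be assembled inside $\lres$ or $\rres$ with the $e_2$-part oriented ``holds in $\G_2$, fails in $\G_1$'' and the $e_1$-part oriented the other way). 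What the paper's heavier construction buys is reusable output: the explicit characteristic expressions and the maximal-bisimulation refinement sequence are what drive the simulation analogue in Section~\ref{secsim}, the Hennessy--Milner theorem, and the polynomial-time decision procedure. One small point worth making explicit in your write-up: the degenerate case $(a_1,b_1)\notin\paths^\F_k(\G_1)$ (handled separately in the paper's proof) is absorbed by your construction, since then both degree-$k$ types are empty by Proposition~\ref{leminpaths} and all clause preconditions are vacuous.
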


We will prove the only-if direction
in the Invariance Lemma, where we will need the following
immediate property:

\begin{lemma} \label{leminduction}
Let $k>0$ and let $(Z_0,Z_1,\dots,Z_k)$ be an
$(\F,k)$-bisimulation from $\G_1$ to $\G_2$. Then
$(Z_0,\dots,Z_{k-1})$ is an $(\F,k-1)$-bisimulation from $\G_1$ to $\G_2$.
\end{lemma}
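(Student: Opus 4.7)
The plan is to observe that Lemma~\ref{leminduction} is essentially immediate from unpacking Definition~\ref{defbisim}. Being an $(\F,k)$-bisimulation means two things: (i) the base relation $Z_0$ satisfies Atoms Forth and Atoms Back, and (ii) for every index $i \in \{1,\dots,k\}$, every element of $Z_i$ has each required Composition/Projection/Residual Forth-and-Back property \emph{at degree $i$} \emph{with respect to $Z_{i-1}$}. Similarly, the decreasing-chain condition requires $Z_i \subseteq Z_{i-1}$ for each such $i$.

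To verify that $(Z_0,\dots,Z_{k-1})$ is an $(\F,k-1)$-bisimulation, I would simply check each of these clauses for the truncated sequence. First, the Atoms Forth and Back properties of $Z_0$ are preserved verbatim since we still have the same $Z_0$. Second, the decreasing-chain condition $Z_i \subseteq Z_{i-1}$ is inherited for every $i \in \{1,\dots,k-1\}$. Third, for each $i \in \{1,\dots,k-1\}$ and each element of $Z_i$, the relevant Forth/Back property at degree $i$ with respect to $Z_{i-1}$ is required. But this is already part of the hypothesis, since $\{1,\dots,k-1\} \subseteq \{1,\dots,k\}$ and the conditions for index $i$ reference only $Z_i$ and $Z_{i-1}$, both of which are retained in the truncated sequence.

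There is no real obstacle here: dropping the top level $Z_k$ eliminates exactly the conditions indexed by $i=k$, which are precisely the conditions \emph{not} required by an $(\F,k-1)$-bisimulation. Every surviving clause already appears in the stronger hypothesis. Hence the verification is a one-line inspection of the definition, and I would present the proof in that form, simply recording that the clauses for indices $0,1,\dots,k-1$ in the definition of $(\F,k-1)$-bisimulation form a strict subset of those verified for $(Z_0,\dots,Z_k)$.
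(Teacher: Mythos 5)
Your proposal is correct and matches the paper's treatment: the paper states this lemma as an ``immediate property'' without proof, and your unpacking of Definition~\ref{defbisim} --- observing that the clauses for indices $0,\dots,k-1$ reference only $Z_i$ and $Z_{i-1}$ and form a subset of those already verified --- is precisely the routine check being left implicit.
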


We now give:

\begin{lemma}[Invariance] \label{invariant}
If $\mG_1 \bisimeq \F k \mG_2$ then $\mG_1 \indistk \F k \mG_2$.
\end{lemma}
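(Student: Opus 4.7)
The plan is to prove this by structural induction on expressions $e \in \C(\F)_k$. The strengthened invariant I will maintain is: for every $i \in \{0,1,\ldots,k\}$, every tuple $(a_1,b_1,a_2,b_2) \in Z_i$, and every expression $e \in \C(\F)$ with $\degree(e) \leq i$, we have $(a_1,b_1) \in e(\G_1)$ if and only if $(a_2,b_2) \in e(\G_2)$. Applied at $i=k$ to the distinguished tuple in $Z_k$, this yields $\mG_1 \indistk \F k \mG_2$. By Proposition~\ref{convprop} I may restrict attention to expressions in which converse appears only on relation names, so the converse operator is absorbed into the atomic base case.

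The base case handles atomic expressions $e \in \aexp(\F)$. Since $Z_i \subseteq Z_0$, the Atoms Forth and Back properties give $\atype \F{\G_1,a_1,b_1} = \atype \F{\G_2,a_2,b_2}$, settling all atomic expressions (including their converses, thanks to the two variants of the Atoms condition noted after Definition~\ref{defbisim}). The boolean cases $e = e_1 \cup e_2$, $e_1 \cap e_2$, $e_1 - e_2$, and $e^c$ preserve degree and follow directly from the induction hypothesis applied to subexpressions.

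For the composition case $e = e_1 \compos e_2$ of degree $i$, both $e_1$ and $e_2$ have degree at most $i-1$. If $(a_1,b_1) \in e(\G_1)$ with witness $c_1$, then $(a_1,c_1), (c_1,b_1) \in \paths_{i-1}^\F(\G_1)$ by Proposition~\ref{leminpaths}, so Composition Forth produces $c_2$ with $(a_1,c_1,a_2,c_2), (c_1,b_1,c_2,b_2) \in Z_{i-1}$; the induction hypothesis then places $(a_2,b_2) \in e(\G_2)$. Composition Back gives the reverse direction. The projection and residual cases proceed analogously, consulting their respective Forth/Back conditions together with Proposition~\ref{leminpaths} to verify the path hypotheses on witnesses. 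Coprojection, when $\cpi \in \F$, is handled through the equivalence $\cpi_j(e) \equiv \id - \pi_j(e)$: since the ambient hypothesis provides complement or difference, and since $\cpi \in \F$ makes projection expressible in $\C(\F)$ (cf.\ the derivation list in Section~2), this reduces to already-treated cases.

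The main obstacle is the careful bookkeeping of the decreasing sequence $\bar Z = (Z_0,\ldots,Z_k)$: each appeal to a Forth or Back condition at level $i$ must produce witnesses in $Z_{i-1}$ so that the induction hypothesis can be invoked at degree $i-1$, and Lemma~\ref{leminduction} keeps this descent clean. The residual cases are the most delicate, because of the disjunctive clause in their definition allowing a witness to satisfy, e.g., $(a_1,c_1) \notin \paths_{i-1}^\F(\G_1)$ instead of landing in $Z_{i-1}$. In that branch, Proposition~\ref{leminpaths} forces $(a_1,c_1) \notin e'(\G_1)$ for every $e' \in \C(\F)$ of degree at most $i-1$, so the universal quantifier in the semantics of the residual is satisfied vacuously on that side, matching the corresponding vacuous satisfaction forced on the other side by the same path-failure clause.
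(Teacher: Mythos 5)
Your proposal is correct and follows essentially the same route as the paper's own proof: structural induction on $e$, with Proposition~\ref{convprop} pushing converse to the leaves, Proposition~\ref{leminpaths} supplying the path hypotheses needed to invoke the Forth/Back conditions, Lemma~\ref{leminduction} managing the descent along the decreasing sequence $\bar Z$, and coprojection reduced via $\cpi_j(e)\equiv\id-\pi_j(e)$. Your treatment of the disjunctive clause in the residual conditions, though compressed, unpacks to the same argument the paper uses (the path-failure branch forces the antecedent of the residual's implication to fail on both sides, via the induction hypothesis), so nothing essential is missing.
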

\begin{proof}
Let $e$ be an expression in $\C(\F)_k$; we may assume by
Proposition~\ref{convprop} that converse is only applied to
relation names.
We prove by induction on the structure
of $e$ that for any marked structures
$\mG_1 = (\G_1, a_1,b_1) \bisimeq \F k \mG_2 = (\G_2, a_2, b_2)$, we have
$(a_1,b_1) \in e(\G_1)$ if and only if $(a_2,b_2) \in e(\G_2)$.

Let $\V_1$ and $\V_2$ be the node sets of the structures $\G_1$
and $\G_2$, respectively.  Let $\bar Z$ be an
$(\F,k)$-bisimulation from $\G_1$ to $\G_2$ such that
$(a_1,b_1,a_2,b_2) \in Z_k$.

For the base case, where $e$ is an atomic expression,
the result follows immediately from the Atoms Forth and Back
properties applied to $(a_1,b_1,a_2,b_2) \in Z_k$.

If $e$ is $e_1 \cup e_2$, $e_1 \cap e_2$, $e_1 - e_2$, or
$\compl{e_1}$, the result follows immediately from the induction
hypothesis.

For the case where $e$ is $e_1 \compos e_2$, consider the
only-if, i.e., assume that $(a_1,b_1) \in e(\G_1)$. By definition
of composition, there exists $c_1$ in $\V_1$ with $(a_1,c_1) \in
e_1(\G_1)$ and $(c_1,b_1) \in e_2(\G_1)$. Since $e_1$ and $e_2$
have depth at most $k-1$, by Proposition~\ref{leminpaths}, both
$(a_1,c_1)$ and $(c_1,b_1)$ are in $\paths_{k-1}^\F(\G_1)$. By
the Composition Forth condition, there exists $c_2$ in $\V_2$
such that both $(a_1,c_1,a_2,c_2)$ and $(c_1,b_1,c_2,b_2)$ belong
to $Z_{k-1}$.  Hence, by induction (and using
Lemma~\ref{leminduction}), we have $(a_2,c_2) \in e_1(\G_2)$ and
$(c_2,b_2) \in e_2(\G_2)$, whence $(a_2,b_2) \in e_1 \compos
e_2(\G_2)$. The argument for the if-direction is similar, using
Composition Back instead of Composition Forth.

For the case where $e$ is $\pi_1(e_1)$, consider the only-if,
i.e., assume $(a_1,b_1) \in \pi_1(e_1)(\G_1)$. By definition of
projection, we have $a_1 = b_1$, and there exists $c_1$ in $\V_1$
with $(a_1,c_1) \in e_1(\G_1)$. By Proposition~\ref{leminpaths}, we
have $(a_1,c_1)$ in $\paths_{k-1}^\F(\G_1)$. By the Projection
Forth condition, there exists $c_2$ in $\V_2$ such that
$(a_1,c_1,a_2,c_2) \in Z_{k-1}$.  Hence, by induction, we have
$(a_2,c_2) \in e_1(\G_2)$, whence $(a_2,b_2) \in
\pi_1(e_1)(\G_2)$. The argument for the if-direction is similar,
using Projection Back instead of Projection Forth. The argument
for the case where $e$ is $\pi_2(e_1)$ is analogous.

The case where $e$ is $\cpi_i(e_1)$ now follows readily from the
preceding as $\cpi_i(e_1) \equiv \id - \pi_i(e_1)$.

Finally, for the case where $e$ is $e_1 \lres e_2$, consider the only-if,
i.e., assume $(a_1,b_1) \in e(\G_1)$. Suppose now that $(a_2,b_2)
\notin e_1 \lres e_2(\G_2)$. Then, by definition of the left
residual, there exists $c_2$ in $\V_2$ such that $(b_2,c_2)
\in e_2(\G_2)$ and $(a_2,c_2) \notin e_1(\G_2)$.
By Proposition~\ref{leminpaths}, we
have that $(b_2,c_2)$ in $\paths_{k-1}^\F(\G_2)$. By the Left
Residual Forth condition,
there exists $c_1$ in $\V_1$ such that both $(b_1,c_1,b_2,c_2)$
and $(a_1,c_1,a_2,c_2)$ belong to $Z_{k-1}$.
Hence, by induction, we
obtain $(b_1,c_1) \in e_2(\G_1)$ and $(a_1,c_1) \notin
e_1(\G_1)$. Now, this $c_1$ contradicts that $(a_1,b_1) \in e_1
\lres e_2 (\G_1)$. The
argument for the if-direction is similar, using
Left Residual Back.

The case of a right residual is completely analogous to that of a
left residual.
\end{proof}

The other direction of the adequacy theorem will be established
by the Characteristic Expression Lemma.  For the proof of that Lemma we need
to introduce the construction of the maximal bisimulation.  This
construction by successive refinement is a classical technique
\cite[Section~3.5]{gorankotto} and is used as well in algorithms
for classical bisimilarity \cite{bisimalgo} and for color
refinement \cite{immermanlander}.

\begin{definition} \label{defmax}
Given a fragment $\F$ as above and structures
$\G_1$ and $\G_2$ with node sets $V_1$ and $V_2$ respectively,
we construct an infinite decreasing sequence
$Z_0, Z_1, Z_2, \cdots$ by induction on $k$ as follows.
\begin{enumerate}
\item
$Z_0$ is the set of all elements of $V_1^2 \times V_2^2$ that have the
Atoms Forth and Back properties
relative to $\F$, $\G_1$ and $\G_2$.
\item
$Z_i$, for $i>0$, is the set of all elements in $Z_{i-1}$ that
have
\begin{itemize}
\item
the Composition Forth and Back properties
at degree $i$ with respect to $Z_{i-1}$
(still relative to $\F$, $\G_1$ and $\G_2$);
\item
the Projection Forth and Back properties
at degree $i$ with respect to $Z_{i-1}$,
if $\F$ contains projection; and
\item
the Left (Right) Residual Forth and Back properties
at degree $i$ with respect to $Z_{i-1}$, if $\F$ contains left
(right) residual.
\end{itemize}
\end{enumerate}
We denote this constructed sequence by
$\BiSim \F {\G_1} {\G_2}$.
\end{definition}

The relevant property about 
$\BiSim \F {\G_1} {\G_2}$ is the following. It follows
immediately from the definitions.

\begin{proposition} \label{propmax}

Let $\BiSim \F {\G_1} {\G_2} = Z_0, Z_1, \dots$.
Then for each natural number $k$, the sequence $Z_0,
Z_1, \dots, Z_k$ is an
$(\F,k)$-bisimulation from $\G_1$ to $\G_2$.  Furthermore, it is
the maximal bisimulation in the sense that, for any other such
$(\F,k)$-bisimulation $Z'_0, \dots, Z'_k$, we
have $Z'_i \subseteq Z_i$ for each $i=0,\dots,k$.

\end{proposition}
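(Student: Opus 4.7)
The plan is that both parts of the proposition can be read off directly from the construction in Definition~\ref{defmax}, with only a small amount of bookkeeping. For the first claim, I would note that $Z_0$ is defined to be exactly the set of quadruples satisfying the Atoms Forth and Back properties, and that for each $i \geq 1$, the set $Z_i$ is defined as the subset of $Z_{i-1}$ whose elements satisfy precisely the forth/back conditions at degree $i$ with respect to $Z_{i-1}$ that are demanded by Definition~\ref{defbisim} for the operators in $\F$. In particular the sequence is decreasing by construction, so that the tuple $(Z_0,Z_1,\dots,Z_k)$ meets every clause of Definition~\ref{defbisim} for every $k$.

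For the maximality claim, I would proceed by induction on $i$, showing $Z'_i \subseteq Z_i$ for any competing $(\F,k)$-bisimulation $(Z'_0,\dots,Z'_k)$. The base case $i=0$ is immediate, since every element of $Z'_0$ has Atoms Forth and Back by Definition~\ref{defbisim} and hence lies in $Z_0$ by Definition~\ref{defmax}. For the inductive step, suppose $Z'_{i-1} \subseteq Z_{i-1}$ and pick $(a_1,b_1,a_2,b_2) \in Z'_i$. Since $\bar{Z}'$ is decreasing, this quadruple lies in $Z'_{i-1}$ and thus, by the induction hypothesis, in $Z_{i-1}$. It remains to check that it has each of the forth/back properties at degree $i$ relative to $Z_{i-1}$.

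The key observation driving the inductive step is a uniform monotonicity property: each of the Composition, Projection, Left and Right Residual Forth/Back conditions is \emph{existential} in the reference relation (it asserts the existence of a witness $c_j$ such that certain quadruples lie in the reference set, possibly combined with a disjunct about non-membership in some $\paths^\F$-set that does not involve the reference set at all). Since these conditions hold with respect to $Z'_{i-1}$ by the assumption that $\bar{Z}'$ is an $(\F,k)$-bisimulation, and since $Z'_{i-1} \subseteq Z_{i-1}$, any witnessing quadruple in $Z'_{i-1}$ is automatically a witness in the larger $Z_{i-1}$. Consequently, $(a_1,b_1,a_2,b_2)$ satisfies the defining conditions for $Z_i$, so $Z'_i \subseteq Z_i$.

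The only place that requires a moment of care is verifying the monotonicity for the two residual conditions, because they involve a disjunction of the form ``either $(a_1,c_1) \notin \paths^\F_{i-1}(\G_1)$ or $(a_1,c_1,a_2,c_2) \in Z$''; enlarging $Z$ only enlarges the second disjunct, so the disjunction is preserved, and the argument goes through uniformly. This is the only genuinely non-mechanical point in what is otherwise a straightforward unwinding of the definitions.
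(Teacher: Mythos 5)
Your proof is correct and takes the same route the paper intends: the paper dismisses this proposition with the single remark that it ``follows immediately from the definitions,'' and your argument is precisely the unwinding of that claim, with the maximality half resting on the monotonicity of each Forth/Back condition in the reference relation. Your explicit check that the residual conditions remain monotone despite the disjunct involving $\paths^\F_{i-1}$ is the one point the paper leaves entirely implicit, and you handle it correctly.
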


As an immediate corollary, we have:

\begin{corol} \label{cormax}
$(\G_1,a_1,b_1) \bisimeq \F k
(\G_2,a_2,b_2) \; \Leftrightarrow \; (a_1,b_1,a_2,b_2) \in \BiSim \F
{\G_1} {\G_2}_k$.
\end{corol}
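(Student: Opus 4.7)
The plan is to derive the corollary by unpacking Definition~\ref{defbisim} and invoking the two halves of Proposition~\ref{propmax}; no new construction is needed. Write $\BiSim{\F}{\G_1}{\G_2} = Z_0,Z_1,Z_2,\dots$ for the sequence produced by Definition~\ref{defmax}, so that $\BiSim{\F}{\G_1}{\G_2}_k$ denotes its $k$-th entry $Z_k$.

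For the ($\Leftarrow$) direction, I would assume $(a_1,b_1,a_2,b_2)\in Z_k$. The first assertion of Proposition~\ref{propmax} gives that the truncation $(Z_0,Z_1,\dots,Z_k)$ is itself an $(\F,k)$-bisimulation from $\G_1$ to $\G_2$. Since the distinguished tuple lies in its top component, this sequence witnesses $(\G_1,a_1,b_1)\bisimeq{\F}{k}(\G_2,a_2,b_2)$ per Definition~\ref{defbisim}. For the ($\Rightarrow$) direction, I would start from Definition~\ref{defbisim} to extract some $(\F,k)$-bisimulation $(Z'_0,\dots,Z'_k)$ with $(a_1,b_1,a_2,b_2)\in Z'_k$, and then apply the maximality clause of Proposition~\ref{propmax}, which yields $Z'_k\subseteq Z_k$. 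Hence $(a_1,b_1,a_2,b_2)\in Z_k = \BiSim{\F}{\G_1}{\G_2}_k$, as required.

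There is no real obstacle here; the only point worth spelling out carefully is that the truncation of an infinite sequence to its first $k{+}1$ entries genuinely satisfies the $(\F,k)$-bisimulation conditions of Definition~\ref{defbisim}, which is exactly the content of the first half of Proposition~\ref{propmax}. The proof therefore reduces to citing that proposition twice, once in each direction.
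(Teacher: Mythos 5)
Your proof is correct and is exactly the argument the paper intends: the corollary is stated as an immediate consequence of Proposition~\ref{propmax}, with the forward direction following from maximality ($Z'_k \subseteq Z_k$) and the backward direction from the fact that the truncated sequence $(Z_0,\dots,Z_k)$ is itself an $(\F,k)$-bisimulation. You have merely made explicit what the paper leaves to the reader.
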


Another relevant property is the following.

\begin{proposition}
\label{pathpreserving}
Let $\BiSim \F {\G_1} {\G_2} = Z_0, Z_1, \dots$.
Then $Z_i$, for every $i$, is \emph{path-preserving at degree
$i$}, in the
sense that if $(a_1,b_1,a_2,b_2) \in Z_i$
and $(a_1,b_1) \in \paths^\F_i(\G_1)$, then
also $(a_2,b_2) \in \paths^\F_i(\G_2)$.
\end{proposition}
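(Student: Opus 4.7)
The plan is to prove the claim by induction on $i$, distinguishing cases according to the classification (present at degree $0$, present at degree $1$, or absent) of $1$ in $\C(\F)$. In the cases where $1$ is present at some degree, the conclusion is essentially trivial by the definition of $\paths^\F$; the substantive case is when $1$ is absent, in which case $\F$ contains neither $1$, $0'$, complement, nor the residuals, and we must leverage the Composition Forth property together with the recursive description of $\paths^\F$.

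For the base case $i=0$, take $(a_1,b_1,a_2,b_2)\in Z_0$ with $(a_1,b_1)\in\paths^\F_0(\G_1)$. If $1$ is present at degree $0$ in $\C(\F)$, then $\paths^\F_0(\G_2)=V_2^2$ and there is nothing to show. Otherwise, unwinding Definition~\ref{defpath} and the proof of Lemma~\ref{lempaths}, $\paths^\F_0$ equals $\bigcup_{e\in\aexp(\F)} e$; hence some $e\in\aexp(\F)$ satisfies $e\in\atype^\F(\G_1,a_1,b_1)$. By the Atoms Forth property of $Z_0$, $e\in\atype^\F(\G_2,a_2,b_2)$ as well, so $(a_2,b_2)\in e(\G_2)\subseteq\paths^\F_0(\G_2)$.

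For the inductive step $i>0$, assume the statement holds for $i-1$. If $1$ is present (at degree $0$ or $1$), then $\paths^\F_i(\G_2)=V_2^2\ni(a_2,b_2)$, so we may assume $1$ is absent. Then, as noted in the proof of Lemma~\ref{lempaths}, we have the equivalence
\[
\paths^\F_i \equiv \paths^\F_{i-1} \cup (\paths^\F_{i-1} \compos \paths^\F_{i-1}),
\]
valid whether or not $\F$ contains converse (in the converse case one uses $\upaths$, and an undirected path of length at most $2^i$ splits at its midpoint into two undirected subpaths of length at most $2^{i-1}$). From $(a_1,b_1)\in\paths^\F_i(\G_1)$ we get two subcases. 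If $(a_1,b_1)\in\paths^\F_{i-1}(\G_1)$, then since $Z_i\subseteq Z_{i-1}$ we have $(a_1,b_1,a_2,b_2)\in Z_{i-1}$, and the induction hypothesis yields $(a_2,b_2)\in\paths^\F_{i-1}(\G_2)\subseteq\paths^\F_i(\G_2)$. Otherwise there exists $c_1\in V_1$ with $(a_1,c_1),(c_1,b_1)\in\paths^\F_{i-1}(\G_1)$; applying the Composition Forth property at degree $i$ with respect to $Z_{i-1}$ (which $(a_1,b_1,a_2,b_2)\in Z_i$ satisfies by Definition~\ref{defmax}), we obtain $c_2\in V_2$ with $(a_1,c_1,a_2,c_2),(c_1,b_1,c_2,b_2)\in Z_{i-1}$. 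The induction hypothesis then gives $(a_2,c_2),(c_2,b_2)\in\paths^\F_{i-1}(\G_2)$, and hence $(a_2,b_2)\in\paths^\F_i(\G_2)$ by the equivalence above.

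The only point requiring care is the $i>0$, $1$-absent case, and within it the verification that the recursive description of $\paths^\F_i$ in terms of $\paths^\F_{i-1}$ applies uniformly in the directed and undirected settings so that Composition Forth can be invoked with the right intermediate node. Once this recursion is in place, the proof is a direct induction using only the definition of $Z_i$ in Definition~\ref{defmax} and the Atoms Forth property at the base.
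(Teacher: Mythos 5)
Your proof is correct and follows essentially the same route as the paper's: induction on $i$, trivial when $1$ is present, and otherwise the recursion $\paths^\F_i \equiv \paths^\F_{i-1} \cup (\paths^\F_{i-1} \compos \paths^\F_{i-1})$ combined with $Z_i \subseteq Z_{i-1}$ and the Composition Forth property. Your additional care in the base case (via Atoms Forth) and in checking the midpoint-splitting for undirected paths only makes explicit what the paper leaves implicit.
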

\begin{proof}  By induction on $i$.  The base case $i=0$ is
clear from the definition of $Z_0$.
For the case $i>0$ we may assume
that 1 is absent from $\F$, since otherwise the claim is trivial.
We can thus rely on the equivalence
$\paths^\F_i \equiv \paths^\F_{i-1} \cup
(\paths^\F_{i-1} \circ \paths^\F_{i-1})$.  If 
$(a_1,b_1) \in \paths^\F_{i-1}(\G_1)$, then, since
$Z_i \subseteq Z_{i-1}$, the claim follows
directly by induction.  Otherwise, there exists $c_1 \in V_1$ such
that $(a_1,c_1)$ and $(c_1,b_1)$ are in $\paths^\F_{i-1}(\G_1)$.
By the Composition Forth Property, there exists $c_2 \in V_2$
such that $(a_1,c_1,a_2,c_2)$ and $(c_1,b_1,c_2,b_2)$ belong to
$Z$.  By induction, we have $(a_2,c_2)$ and $(c_2,b_2)$ in
$\paths^\F_{i-1}(\G_2)$, whence $(a_2,b_2) \in \paths^\F_i(\G_2)$
as desired.
\end{proof}

Note that, since the above proposition shows path-preservation
for the maximal bisimulation, path-preservation also holds for
any arbitrary bisimulation.

We are now ready for:

\begin{lemma}[Characteristic Expression] \label{karakter}
Let $k$ be a natural number and let $\mG_1 = (\G_1, a_1, b_1)$ be
a marked structure.  Then there
exists an expression $e^{\F,k}_{\mG_1}$ in $\C(\F)_k$ such that
for every structure $\G_2$ we have
\[ e^{\F,k}_{\mG_1}(\G_2) = \{
(a_2,b_2) \in \paths^\F_k(\G_2) \mid \mG_1 \bisimeq \F k
(\G_2,a_2,b_2) \}.  \]
\end{lemma}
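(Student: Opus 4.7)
The plan is to construct $e^{\F,k}_{\mG_1}$ by induction on $k$, abbreviating $e^{\F,k}_{\G,a,b}$ as $\chi^{\F,k}_{\G,a,b}$ for brevity, and to establish correctness by the same induction. For the base case $k=0$, I would take
\[
\chi^{\F,0}_{\mG_1} = \paths^\F_0 \cap \bigcap_{e \in \atype \F \mG_1} e \cap \bigcap_{e \in \aexp(\F) \setminus \atype \F \mG_1} (\paths^\F_0 - e),
\]
which has degree $0$ and selects on any $\G_2$ exactly those $(a_2,b_2) \in \paths^\F_0(\G_2)$ whose atomic $\F$-type equals $\atype \F \mG_1$. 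By the Atoms Forth and Back properties this is precisely the required $(\F,0)$-bisimilarity class within $\paths^\F_0(\G_2)$.

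For the inductive step I would define $\chi^{\F,k+1}_{\mG_1}$ as the intersection of $\chi^{\F,k}_{\mG_1}$ (to inherit degree-$\leq k$ bisimilarity as a subrelation), $\paths^\F_{k+1}$, and one conjunct per remaining $(\F,k+1)$-bisimulation clause. For Composition Forth I would include, for each $c_1 \in V_1$ with $(a_1,c_1),(c_1,b_1) \in \paths^\F_k(\G_1)$, the conjunct $\chi^{\F,k}_{\G_1,a_1,c_1} \circ \chi^{\F,k}_{\G_1,c_1,b_1}$. For Composition Back I would distribute the universal ``every midpoint of $(a_2,b_2)$ in $\G_2$ is matched by some $c_1 \in V_1$'' into a finite DNF indexed by functions $f : V_1 \to \{L,R\}$, and include the conjunct $\paths^\F_{k+1} - \bigcup_f (A_f \circ B_f)$, where
\[
A_f = \paths^\F_k - \bigcup_{c_1 : f(c_1)=L} \chi^{\F,k}_{\G_1,a_1,c_1}, \qquad B_f = \paths^\F_k - \bigcup_{c_1 : f(c_1)=R} \chi^{\F,k}_{\G_1,c_1,b_1}.
\]
When $\F$ contains projection and $a_1=b_1$, I would add analogous Projection Forth conjuncts using $\pi_1$ and $\pi_2$ applied to degree-$k$ characteristic expressions, together with a Projection Back conjunct of the form $\id - \pi_i(\paths^\F_k - \bigcup_{c_1 \in V_1} \chi^{\F,k}_{\G_1,\cdot,\cdot})$. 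When $\F$ contains a residual, I would add Forth and Back conjuncts in the same spirit. Each added conjunct has degree at most $k+1$, so $\chi^{\F,k+1}_{\mG_1} \in \C(\F)_{k+1}$.

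Correctness I would argue in both directions by induction on $k$. For the direction ``$\mG_1 \bisimeq \F {k+1} (\G_2,a_2,b_2)$ implies $(a_2,b_2) \in \chi^{\F,k+1}_{\mG_1}(\G_2)$'', fix a witnessing $(\F,k+1)$-bisimulation $(Z_0,\ldots,Z_{k+1})$ and verify each conjunct separately: the $\chi^{\F,k}_{\mG_1}$ conjunct follows from Lemma~\ref{leminduction} and the induction hypothesis; each Forth conjunct gets its witness directly from the corresponding Forth clause of the bisimulation; each Back conjunct is obtained by contradiction, since a bad midpoint witnessing membership in some $A_f \circ B_f$ would, via Composition Back and the induction hypothesis, produce a $c_1 \in V_1$ whose assignment under $f$ must fail both sides, contradicting totality of $f$. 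For the converse direction I would set $Z_i = \{(x_1,y_1,x_2,y_2) \in V_1^2 \times V_2^2 : (x_2,y_2) \in \chi^{\F,i}_{\G_1,x_1,y_1}(\G_2)\}$ for $i=0,\ldots,k+1$ and check that $(Z_0,\ldots,Z_{k+1})$ is an $(\F,k+1)$-bisimulation containing $(a_1,b_1,a_2,b_2) \in Z_{k+1}$; the decreasing property is forced by the $\chi^{\F,i}_{\G_1,x_1,y_1}$ conjunct inside $\chi^{\F,i+1}_{\G_1,x_1,y_1}$, Forth conjuncts supply existential witnesses directly, and Back conjuncts rule out the existence of bad witnesses.

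The main obstacle is the encoding of the Back conditions, which are universal quantifications over midpoints or successors. The key idea is the Boolean distribution into a DNF indexed by choice functions $f : V_1 \to \{L,R\}$, turning each universal into a finite (though exponential in $|V_1|$) union of compositions; its complement within $\paths^\F_{k+1}$ then captures the Back clause. The assumption that $\F$ contains complement or difference is essential here, as the construction is built around explicit set differences. Projection Back and the two Residual Back conditions yield to the same trick after minor adjustments.
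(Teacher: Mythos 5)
Your construction is essentially the paper's: the same base case, the same Composition Forth conjunct, and your choice-function DNF for Composition Back is literally the paper's union over subsets $V \subseteq V_1$ (identify $V$ with $f^{-1}(L)$). The paper merely packages the induction via the maximal bisimulation $\BiSim{\F}{\G_1}{\G_2}$, first writing each clause over an expanded vocabulary with relation names $Z_{a,b}$ and substituting $e^{\F,k-1}_{\G_1,a,b}$ afterwards; your two-directional correctness check (including the construction of $Z_i$ from the expressions for the converse) amounts to the same thing. There is, however, one genuine gap. The lemma is stated for an \emph{arbitrary} marked structure $\mG_1$, and it is invoked in the proof of the Adequacy Theorem without any finiteness hypothesis on $\G_1$. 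When $V_1$ is infinite, your intersections over $c_1 \in V_1$ and your union over $f : V_1 \to \{L,R\}$ are infinite, so you have not produced an expression of $\C(\F)_k$ at all; your own description of the DNF as ``finite (though exponential in $|V_1|$)'' silently assumes $V_1$ finite. The paper closes this by observing that every infinitary union or intersection in the construction ranges over expressions of degree at most $k$ over the fixed finite vocabulary $\Lambda$, and that up to equivalence there are only finitely many such expressions, so each infinitary Boolean combination can be replaced by a finite one. That finitization argument is a necessary ingredient your sketch is missing.

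A smaller point: in your forward correctness direction, the witness $c_2$ supplied by, say, Composition Forth must be shown to satisfy $(a_2,c_2) \in \paths^\F_k(\G_2)$ before the induction hypothesis lets you conclude $(a_2,c_2) \in \chi^{\F,k}_{\G_1,a_1,c_1}(\G_2)$, because the characteristic expression only captures pairs inside $\paths^\F_k(\G_2)$. This requires the path-preservation property of bisimulations (Proposition~\ref{pathpreserving}), which the paper states and proves separately and also uses in the correctness of the residual clauses. It is true and easy, but it is a lemma you need to make explicit rather than a step that follows ``directly'' from the Forth clause.
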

\begin{proof}
The construction of the required expression is by induction on $k$.
For the base of the construction we put
$ e^{\F,0}_{\mG_1} := \phiposatoms - \phinegatoms $,
where
  $$
\phiposatoms
:=
\bigcap_{e \in \atype \F{\mG_1}} e
\qquad \text{and} \qquad
\phinegatoms
:=
\bigcup_{e \in \aexp(\F) - \atype \F{\mG_1}} e.
$$
It is clear that $(a_2,b_2) \in e^{\F,0}_{\mG_1}(\G_2)$ iff 
$\atype \F{\mG_1} = \atype \F{\mG_2}$, which is necessary and
sufficient for $\mG_1 \bisimeq \F 0 (\G_2,a_2,b_2)$ to hold.

For the inductive step of the construction, let $k>0$. Our
approach, based on Corollary~\ref{cormax}, is to
show that each of the properties $P$ involved in
Definition~\ref{defmax},
$P$ ranging from Composition Forth
until Right Residual Back, is expressible in $\C(\F)_k$, in the
following way.  Let $V_1$ be the node set of the fixed
structure $\G_1$, and let $V_2$ be the node set of any structure
$\G_2$ to which our expressions will be applied.  We may
represent a set $Z \subseteq V_1^2 \times V_2^2$ by the family of
binary relations consisting of, for each $(a,b) \in V_1^2$, the
binary relation  $$ Z_{a,b} =
\{(a',b') \in \faths {k-1}(\G_2) \mid (a,b,a',b') \in Z\}. $$
Note that these relations make only the ``slice'' of $Z$ visible
governed by the constraint $(a',b') \in \faths {k-1}(\G_2)$.

We can add all these relations to $\G_2$, yielding an expanded
structure, denoted by $(\G_2,Z)$, over the expansion of the given
vocabulary $\Lambda$ with relation names $Z_{a,b}$ for each
$(a,b) \in V_1^2$.  (Here, we are abusing notation a bit by
making no formal distinction between the relation name and its
contents.) Furthermore, for our purpose, it will be sufficient to
assume that $Z$ is path-preserving at degree $k-1$ in the sense
of Proposition~\ref{pathpreserving}.  Now we are going to express
each property $P$ by an expression $\phigenp$ over the expanded
vocabulary, in the sense for any $\G_2$ and any $Z$ as above,
$\phigenp$ applied to $(\G_2,Z)$ returns the set of pairs
$(a_2,b_2) \in \paths^\F_k(\G_2)$ for which $(a_1,b_1,a_2,b_2)$
satisfies the $P$ property at degree $k$ with respect to $Z$,
relative to $\G_1$, $\G_2$, and $\F$.

There is a caveat:  if $\G_1$ is infinite, there are infinitely
many pairs $(a,b) \in V_1^2$, so also infinitely many relations
$Z_{a,b}$.  Accordingly, we will allow the expression
$\phigenp$ to be infinitary, in that it can use infinite
unions and intersections.   We will see later in the inductive
argument that normal, finitary expressions are still obtained in
the end.

For example, for the Composition Forth property, we have the
following expression, whose correctness is evident:
\begin{flushleft}
\quad
$\displaystyle
\phicompforth := \faths k \cap \bigcap_{\substack{c_1 \in \V_1\\(a_1,c_1) \in
            \paths_{k-1}^\F(\G_1)\\(c_1,b_1) \in \paths_{k-1}^\F(\G_1)}}
Z_{a_1,c_1} \circ Z_{c_1,b_1}.
$
\end{flushleft}
Here and below, it is understood that an empty intersection
vanishes from the expression.  Empty unions, as usual, are
replaced by the expression 0.

For the Projection Forth property, if $a_1 \neq b_1$, we can simply define
$$\phiprojforth := \faths k.$$  Otherwise, if $a_1 = b_1$,
we use
\begin{multline*}
\phiprojforth := \id \\
{} \cap \bigcap_{\substack{c_1 \in \V_1\\(a_1,c_1) \in
            \paths_{k-1}^{\F}(\G_1)}} \pi_1(Z_{a_1,c_1})
	    \; \cap \bigcap_{\substack{c_1 \in \V_1\\(c_1,a_1) \in
            \paths_{k-1}^{\F}(\G_1)}}
	    \pi_2(Z_{c_1,a_1}).
\end{multline*}
Again the correctness is evident.

For the Projection Back property, if $a_1 \neq b_1$, we can
simply define
$$\phiprojback := \faths k - \id .$$
Indeed, if $(a_2,b_2) \in (\faths k - \id)(\G_2)$ then $a_2 \neq
b_2$, in which case $(a_1,b_1,a_2,b_2)$ voidly satisfies the Projection Back
property.  Conversely, if $(a_2,b_2) \in \faths k(\G_2)$ and
$(a_1,b_1,a_2,b_2)$ satisfies the
Projection Back property, but $a_1\neq b_1$, then $a_2\neq b_2$
must hold as well so $(a_2,b_2) \in (\faths k - \id)(\G_2)$ as
desired.

If $a_1 = b_1$, we put
\begin{multline*}
\phiprojback := \faths k \\
{} - \bigl ( \id \cap
(
\pi_1 ( \paths^{\F}_{k-1} - \bigcup_{c_1 \in \V_1} Z_{a_1,c_1})
\cup
\pi_2 ( \paths^{\F}_{k-1} - \bigcup_{c_1 \in \V_1} Z_{c_1,a_1})
)
\bigr ).
\end{multline*}
Let us verify the correctness in this case.  Let $(a_2,b_2) \in
\phiprojback(\G_2,Z)$.
Then clearly $(a_2,b_2) \in \faths k(\G_2)$.
We must show that
$(a_1,b_1,a_2,b_2)$ satisfies the
Projection Back property at degree $k$ with respect to $Z$.
If $a_2 \neq b_2$ this is trivial so assume $a_2 = b_2$.  Then
$(a_2,b_2) \in \id(\G_2)$, so, considering the expression
$\phiprojback$, this means that (i) $(a_2,a_2) \notin 
\pi_1 ( \paths^{\F}_{k-1} - \bigcup_{c_1 \in \V_1}
Z_{a_1,c_1})(\G_2,Z)$ and (ii)
$(a_2,a_2) \notin 
\pi_2 ( \paths^{\F}_{k-1} - \bigcup_{c_1 \in \V_1}
Z_{c_1,a_1})(\G_2,Z)$.  Now let $c_2 \in V_2$ such that $(a_2,c_2)
\in \faths {k-1}(\G_2)$.  We must show there exists $c_1 \in V_1$ such that
$(a_1,c_1,a_2,c_2) \in Z$.  For the sake of contradiction, assume
the contrary; then $(a_2,c_2) \in (\faths {k-1} - \bigcup_{c_1 \in V_1}
Z_{a_1,c_1})(\G_2,Z)$.  By (i), this is impossible.  Similarly,
using (ii), we obtain that for any $c_2 \in V_2$ such that
$(c_2,a_2) \in \faths {k-1}(\G_2)$ there exists $c_1 \in V_1$
such that $(c_1,a_1,c_2,a_2) \in Z$.  Hence the Projection Back
property holds as desired.

The converse direction, that any $(a_2,b_2) \in \faths k(\G_2)$
must belong to
$\phiprojback(\G_2,Z)$ if $(a_1,b_1,a_2,b_2)$
satisfies the Projection Back property at degree $k$ with respect
to $Z$, is argued similarly.

For the Composition Back property, the expression is a little bit
less evident:
\begin{multline*}
\phicompback
    :=
    \paths^\F_{k} \\
    {} - \bigcup_{V \subseteq \V_1} \bigl (
  (\paths^\F_{k-1} - \bigcup_{c_1 \in \V} Z_{a_1,c_1})
  \; \compos \;
  (\paths^\F_{k-1} - \bigcup_{c_1 \in \V_1 - V} Z_{c_1,b_1}) \bigr )
\end{multline*}

To see the correctness of the above expression, we must show for
any $(a_2,b_2) \in \paths^\F_k(\G_2)$ that $(a_2,b_2) \in
\phicompback(\G_2,Z)$ if and only if $(a_1,b_1,a_2,b_2)$
has the Composition Back property at degree $k$ with respect to $Z$.  For
the only-if direction, assume $(a_2,b_2) \in
\phicompback(\G_2,Z)$.  Let $c_2 \in V_2$ such that both
$(a_2,c_2)$ and $(c_2,b_2)$ are in $\paths^\F_{k-1}(\G_2)$.  We must
show that there exists $c_1 \in V_1$ such that both
$(a_1,c_1,a_2,c_2)$ and $(c_1,b_1,c_2,b_2)$ belong to $Z$.  For
the sake of contradiction, suppose such $c_1$ does not exist,
i.e., for each $c_1 \in V_1$ either $(a_2,c_2) \notin
Z_{a_1,c_1}$ or $(c_2,b_2) \notin Z_{c_1,b_1}$ (or both).  Then,
letting $V := \{c_1 \in V_1 \mid (a_2,c_2) \notin Z_{a_1,c_1}\}$,
we have for any $c_1 \in V_1-V$ that $(c_2,b_2) \notin
Z_{c_1,b_1}$.  Hence, through $c_2$, we see that $$ (a_2,b_2) \in
( \paths^\F_{k-1}(\G_2) - \bigcup_{c_1\in V} Z_{a_1,c_1}) \; \circ
( \paths^\F_{k-1}(\G_2) - \bigcup_{c_1 \in V_1-V} Z_{c_1,b_1}), $$
which contradicts $(a_2,b_2) \in \phicompback(\G_2,Z)$.

For the if-direction, assume that $(a_1,b_1,a_2,b_2)$ has
the Composition Back property at degree $k$ with respect to $Z$, and we must
show that $(a_2,b_2)$ belongs to $\phicompback(\G_2,Z)$.  For the sake of
contradiction, assume there exists $V \subseteq V_1$ and $c_2 \in
V_2$ such that $(a_2,c_2) \in \paths^\F_{k-1}(\G_2) - Z_{a_1,c_1}$
for all $c_1\in V$,
and $(c_2,b_2) \in \paths^\F_{k-1}(\G_2) - Z_{b_1,c_1}$ for all $c_1 \in V_1-V$.
This simply means that there exists $c_2 \in V_2$ such that
$(a_2,c_2)$ and $(c_2,b_2)$ both belong to $\paths^\F_{k-1}(\G_2)$,
but for which there exists no
$c_1 \in V_1$ such that both $(a_1,c_1,a_2,c_2)$ and $(c_1,b_1,c_2,b_2)$
belong to $Z$.  Thus we obtain a direct contradiction with the Composition Back
property.

The expressions for the Left and Right Residual Back properties
are more straightforward again:

\begin{flushleft}
\quad
$\displaystyle
\philresback
    := \paths^\F_k \; -
        \bigcup_{\substack{c_1\in \V_1 \\ (b_1,c_1) \in
	\paths^\F_{k-1}(\G_1)}} (\paths^\F_{k-1} - Z_{a_1,c_1})
        \lres Z_{b_1,c_1} 
$
\end{flushleft}
\begin{flushleft}
\quad
$\displaystyle
\phirresback
    := \paths^\F_k \; -
        \bigcup_{\substack{c_1\in \V_1 \\
	(c_1,a_1) \in \paths^\F_{k-1}(\G_1)}}
	Z_{c_1,a_1} \rres (\paths^\F_{k-1} - Z_{c_1,b_1})
$
\end{flushleft}

Let us show the correctness of the Left Residual Back expression;
the argument for the Right Residual is completely analogous.  Let
$(a_2,b_2) \in \faths k(\G_2)$.  We see that $(a_2,b_2) \in
\philresback$ if and only if there does \emph{not} exist $c_1 \in
V_1$ such that $(b_1,c_1) \in \faths {k-1}(\G_1)$, and such that
for each $c_2 \in V_2$ with $(b_2,c_2) \in Z_{b_1,c_1}$ we have
$(a_2,c_2) \in \faths {k-1}(\G_2) - Z_{a_1,c_1}$.  Equivalently,
$(a_2,b_2) \in \philresback$ iff for all $c_1 \in V_1$ with
$(b_1,c_1) \in \faths {k-1}(\G_1)$, there exists $c_2 \in V_2$
such that $(b_2,c_2) \in \faths {k-1}(\G_2)$ and
$(b_1,c_1,b_2,c_2) \in Z$ and $(a_2,c_2) \notin \faths
{k-1}(\G_2) - Z_{a_1,c_1}$.  By the path-preserving property of
$Z$, the qualification $(b_2,c_2) \in \faths {k-1}(\G_2)$ is
redundant.  Moreover, $(a_2,c_2) \notin \faths {k-1}(\G_2) -
Z_{a_1,c_1}$ means that either $(a_2,c_2) \notin \faths
{k-1}(\G_2)$ or $(a_1,c_1,b_1,c_2) \in Z$.  Thus we get exactly
the formulation of the Left Residual Back property.

Finally, the Left and Right Residual Forth properties are
expressed using a similar approach as for the Composition Back
property:

\begin{multline*}
\philresforth
    := \faths k \\
    {} \cap
    \bigcap_{\V \subseteq \V_1} \Bigl [
      ( \bigcup_{c_1 \in \V_1-\V}  Z_{a_1,c_1} )
\; \lres {} \\
( \paths_{k-1}^\F
 - \bigcup_{\substack{c_1 \in V_1 \\ (a_1,c_1) \notin
\paths^\F_{k-1}(\G_1)}} Z_{b_1,c_1}
	- \bigcup_{c_1 \in V} Z_{b_1,c_1} ) \Bigr ]
\end{multline*}
\begin{multline*}
\phirresforth
    := \faths k \\
    {} \cap
    \bigcap_{\V \subseteq \V_1} \Bigl [
      ( \paths_{k-1}^\F
      - \bigcup_{\substack{c_1 \in V_1 \\ (c_1,b_1) \notin \faths
	{k-1}(\G_1)}} Z_{c_1,a_1}
      - \bigcup_{c_1 \in \V} Z_{c_1,a_1} ) \\
        {} \rres \; \bigcup_{c_1 \in \V_1-\V} Z_{c_1,b_1} \Bigr ]
\end{multline*}

Let us show the correctness of the Left Residual Forth expression;
again the argument for the Right Residual is completely analogous.
We must show for any $(a_2,b_2) \in \paths^\F_k(\G_2)$ that
$(a_2,b_2) \in \philresforth(\G_2,Z)$ if and only if
$(a_1,b_1,a_2,b_2)$ has the Left Residual Forth property at
degree $k$ with respect to $Z$.  For the only-if direction,
assume $(a_2,b_2) \in \philresforth(\G_2,Z)$.  Let $c_2 \in V_2$
such that $(b_2,c_2) \in \paths^\F_{k-1}(\G_2)$.  We must show
that there exists $c_1 \in V_1$ such that $(b_1,c_1,b_2,c_2) \in
Z$ and either $(a_1,c_1) \notin \paths^\F_{k-1}(\G_1)$ or
$(a_1,c_1,a_2,c_2) \in Z$.  If there exists $c_1 \in V_1$ with
$(a_1,c_1) \notin \paths^\F_{k-1}$ such that $(b_1,c_1,b_2,c_2)
\in Z$, there is nothing to prove.  So, suppose no such $c_1$
exists, and consider $V := \{c_1 \in V_1 \mid (b_2,c_2) \notin
Z_{b_1,c_1}\}$.  Then $$ (b_2,c_2) \in \paths^\F_{k-1}(\G_2) -
\bigcup_{\substack{c_1 \in V_1 \\ (a_1,c_1) \notin
\paths^\F_{k-1}(\G_1)}} Z_{b_1,c_1} - \bigcup_{c_1 \in V}
Z_{b_1,c_1}.  $$ Hence, since $(a_2,b_2) \in 
\philresforth(\G_2,Z)$, we have $(a_2,c_2) \in \bigcup_{c_1 \in
\V_1-\V}  Z_{a_1,c_1}$, i.e., there exists $c_1 \in V$ such that
$(a_1,c_1,a_2,c_2) \in Z$ and $(b_1,c_1,b_2,c_2) \in Z$, as
desired.

For the if-direction, assume that $(a_1,b_1,a_2,b_2)$ has the
Left Residual Forth property at degree $k$ with respect to $Z$,
and let $V \subseteq V_1$ be arbitrary.  Let $c_2 \in V_2$ such that
$$ (b_2,c_2) \in \paths^\F_{k-1}(\G_2) -
\bigcup_{\substack{c_1 \in V_1 \\ (a_1,c_1) \notin
\paths^\F_{k-1}(\G_1)}} Z_{b_1,c_1} - \bigcup_{c_1 \in V}
Z_{b_1,c_1}. $$ By the Residual Forth property, there exists $c_1
\in V_1$ such that $(b_1,c_1,b_2,c_2) \in
Z$ (in particular, $c_1 \in V_1-V$)
and either $(a_1,c_1) \notin \paths^\F_{k-1}(\G_1)$ or
$(a_1,c_1,a_2,c_2) \in Z$.  The possibility $(a_1,c_1) \notin
\paths^\F_{k-1}(\G_1)$ cannot occur, however, because
$$ (b_2,c_2) \in \paths^\F_{k-1}(\G_2) -
\bigcup_{\substack{c_1 \in V_1 \\ (a_1,c_1) \notin
\paths^\F_{k-1}(\G_1)}} Z_{b_1,c_1}. $$  Hence, $(a_1,c_1) \in
\paths^\F_{k-1}(\G_1)$ and $(a_1,c_1,a_2,c_2) \in Z$, whence
$(a_2,c_2) \in Z_{a_1,c_1}$, since $Z$ was assumed to be path-preserving.
We conclude that $$(a_2,c_2) \in \bigcup_{c_1 \in V_1-V}
Z_{a_1,c_1}$$ as desired.

We are now ready to conclude the construction of 
the required expression $e^{\F,k}_{\mG_1}$.  This expression, applied to any
$\G_2$, should return the set of all pairs
$(a_2,b_2)$ such that $(a_1,b_1,a_2,b_2) \in \BiSim \F {\G_1} {\G_2}_k$.
By definition, these are the pairs $(a_2,b_2)$ for which
$(a_1,b_1,a_2,b_2) \in \BiSim \F {\G_1} {\G_2}_{k-1}$, and such
that $(a_1,b_1,a_2,b_2)$ has all the properties $P$, required by
Definition~\ref{defmax}, at degree $k$ with
respect to $\BiSim \F {\G_1} {\G_2}_{k-1}$.  We have just seen
that each such property $P$ is expressible by the infinitary
expression $\phigenp$.  Hence, we can obtain
$e^{\F,k}_{\mG_1}$ simply as the intersection of
$e^{\F,k-1}_{\mG_1}$ (obtained by induction)
and the expressions $\phigenp$ for the
different properties $P$ required by $(\F,k)$-bisimulation.

The only problem remaining is that each expression $\phigenp$
is still infinitary, and referring to extra
relation names of the form $Z_{a,b}$.
For our purpose, such a relation name should hold
the relation $\{(a',b') \in \faths {k-1}(\G_2) \mid (a,b,a',b') \in 
\BiSim \F {\G_1} {\G_2}_{k-1}\}$.  By the induction hypothesis
however, we can express this relation by the expression
$e^{\F,k-1}_{\G_1,a,b}$.  So, in $\phigenp$, we can replace each
occurrence of $Z_{a,b}$ by $e^{\F,k-1}_{\G_1,a,b}$ and obtain an
expression of degree $k$ over the original given vocabulary
$\Lambda$.

The resulting expression still has infinite
unions and intersections.  These unions and
intersections are over sets of expressions of degree $k$,
however.  Hence, since, up to equivalence, there are only finitely
many expressions of degree $k$ over the fixed finite vocabulary
$\Lambda$, we can equivalently replace the infinite unions and
intersections by finite ones.  The reason why there are only a
finite number of inequivalent expression of degree $k$ is the
same as why there are only a finite number of inequivalent
first-order logic formulas of quantifier rank $k$
\cite{eft_mathlogic}.

\end{proof}

We can conclude the proof of the Adequacy Theorem as follows:  

\begin{proof}[Proof of Theorem~\ref{water}]  The only-if
direction is proven by Lemma~\ref{invariant}.  The if-direction
for $k=0$ is clear.   So now assume
$(\G_1,a_1,b_1) \equiv^\F_k (\G_2,a_2,b_2)$ with $k>0$.
We distinguish two possibilities.  If
$(a_2,b_2) \notin \paths^\F_k(\G_2)$, then $\F$ cannot contain
the residuals, for otherwise $\paths^\F_k \equiv 1$.  (The only
exception is when $V_2$ is empty, but then $V_1$ must be empty as
well and the theorem becomes trivial.)  Moreover,
since $\paths^\F_k$ is expressible by
an expression of degree $k$ and
$(\G_1,a_1,b_1) \equiv^\F_k (\G_2,a_2,b_2)$, also 
$(a_1,b_1) \notin \paths^\F_k(\G_1)$.  But in that case we can see
that the Atoms Forth and Back, the Composition Forth and Back, as
well as the Projection Forth and Back properties are void, so
$(\G_1,a_1,b_1) \bisimeq \F k (\G_2,a_2,b_2)$ holds trivially.

Hence, the nondegenerate case is where
$(a_2,b_2) \in \paths^\F_k(\G_2)$ and
$(a_1,b_1) \in \paths^\F_k(\G_1)$, which allows us to invoke the
Charasteristic Expression Lemma.  We argue as follows.
First, we note that $(\G_1,a_1,b_1) \bisimeq \F k (\G_1,a_1,b_1)$
trivially holds; indeed, we can take the bisimulation $\bar Z$
where $Z_i = \{(a,b,a,b) \mid (a,b) \in V_1^2\}$ for each $i$.
Hence, by the Characteristic Expression Lemma, we have
$(a_1,b_1) \in e^{\F,k}_{\G_1,a_1,b_1}(\G_1)$.  Since
$(\G_1,a_1,b_1) \equiv^\F_k (\G_2,a_2,b_2)$, this implies
$(a_2,b_2) \in e^{\F,k}_{\G_1,a_1,b_1}(\G_2)$.  Again by the
Characteristic Expression Lemma this implies
$(\G_1,a_1,\allowbreak b_1) \bisimeq \F k (\G_1,a_1,b_1)$ and we are done.
\end{proof}

To conclude this section we note as an immediate corollary
that the Projection properties,
or the Left (Right) Residual properties, can be omitted
from the definition of bisimulation in those cases where the
corresponding operation is not primitive:

\begin{corol} \label{unnecessary}
\begin{itemize}
\item
Let $\C(\F)$ be a calculus fragment where $\F$
contains projection, and either converse is present as well or
1 is present at degree 0.  Let $\F'$ be $\F$ without 
  projection. Then $(\F,k)$-bisimilarity is the same
  as $(\F',k)$-bisimilarity.
\item
Similarly, when $\F$ contains the left (right) residual and
also converse, and 1 is present at degree 0 as well, then 
$(\F,k)$-bisimilarity is the same as
$(\F',k)$-bisimilarity where $\F'$ is $\F$ without left (right)
residual.
\end{itemize}
\end{corol}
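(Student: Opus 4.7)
The plan is to route both claims through the Adequacy Theorem (Theorem~\ref{water}): it suffices to prove $\indistk \F k = \indistk {\F'} k$, since then $\bisimeq \F k = \indistk \F k = \indistk {\F'} k = \bisimeq {\F'} k$. Adequacy applies to both sides because the standing assumption of Section~\ref{secbisim}---that the fragment contains complement or difference---is inherited by $\F'$ when only projection or a residual is removed. One inclusion is immediate: $\F' \subseteq \F$ yields $\C(\F')_k \subseteq \C(\F)_k$, so $\indistk \F k \subseteq \indistk {\F'} k$. For the reverse I would show, by induction on the structure of expressions, that every $e \in \C(\F)_k$ is equivalent to some $e' \in \C(\F')_k$ of the \emph{same} degree; the only nontrivial inductive step is the one handling the operator being dropped.

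For item~1, the step is handled by rewriting $\pi_i(e)$ in $\C(\F')$ via the equivalence $\pi_i(e) \equiv (e \compos \conv{e}) \cap \id$ when converse is in $\F'$, and via $\pi_i(e) \equiv (e \compos \all) \cap \id$ otherwise, in which latter case $\all$ is expressible in $\C(\F')$ by a degree-zero expression ($\all$ itself, $\di \cup \id$, or $\compl 0$). Either rewriting adds exactly one to the degree, matching $\degree(\pi_i(e)) = 1 + \degree(e)$. For item~2, I would use $e_1 \lres e_2 \equiv \compl{(\compl{e_1} \compos \conv{e_2})}$ and its mirror for $\rres$: converse is in $\F'$ by hypothesis, and complement is available in $\F'$ either directly or, when $\F$ contains difference instead of complement, via $\compl e \equiv \all - e$ using that 1 is present at degree 0. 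Again the degree is $1 + \max(\degree(e_1), \degree(e_2)) = \degree(e_1 \lres e_2)$, so the inductive invariant is preserved.

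The main point requiring care is purely bookkeeping: verifying, case by case, that the ingredients invoked in each rewriting (converse, complement, $\all$ at degree 0) really are available inside $\C(\F')$ under the stated hypotheses, so that the induction stays within $\C(\F')_k$ at the correct degree. Once that is checked, the structural induction delivers $\indistk {\F'} k \subseteq \indistk \F k$ and Adequacy closes the proof.
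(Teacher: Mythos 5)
Your proposal is correct and is precisely the argument the paper intends: the corollary is presented as an immediate consequence of the Adequacy Theorem together with the degree-preserving equivalences $\pi_i(e) \equiv (e \compos \conv e)\cap \id \equiv (e \compos \all)\cap\id$ and $e_1 \lres e_2 \equiv \compl{(\compl{e_1}\compos\conv{e_2})}$ (and its mirror) listed in Section~2, which show $\C(\F)_k$ and $\C(\F')_k$ express the same queries under the stated hypotheses. Your bookkeeping of degrees and of which ingredients survive the removal of the operator is exactly the verification the paper leaves implicit.
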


\section{Similarity and one-sided indistinguishability}
\label{secsim}

In the present section, we deal with fragments not containing
difference, for which we will capture one-sided
indistinguishability by appropriate notions of simulation between
structures.  The treatment will largely parallel that for
fragments with difference from the previous section.
Nevertheless, simulations differ from bisimulations in that they
consist of two separate sequences $\bar Z$ and $\bar W$ of
relations, one for each direction.  The two separate directions
are needed to be able to deal with the nonmonotonic operations of
coprojection and left and right residual, in the absence of
complement and difference.

When the fragment contains neither difference, nor coprojection,
nor residuals, it will be evident that the definition of
simulation boils down to a simpler situation where only the
sequence $\bar Z$ is needed.

\subsection{General definition of $(\F,k)$-simulation}
\label{subsim}

Let $\G_1$ and $\G_2$ be two structures with node sets $V_1$ and
$V_2$ respectively.  Let $\element$ be an arbitrary element of
$V_1^2 \times V_2^2$, and let $Z$ and $W$ be arbitrary subset of $V_1^2
\times V_2^2$.

In parallel to Section~\ref{subconditions}, we define a suite of
conditions, but now appropriate for calculus fragments $\C(\F)$
containing neither complement nor difference.  The Atoms Forth
and Back are unmodified.  Also the Composition Forth and Back,
and Projection Forth and Back are unmodified with respect to
their definitions in Section~\ref{subconditions}, with the
exception that they are now defined with respect to two sets $Z$
and $W$.  The Left and Right Residual Forth and Back properties,
however, are modified in that they ``cross over'' between $Z$ and
$W$.  In the same spirit we also define Coprojection Forth and
Back properties.

\begin{description}
\item[Composition and Projection Forth]
We say that $\element$ has the \emph{Composition Forth, or
Projection Forth, property at degree $i$ with respect to $Z,W$}
if $\element$ has that property at degree $i$ with respect to
the set $Z$, as defined in Section~\ref{subconditions}.
\item[Composition and Projection Back]
We say that $\element$ has the \emph{Composition Back, or
Projection Back, property at degree $i$ with respect to $Z,W$}
if $\element$ has that property at degree $i$ with respect to
the set $W$, as defined in Section~\ref{subconditions}.
\item[Left Residual Forth]
We say that $\element$ has the
\emph{Left Residual Forth property
at degree $i$ with respect to $Z,W$} if for every $c_2$ in
$V_2$ with $(b_2,c_2)$ in $\paths^\F_{i-1}(\G_2)$, there exists
$c_1$ in $V_1$ such that $(b_1,c_1,b_2,c_2) \in W$
and either $(a_1,c_1) \notin \paths^\F_{i-1}(\G_1)$ or
$(a_1,c_1,a_2,c_2) \in Z$.

\item[Left Residual Back]
We say that $\element$ has the
\emph{Left Residual Back property
at degree $i$ with respect to $Z,W$} if
for every $c_1$ in
$V_1$ with $(b_1,c_1)$ in $\paths^\F_{i-1}(\G_1)$, there exists
$c_2$ in $V_2$ such that both $(b_1,c_1,b_2,c_2) \in Z$
and either $(a_2,c_2) \notin \paths^\F_{i-1}(\G_2)$ or
$(a_1,c_1,a_2,c_2) \in W$.

\item[Right Residual Forth]
We say that $\element$ has the
\emph{Right Residual Forth property
at degree $i$ with respect to $Z,W$} if
for every
$c_2$ in $V_2$ with $(c_2,a_2)$ in $\paths^\F_{i-1}(\G_2)$, there
exists $c_1$ in $V_1$ such that both $(c_1,a_1,c_2,a_2) \in
W$ and either $(c_1,b_1) \notin \paths^\F_{i-1}(\G_1)$ or
$(c_1,b_1,c_2,b_2) \in Z$.

\item[Right Residual Back]
We say that $\element$ has the
\emph{Right Residual Back property
at degree $i$ with respect to $Z,W$} if
for every
$c_1$ in $V_1$ with $(c_1,a_1)$ in $\paths^\F_{i-1}(\G_1)$, there
exists $c_2$ in $V_2$ such that both $(c_1,a_1,c_2,a_2) \in Z$
and either $(c_2,b_2) \notin \paths^\F_{i-1}(\G_2)$ or
$(c_1,b_1,c_2,b_2) \in W$.

\item[Coprojection Forth] We say that $\element$ has the
\emph{Projection Forth property at degree $i$ with respect to $Z,W$}
if either $a_1 \neq b_1$, or $a_1=b_1$ and $a_2=b_2$ and for
every $c_2$ in $V_2$ with $(a_2,c_2)$ in $\paths_{i-1}^\F(\G_2)$,
there exists $c_1$ in $\V_1$ such that $(a_1,c_1,a_2,c_2) \in W$.
Moreover, if $a_1=b_1$, then also for every $c_2$ in $V_2$ with
$(c_2,a_2)$ in $\paths_{i-1}^\F(\G_2)$, there must exist $c_1$ in
$\V_1$ such that $(c_1,a_1,c_2,a_2) \in W$.

\item[Coprojection Back] We say that $\element$ has the
\emph{Coprojection Back property at degree $i$ with respect to
$Z,W$} if either $a_2 \neq b_2$, or $a_2=b_2$ and $a_1=b_1$ and for
every $c_1$ in $V_1$ with $(a_1,c_1)$ in $\paths_{i-1}^\F(\G_1)$,
there exists $c_2$ in $\V_2$ such that $(a_1,c_1,a_2,c_2) \in Z$.
Moreover, if $a_2=b_2$ then also for every $c_1$ in $V_1$ with
$(c_1,a_1)$ in $\paths_{i-1}^\F(\G_1)$, there must exist $c_2$ in
$\V_2$ such that $(c_1,a_1,c_2,a_2) \in Z$.

\end{description}

Now let $k$ be a natural number and
let $\bar Z = (\Zet_0, \Zet_1, \dots, \Zet_k)$
and $\bar W = (W_0, W_1, \dots, W_k)$
be decreasing sequences of relations with
$Z_0$ and $W_0$ subsets of $V_1^2 \times V_2^2$.

We now lift the above conditions to apply to such pairs $(\bar Z,\bar
W)$ of sequences.  Plainly, the Forth properties
apply to $\bar Z$ and the Back properties to $\bar W$.
\begin{itemize}
\item
We say that $(\bar Z,\bar W)$ has the Atoms Forth property if every
element of $Z_0$ has this property.
\item
We say that $(\bar Z,\bar W)$ has the Atoms Back property if every
element of $W_0$ has this property.
\item
We say that $(\bar Z,\bar W)$ has the Composition Forth, or the
Projection Forth , or the Coprojection Forth, or the Left or
Right Residual Forth property, if for every $i\in \{1,\dots,k\}$,
every element of $Z_i$ has that property at degree $i$ with respect to
$(Z_{i-1},W_{i-1})$.
\item
We say that $(\bar Z,\bar W)$ has the Composition Back, or the
Projection Back, or the Coprojection Back, or the Left or Right
Residual Back property, if for every $i\in \{1,\dots,k\}$,
every element of $W_i$ has that property at degree $i$ with respect to
$(Z_{i-1},W_{i-1})$.
\end{itemize}

We then naturally have the following:

\begin{definition} \label{sim}
We call $(\bar Z,\bar W)$ an \emph{$(\F,k)$-simulation from $\G_1$
to $\G_2$} if $(\bar Z,\bar W)$ has
\begin{itemize}
\item
the Atoms Forth and Back properties;
\item
the Composition Forth and Back properties;
\item
the Projection Forth and Back properties when $\F$ contains
projection;
\item
the Coprojection Forth and Back properties when $\F$ contains
coprojection;
\item
the Left (Right) Residual Forth and Back properties when $\F$
contains left (right) residual.
\end{itemize}

Given two marked structures $\mG_1=(\G_1,a_1,b_1)$ and
$\mG_2=(\G_2,a_2,b_2)$, when
there exists an $(\F,k)$-simulation $(\bar Z,\bar W)$ from $\G_1$ to
$\G_2$ such that $(a_1,b_1,a_2,b_2) \in Z_k$,
we say that $\mG_1$ is \emph{$(\F,k)$-similar} to $\mG_2$,
and denote this by $\mG_1 \mysim \F k \mG_2$.
\end{definition}

It is instructive to remark that, in the case where
$\F$ does not have coprojection or residual, only $\bar Z$
matters; the component $\bar W$ is then entirely redundant,
in the sense that the sequence $\emptyset, \dots,\emptyset$
($k+1$ times) would do fine, as it trivially satisfies the
Composition and Projection Back properties.
Only when coprojection or residual are present, there is a
significant interplay between $\bar Z$ and $\bar W$.

After Definition~\ref{defbisim} of bisimilarity, we observed that
there it does not matter whether or not $\di$ belongs to the
fragment.  In contrast, here this matters, since the Atoms Forth
property applies only to $\bar Z$ and the Atoms Back property
applies only to $\bar W$.

The following important property follows immediately from the
symmetries in the definition of simulation.  For any $Z \subseteq
V_1^2 \times V_2^2$, we define $\tilde Z := \{(a',b',a,b) \mid
(a,b,a',b') \in Z\}$.  We then have:

\begin{proposition} \label{symmetric}
If $(Z_0,\dots,Z_k;W_0,\dots,W_k)$
is an $(\F,k)$-simulation from $\G_1$ to $\G_2$, then
$(\tilde W_0,\dots,\tilde W_k;\tilde Z_0,\dots,\tilde Z_k)$
is an $(\F,k)$-simulation from $\G_2$ to $\G_1$.
\end{proposition}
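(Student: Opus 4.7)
The plan is a direct verification from the definitions, exploiting the symmetries that were designed into the conditions of Section~\ref{subsim}. The guiding observation is that the tilde operation swaps the two structures: $(a_2,b_2,a_1,b_1) \in \tilde W_i$ is, by definition, the same as $(a_1,b_1,a_2,b_2) \in W_i$. Consequently, a Back property of the original simulation $(\bar Z, \bar W)$, which governs the sequence $\bar W$, will correspond to a Forth property of the reversed simulation $(\tilde W, \tilde Z)$, which governs the sequence $\tilde W$, once the roles of $\G_1$ and $\G_2$ are interchanged. Symmetrically, Forth on $\bar Z$ becomes Back on $\tilde Z$.

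The verification splits naturally into three kinds of properties. First, for the Atoms Forth and Back properties, which only mention atomic types, the claim is immediate: if every element of $Z_0$ satisfies $\atype \F{\G_1,a_1,b_1} \subseteq \atype \F{\G_2,a_2,b_2}$, then every element of $\tilde Z_0$ satisfies the corresponding inclusion with the structures swapped, which is precisely the Atoms Back property required in the reversed simulation. Analogously for the Atoms Back of $\bar W$ becoming the Atoms Forth of $\tilde W$. Second, for the Composition and Projection properties, which are defined with respect to a single one of the two sequences ($Z$ for Forth, $W$ for Back, according to the convention at the beginning of Section~\ref{subsim}), the check is equally mechanical: the Composition Forth property of $\element \in Z_i$ with respect to $Z_{i-1}$ translates directly, after tildeing and swapping, into the Composition Back property of $(a_2,b_2,a_1,b_1) \in \tilde Z_i$ with respect to $\tilde Z_{i-1}$.

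The only step requiring a bit of care, and the one I view as the main obstacle, is the cross-over conditions, namely Left and Right Residual Forth/Back and Coprojection Forth/Back, where the Forth conditions on $Z$ make reference to $W$ (and vice versa). Consider, as a representative case, Left Residual Forth: an element $(a_1,b_1,a_2,b_2) \in Z_i$ requires, for each $c_2$ with $(b_2,c_2) \in \paths^\F_{i-1}(\G_2)$, the existence of $c_1$ with $(b_1,c_1,b_2,c_2) \in W_{i-1}$ and either $(a_1,c_1) \notin \paths^\F_{i-1}(\G_1)$ or $(a_1,c_1,a_2,c_2) \in Z_{i-1}$. Applying tilde and reading $\G_2$ as the first structure, this says exactly that $(a_2,b_2,a_1,b_1) \in \tilde Z_i$ has the Left Residual Back property at degree $i$ with respect to $(\tilde W_{i-1}, \tilde Z_{i-1})$, because that condition quantifies over $c_1$ (the ``first'' structure's nodes in the reversed simulation) and asks for $c_2$ such that $(b_2,c_2,b_1,c_1) \in \tilde W_{i-1}$ with the dichotomy on $(a_1,c_1,a_2,c_2) \in \tilde Z_{i-1}$. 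The analogous accounting works for Left Residual Back, both Right Residual conditions, and both Coprojection conditions; in each case the cross-over $Z \leftrightarrow W$ in the original definition is precisely what makes the Forth/Back duality close under the swap. Assembling these pairwise checks gives the proposition.
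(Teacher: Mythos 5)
Your verification is correct and matches the paper, which simply asserts that the proposition ``follows immediately from the symmetries in the definition of simulation'' and gives no further argument; your write-up is just the explicit bookkeeping behind that remark, including the correct handling of the cross-over in the Residual and Coprojection conditions. No gaps.
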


We also note the following analogue of Lemma~\ref{leminduction}:

\begin{lemma} \label{lemsiminduction}
Let $k>0$ and let $(Z_0,Z_1,\dots,Z_k;W_0,W_1,\dots,W_k)$ be an
$(\F,k)$-simulation from $\G_1$ to $\G_2$. Then
$(Z_0,\dots,Z_{k-1};W_0,\dots,W_{k-1})$
is an $(\F,k-1)$-simulation from $\G_1$ to $\G_2$.
\end{lemma}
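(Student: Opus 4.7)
The plan is to observe that the definition of $(\F,k-1)$-simulation imposes a strict subset of the conditions already required for $(\F,k)$-simulation, so truncation preserves everything automatically. First I would note that the Atoms Forth property depends only on $Z_0$ and the Atoms Back property depends only on $W_0$; since these components are unchanged in the truncated pair $(Z_0,\dots,Z_{k-1};W_0,\dots,W_{k-1})$, both Atoms conditions transfer directly from the original simulation.

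For the remaining conditions (Composition, Projection, Coprojection, Left Residual, Right Residual, in whichever Forth/Back variants are relevant for $\F$), the $(\F,k-1)$-simulation definition requires that for every $i \in \{1,\dots,k-1\}$, each element of $Z_i$ (resp.\ $W_i$) satisfies the Forth (resp.\ Back) property at degree $i$ with respect to $(Z_{i-1},W_{i-1})$. But this is precisely a subset of the requirements on the original $(\F,k)$-simulation, which imposes these same conditions for the larger range $i \in \{1,\dots,k\}$ with respect to the same pairs $(Z_{i-1},W_{i-1})$. Hence each required condition for $(\F,k-1)$-simulation is literally one of the conditions verified in the hypothesis.

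There is no genuine obstacle here: unlike many induction-style lemmas, nothing needs to be re-proved about degrees, since the conditions at degree $i$ depend only on $Z_{i-1}$ and $W_{i-1}$, which are identical in the two sequences. The proof is a one-line unpacking of Definition~\ref{sim}, and would mirror the proof of Lemma~\ref{leminduction} verbatim, now for the pair $(\bar Z,\bar W)$ instead of a single sequence $\bar Z$.
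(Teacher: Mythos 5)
Your proposal is correct and matches the paper's treatment: the paper states this lemma (like its bisimulation analogue, Lemma~\ref{leminduction}) without proof precisely because, as you observe, the conditions defining an $(\F,k-1)$-simulation are a subset of those already verified for the $(\F,k)$-simulation, with the degree-$i$ conditions depending only on $(Z_{i-1},W_{i-1})$, which are unchanged by truncation.
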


\subsection{Adequacy theorem}

We establish:

\begin{theorem}[Adequacy Theorem] \label{simwater}
For any fragment $\C(\F)$ where $\F$ contains neither complement
nor difference, we have
$\mG_1 \mysim \F k \mG_2$ if and only if $\mG_1 \contndk \F k \mG_2$.
\end{theorem}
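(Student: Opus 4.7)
The plan is to parallel the proof of Theorem~\ref{water}, decomposing it into an Invariance Lemma for the only-if direction and a Characteristic Expression Lemma (via the maximal simulation) for the if-direction. Two modifications are forced by the absence of difference and complement: the Invariance Lemma must carry two induction hypotheses in tandem, one anchored in $\bar Z$ and one in $\bar W$, because the Forth conditions for coprojection and the residuals refer to $\bar W$ while the Back conditions refer to $\bar Z$; similarly, the characteristic expressions come in two simultaneously built families, $\varphi^{\F,k}_{\mG_1}$ characterizing forth and $\psi^{\F,k}_{\mG_1}$ characterizing back.

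For the Invariance Lemma I would prove, by structural induction on $e \in \C(\F)_k$, that for every $(\F,k)$-simulation $(\bar Z,\bar W)$ from $\G_1$ to $\G_2$: (a) if $(a_1,b_1,a_2,b_2) \in Z_k$ and $(a_1,b_1) \in e(\G_1)$ then $(a_2,b_2) \in e(\G_2)$, and (b) if $(a_1,b_1,a_2,b_2) \in W_k$ and $(a_2,b_2) \in e(\G_2)$ then $(a_1,b_1) \in e(\G_1)$. The positive cases (atoms, union, intersection, composition, projection) follow as in Lemma~\ref{invariant}, using the $Z$- or $W$-branch as appropriate. The coprojection case illustrates the crossover: assuming $(a_1,b_1) \in \cpi_1(e_1)(\G_1)$ and a putative witness $c_2$ with $(a_2,c_2) \in e_1(\G_2)$, Coprojection Forth yields $c_1 \in V_1$ with $(a_1,c_1,a_2,c_2) \in W_{k-1}$, and then the $W$-branch of the induction hypothesis forces $(a_1,c_1) \in e_1(\G_1)$, a contradiction. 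The Left/Right Residual cases are treated in the same spirit, using the respective Forth/Back conditions to switch between $\bar Z$ and $\bar W$.

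Next, I would define the maximal $(\F,k)$-simulation by coupled refinement, prove the analogues of Propositions~\ref{propmax} and~\ref{pathpreserving}, and then state the Characteristic Expression Lemma: there exist $\varphi^{\F,k}_{\mG_1}, \psi^{\F,k}_{\mG_1} \in \C(\F)_k$ whose evaluations on $\G_2$ return precisely the sets of $(a_2,b_2) \in \paths^\F_k(\G_2)$ that lie in $Z_k$, respectively $W_k$, of the maximal simulation. The construction proceeds by simultaneous induction on $k$. For $k=0$, the forth expression is $\varphi^{\F,0}_{\mG_1} = \bigcap_{e \in \atype \F {\mG_1}} e$, which directly encodes Atoms Forth. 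For $k>0$, the Composition Forth and Projection Forth parts of $\varphi^{\F,k}$ are expressed as in Lemma~\ref{karakter}, using $\varphi^{\F,k-1}$-slices in place of bare relation names, while the Coprojection Forth and Residual Forth parts, which reference $\bar W$, are built using the operators $\cpi, \lres, \rres$ available in $\F$ together with $\psi^{\F,k-1}$-slices. The $\psi$ construction is completely dual. Once both families are in hand, the if-direction follows exactly as for Theorem~\ref{water}: $\mG_1 \mysim \F k \mG_1$ holds trivially via the identity simulation, so $(a_1,b_1) \in \varphi^{\F,k}_{\mG_1}(\G_1)$; one-sided indistinguishability transfers this to $(a_2,b_2) \in \varphi^{\F,k}_{\mG_1}(\G_2)$; and the characteristic property then yields $\mG_1 \mysim \F k \mG_2$. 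The degenerate case $(a_2,b_2) \notin \paths^\F_k(\G_2)$ is dispatched as in the proof of Theorem~\ref{water}.

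The main obstacle will be expressing the Back conditions (Composition Back, Projection Back, and their residual and coprojection analogues) within $\psi^{\F,k}_{\mG_1}$ without recourse to set difference. In Lemma~\ref{karakter} these were realised by explicit subtraction of unions of slice relations; here one can only lean on the restricted universals offered by coprojection ($\cpi_1(X)(s,s) \equiv \nexists t\,X(s,t)$) and by the residuals (an implication-style universal). A careful case analysis on which of $\cpi, \lres, \rres$ actually belongs to $\F$ is therefore needed for each Back property. When none of them does, $\bar W$ becomes redundant, the $\psi$ family vanishes entirely, and the construction collapses to the purely monotone case where $\varphi$ alone suffices, built using only the positive-operator part of the Lemma~\ref{karakter} template.
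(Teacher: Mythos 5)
Your overall architecture is the paper's: an Invariance Lemma for the only-if direction, a maximal simulation built by coupled refinement, a Characteristic Expression Lemma proved by simultaneous induction on $k$, and the same endgame via $\mG_1 \mysim \F k \mG_1$. Your Invariance Lemma is essentially correct; your two-part induction hypothesis (a)/(b) is equivalent to the paper's device of proving only the forth statement and invoking the symmetry $(\bar Z,\bar W) \mapsto (\tilde W,\tilde Z)$ (Proposition~\ref{symmetric}) for the crossover cases.

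There is, however, a genuine gap in the Characteristic Expression part: you specify $\psi^{\F,k}_{\mG_1}$ to return the pairs that \emph{lie in} $W_k$ of the maximal simulation. That is the wrong invariant, for two reasons. First, it is not what the level-$k$ Forth expressions consume: in the natural encodings of Coprojection Forth and Residual Forth, the relation $W_{k-1}$ occurs only inside negative positions (under $\cpi$, or in the arguments of $\lres$/$\rres$ where one needs ``$(a',b') \notin W$''), so what must be plugged in is an expression for the \emph{complement} of $W_{k-1}$ within $\faths{k-1}$. With a positive-membership $\psi^{\F,k-1}$ you would have to complement it, and difference is unavailable. Second, positive membership in $W_k$ is in general not expressible in $\C(\F)$ at all: already the base case ($(a_2,b_2)\in W_0$ iff $\atype \F{\G_2,a_2,b_2} \subseteq \atype \F{\mG_1}$) and the Composition Back property (a genuine $\forall c_2\,\exists c_1$ over intermediate nodes, applying to non-identical pairs) are universal statements that neither $\cpi$ (which only yields sub-identity relations) nor the residuals (whose universal is tied to a fixed implication template) can capture; so the ``careful case analysis on which of $\cpi,\lres,\rres$ belongs to $\F$'' that you propose will not close this. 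The paper's resolution is to flip the polarity throughout: represent $W$ by the complemented slices $\W_{a,b} = \{(a',b') \in \faths{k-1}(\G_2) \mid (a,b,a',b') \notin W\}$, and construct the second expression family $e'^{\F,k}_{\mG_1}$ to define \emph{non}-membership in $\Simback \F{\G_1}{\G_2}_k$ (within $\faths k$). Non-membership is a disjunction of failures of Back properties, and each such failure is existential, hence expressible with only $\cup$, $\cap$, $\circ$ and the operators of $\F$ applied to $Z$- and $\W$-slices --- no difference and no case analysis needed. With that corrected invariant (and the corresponding base case $e'^{\F,0}_{\mG_1} = \faths 0 \cap \bigcup_{e \notin \atype \F{\mG_1}} e$), the rest of your plan goes through as in the paper.
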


We have the following analogue of Lemma~\ref{invariant}:

\begin{lemma}[Invariance] \label{siminvariant}
If $\mG_1 \mysim \F k \mG_2$ then $\mG_1 \contndk \F k \mG_2$.
\end{lemma}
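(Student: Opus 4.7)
The plan is to mimic the proof of Lemma~\ref{invariant} but with a simultaneous induction hypothesis that tracks both sequences of the simulation. As before, by Proposition~\ref{convprop} I will first assume that converse is only applied to relation names. Let $(\bar Z,\bar W)$ be an $(\F,k)$-simulation from $\G_1$ to $\G_2$ witnessing $\mG_1 \mysim \F k \mG_2$, i.e., with $(a_1,b_1,a_2,b_2) \in Z_k$. I will prove by structural induction on $e \in \C(\F)_k$ two statements simultaneously, for all quadruples $(a_1,b_1,a_2,b_2)$:
\textbf{(a)} if $(a_1,b_1,a_2,b_2) \in Z_{\degree(e)}$ and $(a_1,b_1) \in e(\G_1)$, then $(a_2,b_2) \in e(\G_2)$; and
\textbf{(b)} if $(a_1,b_1,a_2,b_2) \in W_{\degree(e)}$ and $(a_2,b_2) \in e(\G_2)$, then $(a_1,b_1) \in e(\G_1)$.
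The desired conclusion $\mG_1 \contndk \F k \mG_2$ follows from (a). The Lemma~\ref{lemsiminduction} analogue of Lemma~\ref{leminduction} is used to descend to lower-degree bisimulation components where needed.

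The base case (atomic $e$) for (a) uses the Atoms Forth property applied to $Z_0$, while (b) uses Atoms Back on $W_0$. The cases $e_1 \cup e_2$ and $e_1 \cap e_2$ follow directly from the induction hypothesis. For monotone modalities (composition and projection), the argument is exactly as in the proof of Lemma~\ref{invariant}: to show (a) for $e_1 \circ e_2$, use Composition Forth on $Z_k$ to obtain a witness in $Z_{k-1}$ and invoke (a) inductively on $e_1,e_2$; to show (b), use Composition Back on $W_k$ to obtain a witness in $W_{k-1}$ and invoke (b) inductively. Projection Forth/Back is handled identically, using the projection conditions on $Z$ and $W$ respectively.

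The interesting cases are the non-monotone operations $\cpi$, $\lres$, $\rres$, where the definitions ``cross over'' between $Z$ and $W$. Consider coprojection, say $e = \cpi_1(e_1)$. For (a), assume $(a_1,b_1) \in \cpi_1(e_1)(\G_1)$, so $a_1 = b_1$ and no $c_1$ satisfies $(a_1,c_1) \in e_1(\G_1)$; Atoms Forth gives $a_2 = b_2$. Suppose for contradiction some $c_2 \in V_2$ has $(a_2,c_2) \in e_1(\G_2)$; by Proposition~\ref{leminpaths} $(a_2,c_2) \in \paths^\F_{k-1}(\G_2)$, so Coprojection Forth applied to $(a_1,b_1,a_2,b_2) \in Z_k$ yields $c_1 \in V_1$ with $(a_1,c_1,a_2,c_2) \in W_{k-1}$. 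By the induction hypothesis \emph{(b)} for $e_1$, $(a_1,c_1) \in e_1(\G_1)$, a contradiction. Statement (b) for coprojection is symmetric, using Coprojection Back and the induction hypothesis (a). The residual cases follow the same cross-over pattern: for (a) of $e_1 \lres e_2$, take an arbitrary $c_2$ witnessing failure of the universal condition on the $\G_2$ side, apply Left Residual Forth on $Z_k$ to obtain $c_1$ with $(b_1,c_1,b_2,c_2) \in W_{k-1}$; the induction hypothesis (b) on $e_2$ gives $(b_1,c_1) \in e_2(\G_1)$, whence by hypothesis $(a_1,c_1) \in e_1(\G_1)$, which (by Proposition~\ref{leminpaths}) forces $(a_1,c_1) \in \paths^\F_{k-1}(\G_1)$, ruling out the first disjunct of Left Residual Forth, so that $(a_1,c_1,a_2,c_2) \in Z_{k-1}$, and the induction hypothesis (a) on $e_1$ delivers $(a_2,c_2) \in e_1(\G_2)$.

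The main obstacle is precisely this bookkeeping: because neither $\bar Z$ nor $\bar W$ alone is closed under the non-monotonic operations, the induction must carry both statements at once and the path condition from Proposition~\ref{leminpaths} must be invoked at just the right moment to eliminate the ``or $(a_1,c_1) \notin \paths^\F_{i-1}(\G_1)$'' alternative in the residual conditions. Once this interplay is correctly set up, the Right Residual cases are completely analogous to the Left Residual ones, and the argument closes.
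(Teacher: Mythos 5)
Your proof is correct and follows essentially the same route as the paper's: a structural induction in which the monotone cases repeat the argument of Lemma~\ref{invariant} and the non-monotone cases (coprojection, residuals) cross over to the $W$-component, with Proposition~\ref{leminpaths} used exactly as you describe to discharge the $(a_1,c_1) \notin \paths^\F_{k-1}(\G_1)$ disjunct. The only cosmetic difference is that you carry the ``back'' statement (b) as an explicit second induction hypothesis, whereas the paper obtains it by applying its single forth-direction hypothesis to the reversed simulation $(\tilde W,\tilde Z)$ via Proposition~\ref{symmetric}; these are equivalent bookkeeping devices.
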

\begin{proof}
Let $e$ be an expression in $\C(\F)_k$.
We prove by induction on the structure
of $e$ that for marked structures
$\mG_1 = (\G_1, a_1,b_1) \mysim \F k \mG_2 = (\G_2, a_2, b_2)$
and any $(a_1,b_1) \in e(\G_1)$, we also have $(a_2,b_2) \in e(\G_2)$.

Let $\V_1$ and $\V_2$ be the node sets of the structures $\G_1$
and $\G_2$, respectively.  Let $(\bar Z,\bar W)$ be an
$(\F,k)$-simulation from $\G_1$ to $\G_2$ such that
$(a_1,b_1,a_2,b_2) \in Z_k$.

For the cases where $e$ is an atomic expression, a union, an
intersection, a composition, or a projection, the reasoning is
identical to the corresponding only-if cases in the proof of
Lemma~\ref{invariant}.

Consider the case where $e$ is $\cpi_1(e_1)$.  By definition of
coprojection, we have $a_1=b_1$, whence $a_2=b_2$ by the Atoms
Forth condition.  We have to show that there does not exist $c_2
\in V_2$ with $(a_2,c_2) \in e_1(\G_2)$.  For the sake of
contradiction, suppose there exists such $c_2$.  By
Lemma~\ref{leminpaths}, we have $(a_2,c_2) \in
\paths^\F_{k-1}(\G_2)$.  Then by the Coprojection Forth property,
there exists $c_1 \in V_1$ such that $(a_1,c_1,a_2,c_2) \in
W_{k-1}$.  Hence, by Proposition~\ref{symmetric},
Lemma~\ref{lemsiminduction}, and the
induction hypothesis, we obtain $(a_1,c_1) \in e_1(\G_1)$ which
is in contradiction with $(a_1,b_1) \in \cpi_1(e_1)(\G_1)$.

Finally, consider the case where $e$ is $e_1 \lres e_2$.  So we
have to show that $(a_2,b_2) \in e_1 \lres e_2(\G_2)$.
Thereto, let $c_2 \in V_2$ such that $(b_2,c_2) \in e_2(\G_2)$.
By Proposition~\ref{leminpaths}, we have that $(b_2,c_2)$ in
$\paths_{k-1}^\F(\G_2)$.  Then by the Left Residual Forth condition,
there exists $c_1$ in $\V_1$ such that $(b_1,c_1,b_2,c_2) \in W_{k-1}$
and either $(a_1,c_1) \notin \faths {k-1}(\G_1)$ or
$(a_1,c_1,a_2,c_2) \in Z_{k-1}$.
By induction, we have $(b_1,c_1) \in e_2(\G_1)$.  Hence, since
$(a_1,b_1) \in e_1 \lres e_2 (\G_1)$, we have $(a_1,c_1) \in
e_1(\G_1)$ so $(a_1,c_1) \in \faths {k-1}(\G_1)$.  Thus, the
above qualification
$(a_1,c_1) \notin \faths {k-1}(\G_1)$ is redundant, and 
$(a_1,c_1,a_2,c_2) \in Z_{k-1}$.  Again applying the induction
hypothesis we obtain $(a_2,c_2) \in e_1(\G_2)$ as desired.

The case of a right residual is completely analogous to that of a
left residual.
\end{proof}

In order to prove the simulation-analogue of the Characteristic
Expression Lemma, we now present the maximal simulation in
analogy to Definition~\ref{defmax}, and state its properties.

\begin{definition} \label{defmaxsim}
Given a fragment $\F$ as above and structures
$\G_1$ and $\G_2$ with node sets $V_1$ and $V_2$ respectively,
we construct two infinite decreasing sequences
$Z_0, Z_1, Z_2, \cdots$ and $W_0,W_1,W_2, \cdots$
by induction on $k$ as follows.
\begin{enumerate}
\item
$Z_0$ is the set of all elements of $V_1^2 \times V_2^2$ that have the
Atoms Forth property
relative to $\F$, $\G_1$ and $\G_2$.
\item
$W_0$ is the set of all elements of $V_1^2 \times V_2^2$ that have the
Atoms Back property (still relative to $\F$, $\G_1$ and $\G_2$).
\item
$Z_i$, for $i>0$, is the set of all elements in $Z_{i-1}$ that
have
\begin{itemize}
\item
the Composition Forth property
at degree $i$ with respect to $Z_{i-1}$;
\item
the Projection Forth property
at degree $i$ with respect to $Z_{i-1}$,
if $\F$ contains projection;
\item
the Coprojection Forth property at degree $i$ with respect to
$Z_{i-1},W_{i-1}$, if $\F$ contains coprojection;
\item
the Left (Right) Residual Forth property
at degree $i$ with respect to $Z_{i-1},W_{i-1}$, if $\F$ contains left
(right) residual.
\end{itemize}
\item
Similarly, $W_i$, for $i>0$, is the set of all elements in $W_{i-1}$ that
have
\begin{itemize}
\item
the Composition Back property
at degree $i$ with respect to $W_{i-1}$;
\item
the Projection Back property
at degree $i$ with respect to $W_{i-1}$,
if $\F$ contains projection;
\item
the Coprojection Back property at degree $i$ with respect to
$Z_{i-1},W_{i-1}$, if $\F$ contains coprojection;
\item
the Left (Right) Residual Back property
at degree $i$ with respect to $Z_{i-1},W_{i-1}$, if $\F$ contains left
(right) residual.
\end{itemize}
\end{enumerate}
We denote the constructed sequence $Z_0,Z_1,\dots$ by
$\Simforth \F {\G_1} {\G_2}$, and $W_0,W_1,\dots$ by
$\Simback \F {\G_1} {\G_2}$.
\end{definition}

\begin{proposition} \label{propmaxsim}

Let $\Simforth \F {\G_1} {\G_2} = Z_0, Z_1, \dots$ and $\Simback
\F {\G_1} {\G_2} = W_0, \allowbreak W_1, \dots$.  Then for each natural
number $k$, the pair of sequences $(Z_0, Z_1, \dots,
Z_k;\allowbreak W_0,W_1,\allowbreak \dots,W_k)$ is an $(\F,k)$-simulation
from $\G_1$ to $\G_2$.  Furthermore, it is the maximal simulation
in the sense that, for any other such $(\F,k)$-bisimulation
$(Z'_0, \dots, Z'_k;\allowbreak W'_0,\dots,W'_k)$, we have $Z'_i
\subseteq Z_i$ and $W'_i \subseteq W_i$ for each $i=0,\dots,k$.

\end{proposition}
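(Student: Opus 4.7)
The plan is to mirror the argument for Proposition~\ref{propmax}: the statement has two parts, and both reduce to straightforward verification against Definitions~\ref{sim} and~\ref{defmaxsim}.

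First I would show that $(Z_0,\dots,Z_k;W_0,\dots,W_k)$ is indeed an $(\F,k)$-simulation from $\G_1$ to $\G_2$. The sequences are decreasing by construction, since $Z_i$ is defined as a subset of $Z_{i-1}$ and $W_i$ as a subset of $W_{i-1}$. The base elements $Z_0$ and $W_0$ are defined to contain exactly those tuples with the Atoms Forth and Atoms Back property, respectively. For $i\ge 1$, the construction explicitly intersects $Z_{i-1}$ with the set of tuples satisfying each applicable Forth property at degree $i$ with respect to $(Z_{i-1},W_{i-1})$, and symmetrically for $W_i$ with respect to the Back properties. This matches exactly what Definition~\ref{sim} requires of an $(\F,k)$-simulation.

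For maximality, I would prove by induction on $i\in\{0,\dots,k\}$ that for any $(\F,k)$-simulation $(Z'_0,\dots,Z'_k;W'_0,\dots,W'_k)$ from $\G_1$ to $\G_2$, we have $Z'_i \subseteq Z_i$ and $W'_i \subseteq W_i$. The base case is immediate because $Z_0$ and $W_0$ are, by construction, the \emph{largest} sets satisfying Atoms Forth and Atoms Back. For the step, fix $(a_1,b_1,a_2,b_2) \in Z'_i$. Since $\bar Z'$ is decreasing, this tuple lies in $Z'_{i-1}$, and hence in $Z_{i-1}$ by the induction hypothesis. The simulation hypothesis tells us that this tuple has each applicable Forth property at degree $i$ with respect to $(Z'_{i-1},W'_{i-1})$. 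The crucial observation is that every Forth and Back property listed in Section~\ref{subsim} is \emph{existentially monotone} in the witness sets: each is of the form ``for every $c$ in a set defined by $\paths^\F_{i-1}$ there exists $c'$ such that certain tuples lie in $Z$ and/or $W$ (possibly with a disjunctive alternative involving only non-membership in $\paths^\F_{i-1}$, which is independent of $Z,W$)''. Enlarging $Z,W$ therefore preserves the existence of the witness $c'$, so the property also holds with respect to $(Z_{i-1},W_{i-1})$, placing $(a_1,b_1,a_2,b_2)$ in $Z_i$ by Definition~\ref{defmaxsim}. The argument for $W'_i \subseteq W_i$ is entirely symmetric.

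I do not anticipate any serious obstacle. The only point requiring mild care is the cross-over clauses for the Residual and Coprojection Forth properties, which reference \emph{both} $Z$ and $W$ and therefore invoke both halves of the induction hypothesis at once; but since the conditions are still existentially positive in each of $Z$ and $W$ separately, the monotonicity argument goes through unchanged.
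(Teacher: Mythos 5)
Your proof is correct and matches the paper's intent: the paper states this proposition without proof, regarding it (like its bisimulation analogue, Proposition~\ref{propmax}) as following immediately from Definitions~\ref{sim} and~\ref{defmaxsim}, and your verification---in particular the key observation that every Forth and Back property is monotone in the witness sets $Z$ and $W$, which drives the maximality induction---is exactly the argument being left implicit.
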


\begin{corol} \label{cormaxsim}
$(\G_1,a_1,b_1) \mysim \F k
(\G_2,a_2,b_2) \; \Leftrightarrow \; (a_1,b_1,a_2,b_2) \in
\Simforth \F {\G_1} {\G_2}_k$.
\end{corol}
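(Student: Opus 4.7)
The plan is to derive Corollary~\ref{cormaxsim} directly from Proposition~\ref{propmaxsim}, which has just been stated and asserts that the pair of sequences $(\Simforth \F {\G_1} {\G_2},\Simback \F {\G_1} {\G_2})$ truncated at level $k$ is itself an $(\F,k)$-simulation from $\G_1$ to $\G_2$, and is maximal among all such simulations. Since the corollary merely packages this fact in terms of the relation $\mysim \F k$, no new construction is needed; each direction corresponds to one half of Proposition~\ref{propmaxsim}.

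For the ($\Leftarrow$) direction, I would assume $(a_1,b_1,a_2,b_2) \in \Simforth \F {\G_1} {\G_2}_k$ and simply take the witness simulation to be $(Z_0,\dots,Z_k;W_0,\dots,W_k)$ with $Z_i = \Simforth \F {\G_1} {\G_2}_i$ and $W_i = \Simback \F {\G_1} {\G_2}_i$. By Proposition~\ref{propmaxsim} this is an $(\F,k)$-simulation, and it contains the required tuple in its top-level forth component, so $(\G_1,a_1,b_1) \mysim \F k (\G_2,a_2,b_2)$ follows immediately from Definition~\ref{sim}.

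For the ($\Rightarrow$) direction, I would start from the existence of some $(\F,k)$-simulation $(\bar Z',\bar W')$ with $(a_1,b_1,a_2,b_2) \in Z'_k$, and then invoke the maximality assertion of Proposition~\ref{propmaxsim}: $Z'_i \subseteq \Simforth \F {\G_1} {\G_2}_i$ for every $i \le k$. Taking $i=k$ yields $(a_1,b_1,a_2,b_2) \in \Simforth \F {\G_1} {\G_2}_k$, as required.

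There is really no main obstacle here, as the work has been pushed into Proposition~\ref{propmaxsim}. The only thing one must check is that the definitional bookkeeping matches: the maximal simulation is an \emph{infinite} pair of sequences, so strictly speaking one uses its truncation to indices $0,\dots,k$ as the witness in Definition~\ref{sim}, and one relies on the fact that being a member of $Z_k$ (or $W_k$) in the infinite construction is, by Definition~\ref{defmaxsim}, determined only by $Z_0,\dots,Z_{k-1}$ and $W_0,\dots,W_{k-1}$, so the truncation is indeed an $(\F,k)$-simulation and the maximality comparison in Proposition~\ref{propmaxsim} applies.
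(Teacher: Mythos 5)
Your proof is correct and matches the paper's intent exactly: the paper states this as an immediate corollary of Proposition~\ref{propmaxsim}, with the backward direction given by the truncated maximal simulation serving as the witness and the forward direction given by maximality. Your extra remark about the truncation being well-defined (membership in $Z_k$ depending only on levels below $k$) is a sensible bit of bookkeeping that the paper leaves implicit.
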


\begin{proposition}[Path Preservation]
\label{simpathpreserving}
Let $\Simforth \F {\G_1} {\G_2} = Z_0, Z_1, \dots$ and $\Simback
\F {\G_1} {\G_2} = W_0, W_1, \dots$, and let $i$ be a natural
number.  If $(a_1,b_1,a_2,b_2) \in Z_i$
and $(a_1,b_1) \in \paths^\F_i(\G_1)$, then
also $(a_2,b_2) \in \paths^\F_i(\G_2)$.
Similarly, if
$(a_1,b_1,a_2,b_2) \in W_i$
and $(a_2,b_2) \in \paths^\F_i(\G_2)$, then
also $(a_1,b_1) \in \paths^\F_i(\G_1)$.
\end{proposition}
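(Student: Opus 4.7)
The plan is to mimic the proof of Proposition~\ref{pathpreserving}, but to establish the two implications by a simultaneous induction on $i$. A key preliminary observation is that the Composition Forth property defining $Z_i$ refers only to $Z_{i-1}$, while the Composition Back property defining $W_i$ refers only to $W_{i-1}$; so the Forth claim (about $Z_i$) and the Back claim (about $W_i$) do not interact, and one could even prove them by two independent inductions. I keep them in a single induction for uniformity.

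For the base case $i=0$, I would proceed exactly as in Proposition~\ref{pathpreserving}: the claim for $Z_0$ follows from the Atoms Forth property built into its definition, and the claim for $W_0$ from the Atoms Back property. Concretely, if $(a_1,b_1) \in \paths^\F_0(\G_1)$, then by Lemma~\ref{lempaths} there is some $e \in \aexp(\F)$ with $(a_1,b_1) \in e(\G_1)$, hence $e \in \atype \F{\G_1,a_1,b_1} \subseteq \atype \F{\G_2,a_2,b_2}$, so $(a_2,b_2) \in e(\G_2) \subseteq \paths^\F_0(\G_2)$; and symmetrically for $W_0$.

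For the inductive step $i>0$, I first observe that if $1$ is present in $\C(\F)$, the claim is trivial since then $\paths^\F_i(\G) = V^2$. So I may assume $1$ is absent, which gives the equivalence $\paths^\F_i \equiv \paths^\F_{i-1} \cup (\paths^\F_{i-1} \circ \paths^\F_{i-1})$ from the proof of Lemma~\ref{lempaths}. For the $Z_i$ claim, I split on this union: if $(a_1,b_1)$ already lies in $\paths^\F_{i-1}(\G_1)$, then since $Z_i \subseteq Z_{i-1}$, the induction hypothesis gives $(a_2,b_2) \in \paths^\F_{i-1}(\G_2) \subseteq \paths^\F_i(\G_2)$; otherwise, decompose via some $c_1 \in V_1$ with $(a_1,c_1),(c_1,b_1) \in \paths^\F_{i-1}(\G_1)$, apply the Composition Forth property of $(a_1,b_1,a_2,b_2) \in Z_i$ with respect to $Z_{i-1}$ to obtain $c_2 \in V_2$ with both $(a_1,c_1,a_2,c_2),(c_1,b_1,c_2,b_2) \in Z_{i-1}$, and conclude by the induction hypothesis that $(a_2,c_2),(c_2,b_2) \in \paths^\F_{i-1}(\G_2)$, hence $(a_2,b_2) \in \paths^\F_i(\G_2)$. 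The $W_i$ claim is handled by the mirror argument, splitting on $(a_2,b_2) \in \paths^\F_{i-1}(\G_2)$ versus a decomposition through some $c_2 \in V_2$, and invoking Composition Back instead of Composition Forth.

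There is no real obstacle here: the structure is identical to Proposition~\ref{pathpreserving}, and the only thing to watch is bookkeeping, namely that the Forth/Back split of the simulation definition lines up with the two halves of the statement (Forth conditions govern $Z_i$ and hence the path-transport from $\G_1$ to $\G_2$; Back conditions govern $W_i$ and the reverse transport), and that neither half requires the other. In particular, the Coprojection, Projection, and Residual conditions — even though the residual conditions do ``cross over'' between $Z$ and $W$ — play no role in the argument, since only Composition Forth/Back are needed to witness the two-step concatenation expressing $\paths^\F_i$.
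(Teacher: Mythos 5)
Your proof is correct and takes exactly the route the paper intends: the paper states Proposition~\ref{simpathpreserving} without proof, leaving it as the evident analogue of Proposition~\ref{pathpreserving}, and your argument is precisely that analogue (induction on $i$, base case from the Atoms properties, inductive step via the recurrence $\paths^\F_i \equiv \paths^\F_{i-1} \cup (\paths^\F_{i-1} \circ \paths^\F_{i-1})$ and the Composition Forth/Back conditions, with the trivial case when $1$ is present). Your added observation that the two halves decouple because Composition Forth for $Z_i$ refers only to $Z_{i-1}$ and Composition Back for $W_i$ only to $W_{i-1}$ is accurate and worth recording.
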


We are now ready for:

\begin{lemma}[Characteristic Expression] \label{karaktersim}
Let $k$ be a natural number and let $\mG_1 = (\G_1, a_1, b_1)$ be
a marked structure.  Then there
exists an expression $e^{\F,k}_{\mG_1}$ in $\C(\F)_k$ such that
for every structure $\G_2$ we have
\[ e^{\F,k}_{\mG_1}(\G_2) = \{
(a_2,b_2) \in \paths^\F_k(\G_2) \mid \mG_1 \mysim \F k
(\G_2,a_2,b_2) \}.  \]
\end{lemma}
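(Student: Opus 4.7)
The plan is to mirror the approach of Lemma~\ref{karakter} but to account for the asymmetry of simulations. Because the Forth conditions of an $(\F,k)$-simulation that involve coprojection or residuals reference the companion sequence $\bar W$, I will proceed by simultaneous induction on $k$, constructing \emph{two} families of expressions in $\C(\F)_k$: the required $e^{\F,k}_{\mG_1}$, capturing the $Z_k$-slice of $\mG_1$, and an auxiliary $\bar e^{\F,k}_{\mG_1}$ capturing the \emph{complement within $\paths^\F_k$} of the $W_k$-slice of $\mG_1$. The point of working with the complement of the $W$-slice rather than with the $W$-slice itself is positivity: the failure of Atoms Back is naturally a positive disjunction $\bigcup_{e \in \aexp(\F) - \atype{\F}{\mG_1}} e$ over missing atoms, and is hence directly expressible in any fragment, whereas the $W_0$-slice itself calls for negating atoms and is in general not expressible in positive fragments.

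For the base case $k=0$, I set $e^{\F,0}_{\mG_1} := \bigcap_{e \in \atype{\F}{\mG_1}} e$ (with empty intersection yielding $\paths^\F_0$) and $\bar e^{\F,0}_{\mG_1} := \bigcup_{e \in \aexp(\F) - \atype{\F}{\mG_1}} e$. For the inductive step $k > 0$, and in analogy with Lemma~\ref{karakter}, I will define for each Forth property $P$ an infinitary expression $\psigenp$ over the vocabulary expanded with relation symbols $Z_{a,c}$ and $\bar W_{a,c}$, and then substitute $e^{\F,k-1}_{\G_1,a,c}$ for $Z_{a,c}$ and $\bar e^{\F,k-1}_{\G_1,a,c}$ for $\bar W_{a,c}$; the required $e^{\F,k}_{\mG_1}$ is then obtained by intersecting $e^{\F,k-1}_{\mG_1}$ with all such substituted expressions. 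The dual family $\bar e^{\F,k}_{\mG_1}$ is constructed symmetrically, as a union of expressions witnessing the failure of some Back property at degree $k$. The universal/existential structure of each property can be matched by the operators in $\F$: Composition and Projection Forth are expressed as in Lemma~\ref{karakter} using only the $Z_{a,c}$-symbols; Coprojection Forth (when $a_1 = b_1$) takes the form $\id \cap \cpi_1(\bigcap_{c_1 \in V_1} \bar W_{a_1,c_1})$, using $\cpi_1$ to realize the outer universal quantifier over a positive intersection of complement-slices; Residual Forth is expressed with $\lres$ or $\rres$ after a combinatorial splitting of $V_1$ into subsets. The potentially infinitary unions and intersections are finally collapsed to finitary ones by the same finite-up-to-equivalence argument used at the end of Lemma~\ref{karakter}.

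The main obstacle is expressing Residual Forth (and dually, Residual Back in $\bar e^{\F,k}_{\mG_1}$), where the existential quantifier over $c_1 \in V_1$ simultaneously couples a $W$-condition on $(b_2,c_2)$ and a $Z$-condition on $(a_2,c_2)$. Resolving this coupling will require a case-splitting over subsets $V \subseteq V_1$ similar to the treatment of Composition Back in Lemma~\ref{karakter}, with the case analysis encoded via $\lres$/$\rres$ together with positive combinations of the $\bar e^{\F,k-1}$-expressions (for the $W$-side) and the $e^{\F,k-1}$-expressions (for the $Z$-side). Verifying that every constructed expression indeed lies in $\C(\F)_k$ and faithfully captures its intended property --- particularly across all permitted combinations of operators in $\F$ --- is the most delicate bookkeeping in the argument.
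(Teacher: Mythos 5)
Your proposal follows essentially the same route as the paper's proof: you construct, by simultaneous induction, both $e^{\F,k}_{\mG_1}$ for the $Z_k$-slice of the maximal simulation and an auxiliary expression for the complement (within $\paths^\F_k$) of the $W_k$-slice, which is precisely the paper's device for keeping everything positive, and you handle coprojection via $\cpi$ applied to intersections of complemented $W$-slices and the residual properties via a case split over subsets of $V_1$, just as the paper does. The remaining differences (e.g., omitting the $\cpi_2$ half of the Coprojection Forth expression, or the explicit $\faths{k-1}$ guards) are presentational details rather than gaps in the approach.
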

\begin{proof}
Analogous to the proof of Lemma~\ref{karakter}, our approach is now based
on Corollary~\ref{cormaxsim} and will
show that each of the properties $P$ involved in
Definition~\ref{defmaxsim}
is expressible in $\C(\F)_k$.  Since these properties are now
with respect to two sets $Z$ and $W$, and since the expressions
cannot use complement or difference, we must adapt the approach
from the proof of Lemma~\ref{karakter} as follows.
Let $V_1$ be the node set of the fixed
structure $\G_1$, and let $V_2$ be the node set of any structure
$\G_2$ to which our expressions will be applied.  As in the proof
of Lemma~\ref{karakter}, we 
represent the first set $Z \subseteq V_1^2 \times V_2^2$ by the family of
binary relations consisting of, for each $(a,b) \in V_1^2$, the
binary relation $$ Z_{a,b} =
\{(a',b') \in \faths {k-1}(\G_2) \mid (a,b,a',b') \in Z\}. $$
The second set $W \subseteq V_1^2 \times V_2^2$, however, is represented in a
complementary manner.  Specifically, we represent $W$ by the
family of binary relations consisting of, for each $(a,b) \in
V_1^2$, the binary relation
$$ \W_{a,b} = \{(a',b') \in \paths^\F_{k-1}(\G_2) \mid (a,b,a',b')
\notin W\}. $$
The underscore is used to remind us that the relations $\W$ give
us the complement of $W$.

As before we can add all these
relations to $\G_2$, yielding an expanded structure, denoted by
$(\G_2,Z,W)$, over the expansion of the given vocabulary $\Lambda$
with relation names $Z_{a,b}$ and $\W_{a,b}$ for each $(a,b) \in V_1^2$.
Furthermore, for our purpose, it will be sufficient to 
assume that $Z$ and $W$ are path-preserving in the sense of
Proposition~\ref{simpathpreserving}.
Now we are going to
express each property $P$ by an expression $\psigenp$
over the expanded vocabulary, in the following sense.
\begin{itemize}
\item
If $P$ is a Forth property, then for any $\G_2$ and any
$Z$ and $W$ as above, $\psigenp$ applied to $(\G_2,Z,W)$
returns the set of pairs $(a_2,b_2) \in \paths^\F_k(\G_2)$
for which $(a_1,b_1,a_2,b_2)$ satisfies the $P$ property at
degree $k$ with respect to $Z$, relative to $\G_1$, $\G_2$, and $\F$.
\item
Complementarily,
if $P$ is a Back property, then $\psigenp$ applied to
$(\G_2,Z,W)$
returns the set of pairs $(a_2,b_2) \in \paths^\F_k(\G_2)$
for which $(a_1,b_1,a_2,b_2)$ does \emph{not} satisfy the $P$
property.
\end{itemize}

It turns out that the required expressions $\psigenp$ can be
deduced easily from the expressions $\phigenp$ given in the proof
of Lemma~\ref{karakter}.  Indeed, in all these expressions,  it
turns out that whenever in some property a $Z$ must be changed to
a $W$ due to the new modified properties that cross over between
$Z$ and $W$, we already used subexpressions of the form $\faths
{k-1} - Z_{a,b}$ in the right places.  Hence, it suffices to
replace these subexpressions by $\W_{a,b}$ to obtain $\psigenp$
from $\phigenp$.  Moreover, the expressions for the Back
properties already were expressed as complements relative to
$\faths k$; there we can simply keep the parts on the right-hande
side of the set difference operator.
The correctness proofs are then completely analogous.
Specifically, the required expressions are as follows.

\begin{flushleft}
\quad
$\displaystyle
\psicompforth := \faths k \cap \bigcap_{\substack{c_1 \in \V_1\\(a_1,c_1) \in
            \paths_{k-1}^\F(\G_1)\\(c_1,b_1) \in \paths_{k-1}^\F(\G_1)}}
Z_{a_1,c_1} \circ Z_{c_1,b_1}
$
\end{flushleft}
\begin{flushleft}
\quad
$\displaystyle
\psicompback
    :=
    \paths^\F_{k} \cap
    \bigcup_{V \subseteq \V_1} \bigl (
  (\bigcap_{c_1 \in \V} \W_{a_1,c_1})
  \; \compos \;
  (\bigcap_{c_1 \in \V_1 - V} \W_{c_1,b_1}) \bigr )
$
\end{flushleft}
\begin{multline*}
\psiprojforth := \id \\
{} \cap \bigcap_{\substack{c_1 \in \V_1\\(a_1,c_1) \in
            \paths_{k-1}^{\F}(\G_1)}} \pi_1(Z_{a_1,c_1})
	    \; \cap \bigcap_{\substack{c_1 \in \V_1\\(c_1,a_1) \in
            \paths_{k-1}^{\F}(\G_1)}}
	    \pi_2(Z_{c_1,a_1})
\end{multline*}
\begin{flushleft}
\quad
$\displaystyle
\psiprojback := \id \cap
(
\pi_1 ( \bigcap_{c_1 \in \V_1} \W_{a_1,c_1})
\cup
\pi_2 ( \bigcap_{c_1 \in \V_1} \W_{c_1,a_1})
)
$
\end{flushleft}
\begin{flushleft}
\quad
$\displaystyle
\psicprojforth := 
\cpi_1(\faths {k-1} \cap \bigcap_{c_1 \in V_1} \W_{a_1,c_1})
\cap
\cpi_2(\faths {k-1} \cap \bigcap_{c_1 \in V_1} \W_{c_1,a_1})
$
\end{flushleft}
\begin{flushleft}
\quad
$\displaystyle
\psicprojback := 
\cpi_1(\bigcup_{c_1 \in V_1} Z_{a_1,c_1})
\cup
\cpi_2(\bigcup_{c_1 \in V_1} Z_{c_1,a_1})
$
\end{flushleft}
\begin{multline*}
\psilresforth
    := \faths k \\ 
    {} \cap
    \bigcap_{\V \subseteq \V_1} \Bigl [
      ( \bigcup_{c_1 \in \V_1-\V}  Z_{a_1,c_1} )
\; \lres {} \\
( \paths_{k-1}^\F
 \cap \bigcap_{\substack{c_1 \in V_1 \\ (a_1,c_1) \notin
\paths^\F_{k-1}(\G_1)}} \W_{b_1,c_1}
	\cap \bigcup_{c_1 \in V} \W_{b_1,c_1} ) \Bigr ]
\end{multline*}
\begin{multline*}
\psirresforth
    := \faths k \\
    {} \cap
    \bigcap_{\V \subseteq \V_1} \Bigl [
      ( \paths_{k-1}^\F
      \cap \bigcap_{\substack{c_1 \in V_1 \\ (c_1,b_1) \notin \faths
	{k-1}(\G_1)}} \W_{c_1,a_1}
      \cap \bigcup_{c_1 \in \V} \W_{c_1,a_1} ) \\
        {} \rres \; \bigcup_{c_1 \in \V_1-\V} Z_{c_1,b_1} \Bigr ]
\end{multline*}
\begin{flushleft}
\quad
$\displaystyle
\psilresback
    := \paths^\F_k \; \cap
        \bigcup_{\substack{c_1\in \V_1 \\ (b_1,c_1) \in
	\paths^\F_{k-1}(\G_1)}} \W_{a_1,c_1} \lres Z_{b_1,c_1} 
$
\end{flushleft}
\begin{flushleft}
\quad
$\displaystyle
\psirresback
    := \paths^\F_k \; \cap
        \bigcup_{\substack{c_1\in \V_1 \\
	(c_1,a_1) \in \paths^\F_{k-1}(\G_1)}}
	Z_{c_1,a_1} \rres \W_{c_1,b_1}
$
\end{flushleft}

We are now ready to present the construction of 
the required expression $e^{\F,k}_{\mG_1}$, by induction on $k$.
Actually, we will simultaneously construct an expression
$e'^{\F,k}_{\mG_1}$ with the property that for every $\G_2$ we
have
\[ e'^{\F,k}_{\mG_1}(\G_2) = \{
(a_2,b_2) \in \paths^\F_k(\G_2) \mid (a_1,b_1,a_2,b_2) \notin
\Simback \F {\G_1} {\G_2}_k \}, \]
thus complementing expression 
$e^{\F,k}_{\mG_1}$ which must satisfy
\[ e^{\F,k}_{\mG_1}(\G_2) = \{
(a_2,b_2) \in \paths^\F_k(\G_2) \mid
(a_1,b_1,a_2,b_2) \in \Simforth \F {\G_1} {\G_2}_k \}. \]

For the base of the construction, we put
\begin{align*}
& e^{\F,0}_{\mG_1} := \faths 0 \cap \bigcap_{e \in \atype \F{\mG_1}} e ; \\
& e'^{\F,0}_{\mG_1} := \faths 0 \cap \bigcup_{e \notin \atype \F{\mG_1}} e .
\end{align*}

For $k>0$, expression $e^{\F,k}_{\mG_1}$, applied to any $\G_2$,
should return the set of all pairs $(a_2,b_2) \in \faths k(\G_2)$
such that $(a_1,b_1,a_2,b_2) \in \Simforth \F {\G_1} {\G_2}_k$.
By definition, these are the pairs $(a_2,b_2)$ for which
$(a_1,b_1,a_2,b_2) \in \Simforth \F {\G_1} {\G_2}_{k-1}$, and
such that $(a_1,b_1,a_2,b_2)$ has all the Forth properties
required by Definition~\ref{defmaxsim}, at degree $k$ with
respect to $\Simforth \F {\G_1} {\G_2}_{k-1}, \Simback \F {\G_1}
{\G_2}_{k-1}$.  We have just seen that each Forth property $P$ is
expressible by $\psigenp$.  Hence, we can obtain
$e^{\F,k}_{\mG_1}$ simply as the intersection of
$e^{\F,k-1}_{\mG_1}$ (obtained by induction) and the expressions
$\psigenp$ for the different Forth properties $P$ required by
Definition~\ref{defmaxsim}.

Complementarily, expression $e'^{\F,k}_{\mG_1}$ should return the
set of all pairs $(a_2,b_2) \in \faths k(\G_2)$ such that
$(a_1,b_1,a_2,b_2) \notin \Simback \F {\G_1} {\G_2}_k$.  This
means that $(a_1,b_1,\allowbreak a_2,b_2)$ must \emph{not} satisfy at least
one of the Back properties required by
Definition~\ref{defmaxsim}, at degree $k$ with respect to
$\Simforth \F {\G_1} {\G_2}_{k-1}, \Simback \F {\G_1}
{\G_2}_{k-1}$.  We have just seen that the complement of each
Back property $P$ is expressible by $\psigenp$.  Hence, we can
obtain $e^{\F,k}_{\mG_1}$ simply as the intersection of $\faths
k$ with the union of the expressions $\psigenp$ for the different
Back properties $P$ required by Definition~\ref{defmaxsim}.

In analogy to the proof of Lemma~\ref{karakter}, in the resulting
expressions, we replace each relation name $Z_{a,b}$ by 
$e^{\F,k-1}_{\G_1,a,b}$; furthermore, we replace each relation
name $\W_{a,b}$ by $e'^{\F,k-1}_{\G_1,a,b}$.
The reduction to a finitary expression, based on quantifier
rank, is exactly as in the proof of Lemma~\ref{karakter}, and we
are done.
\end{proof}

\section{Indistinguishability of finite structures}

The bisimilarity characterizations we have given of when two
structures are indistinguishable by expressions of $\C(\F)_k$,
for some fixed degree $k$ and some fixed fragment $\F$, are valid
for arbitrary structures.  For finite structures, by classical
arguments \cite{hennessymilner,gorankotto,blackburn_modallogic},
our methods lead immediately to Hennessy--Milner-style theorems
about indistinguishability in the full fragment $\C(\F)$, without
a degree restriction, as we will show in the present section.  It
also follows that indistinguishability in $\C(\F)$ is decidable
in polynomial time.

\subsection{Bisimulation without degree restriction}

Let $\C(\F)$ be a fragment containing complement or difference.
We want to define a natural notion of $\F$-bisimulation without a
degree restriction $k$.
Thereto we must make two small adaptations.
\begin{enumerate}
\item
Recalling Definition~\ref{defgraph}, 
let us define $\fpaths(\G)$ as the set of all pairs $(x,y)$ in
$V^2$ such that
\begin{itemize}
\item
there is a directed path from $x$ to $y$ in $\graph \G$, if $\F$
does not contain converse; or
\item
there is an undirected path from $x$ to $y$ in $\ugraph \G$, if $\F$
contains converse.
\end{itemize}
\item
Recall the suite of Forth and Back properties
introduced in Section~\ref{subconditions}.  We naturally
introduce variants of these properties that are no longer
degree-restricted.  It suffices to replace each reference to
$\faths i$ or $\faths {i-1}$
by $\fpaths$, so that the degree parameter $i$ becomes
irrelevant.
\end{enumerate}

We now define an \emph{$\F$-bisimulation from $\G_1$ to $\G_2$}
as a relation $Z \subseteq V_1^2 \times V_2^2$ such that each of
its elements has the Atoms Forth and Back properties, as well as
the (degree-unrestricted versions of the) Composition Forth and
Back properties with respect to $Z$, and also the Projection
(Left Residual, Right Residual) Forth and Back properties with
respect to $Z$ depending on whether $\F$ contains projection
(left residual, right residual), as usual.  When there exists an
$\F$-bisimulation from $\G_1$ to $\G_2$ containing
$(a_1,b_1,a_2,b_2)$, we say that $\mG_1=(\G_1,a_1,b_1)$ and
$\mG_2=(\G_2,a_2,b_2)$ are \emph{$\F$-bisimilar} and denote this
by $\G_1 \simeq^\F \G_2$.

We show:

\begin{theorem} \label{hennessymilner}
For finite marked structures $\mG_1$ and $\mG_2$,
we have $\mG_1 \simeq^\F \mG_2$ if
and only if $\mG_1 \indist \F \mG_2$.
(The only-if implication actually holds for all structures,
finite or infinite.)
\end{theorem}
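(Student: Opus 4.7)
The plan is to handle the two directions separately, with the only-if direction working for arbitrary structures and the if-direction crucially exploiting finiteness in two distinct ways.

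For the only-if direction, I would show that any $\F$-bisimulation $Z$ gives rise, for each $k$, to an $(\F,k)$-bisimulation via the constant sequence $(Z, Z, \dots, Z)$ of length $k{+}1$. The Atoms Forth/Back properties are literally the same in both formulations, so they transfer trivially. For each remaining property (Composition, Projection, Left/Right Residual, in whichever Forth/Back version is required), the only syntactic difference between the degree-unrestricted and the degree-$i$ formulation is that $\fpaths(\G)$ is replaced by $\faths{i-1}(\G)$ in the hypothesis of a universal quantifier over witness nodes $c_1$ or $c_2$. Since $\faths{i-1}(\G) \subseteq \fpaths(\G)$, the quantifier in the degree-bounded version ranges over a smaller set, so the unrestricted property implies the bounded one with $Z_i = Z_{i-1} = Z$. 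Then $(a_1,b_1,a_2,b_2) \in Z = Z_k$ gives $\mG_1 \bisimeq \F k \mG_2$ for every $k$, and Theorem~\ref{water} yields $\mG_1 \indistk \F k \mG_2$ for every $k$, hence $\mG_1 \indist \F \mG_2$.

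For the if-direction, suppose $\mG_1 \indist \F \mG_2$, so that $\mG_1 \indistk \F k \mG_2$ for every $k$. By Theorem~\ref{water} and Corollary~\ref{cormax}, setting $Z_k := \BiSim \F {\G_1} {\G_2}_k$, we have $(a_1,b_1,a_2,b_2) \in Z_k$ for every $k$. The sequence $Z_0 \supseteq Z_1 \supseteq \dots$ is a decreasing chain inside the finite set $V_1^2 \times V_2^2$, so there is an index $K_0$ past which the sequence is constant. Separately, finiteness of $V_1$ and $V_2$ guarantees an index $K_1$ such that $\faths k(\G_i) = \fpaths(\G_i)$ for all $k \geq K_1$ and $i \in \{1,2\}$ (once $2^k$ exceeds the diameter of $\graph{\G_i}$ or of $\ugraph{\G_i}$, depending on whether converse is in $\F$; and immediately when $1$ is present in $\F$). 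Set $K := \max(K_0, K_1)$.

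The claim is that $Z^* := Z_K$ is an $\F$-bisimulation in the degree-unrestricted sense. Because $Z_{K+1} = Z_K$ under the refinement of Definition~\ref{defmax}, every element of $Z^*$ satisfies each of the required Forth/Back properties at degree $K{+}1$ with respect to $Z_K = Z^*$. These properties are stated in terms of $\faths K$, which by our choice of $K$ coincides with $\fpaths$; substituting this equality turns them, property by property, into the degree-unrestricted conditions with respect to $Z^*$. Hence $Z^*$ is an $\F$-bisimulation, and since $(a_1,b_1,a_2,b_2) \in Z^*$, we conclude $\mG_1 \simeq^\F \mG_2$.

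The main obstacle is choreographing the two uses of finiteness at the right moment: one must select a single index $K$ that simultaneously stabilizes the maximal-bisimulation sequence (finiteness of $V_1^2 \times V_2^2$) and saturates the $\faths k$ hierarchy to $\fpaths$ (finiteness of $V_1$ and $V_2$). Once this is arranged, the translation from the degree-$(K{+}1)$ conditions satisfied by $Z^*$ to the degree-unrestricted conditions is literally a substitution, and the rest is bookkeeping.
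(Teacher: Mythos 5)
Your proof is correct and follows essentially the same route as the paper: the only-if direction uses the constant sequence $(Z,\dots,Z)$ together with the Invariance Lemma, and the if-direction extracts a stabilized stage $Z_p$ of the maximal bisimulation $\BiSim \F {\G_1}{\G_2}$, which the paper likewise takes as its degree-unrestricted bisimulation. Your treatment is in fact slightly more explicit than the paper's, which leaves implicit the point you make about choosing the index large enough that $\faths k$ has saturated to $\fpaths$ before reading off the degree-unrestricted conditions.
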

\begin{proof}
Let $\mG_1=(\G_1,a_1,b_1)$ and $\mG_2=(\G_2,a_2,b_2)$, and
assume $\mG_1 \simeq^\F \mG_2$ by the bisimulation $Z$.
In order to show $\mG_1 \indist \F \mG_2$, we must show that
$\mG_1 \indistk \F k \mG_2$ holds
for all degrees $k$.  So, let $k$ be arbitrary.
Consider the sequence $\bar Z=Z,Z,\dots,Z$ that simply consists of $k+1$
times $Z$.  It is now readily verified that $\bar Z$ is an
$(\F,k)$-bisimulation from $\G_1$ to $\G_2$.  Since
$(a_1,b_1,a_2,b_2) \in Z$, we conclude $\mG_1 \indistk \F k \mG_2$
by Lemma~\ref{invariant}.

Conversely, assume $\mG_1 \indistk \F k \mG_2$ for every degree $k$.
This means that $(a_1,b_1,a_2,\allowbreak
b_2) \allowbreak \in \BiSim \F {\G_1} {\G_2}_k$
for every $k$.  Let $\BiSim \F {\G_1} {\G_2} = Z_0,Z_1,\dots$.
Recall that this is a decreasing sequence.  Hence, since $\G_1$
and $\G_2$ are finite, there exists a degree $p$ such that
$Z_{q+1}=Z_q$ for all $q \geq p$.  As a consequence,
$Z_p$ is an $\F$-bisimulation from $\G_1$ to $\G_2$ and we are
done.  
\end{proof}

\subsection{Simulation without degree restriction}

For any fragment $\C(\F)$ containing neither complement nor difference,
in an entirely analogous manner
we can define degree-unrestricted versions of the properties 
for simulations listed in Section~\ref{subsim}.  Then an
{$\F$-simulation from $\G_1$ to $\G_2$} is a pair $(Z,W)$ of
subsets of $V_1^2 \times V_2^2$ such that each element of $Z$
has all Forth properties with respect to $(Z,W)$ corresponding
to the operators present in $\F$, and each element of $W$ has all
Back properties with respect to $(Z,W)$.
We can then again show that $\mG_1 \contnd \F \mG_2$ if and only
if there exists an $\F$-simulation $(Z,W)$ from $\G_1$ to $\G_2$
such that $(a_1,b_1,a_2,b_2) \in Z$.

\subsection{Polynomial-time complexity}

As a corollary of the above, we obtain:

\begin{corol}
For any fixed fragment $\C(\F)$, it can be decided in polynomial time
whether or not two given finite marked structures are
indistinguishable in $\C(\F)$.
\end{corol}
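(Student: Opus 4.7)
The plan is to decide $\mG_1 \indist \F \mG_2$ by computing a maximal degree-unrestricted $\F$-bisimulation (or, when $\F$ lacks both complement and difference, a maximal $\F$-simulation) between $\G_1$ and $\G_2$, and then testing whether $(a_1,b_1,a_2,b_2)$ lies in the result. By Theorem~\ref{hennessymilner} together with its simulation counterpart stated immediately before this corollary, this correctly characterizes $\indist \F$ on finite structures, so the whole task reduces to implementing that maximal bisimulation/simulation construction efficiently.

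First I would precompute $\fpaths(\G_i)$ for $i=1,2$, which is just reachability in $\graph{\G_i}$ (if $\F$ lacks converse) or in $\ugraph{\G_i}$ (otherwise), and is thus computable in polynomial time via transitive closure. In the same preprocessing step I would record the atomic $\F$-type of every pair in $V_i^2$. Then I run the standard successive-refinement loop: initialize $Z$ (and $W$ in the simulation case) to the set of tuples in $V_1^2\times V_2^2$ satisfying Atoms Forth (resp.\ Atoms Back), and then iteratively remove from $Z$ (resp.\ $W$) any tuple that fails one of the required degree-unrestricted Forth or Back conditions evaluated against the current $Z,W$. Since each non-final round strictly shrinks $Z\cup W$ inside the ambient set of size $|V_1|^2|V_2|^2$, the iteration stabilizes after at most $O(|V_1|^2|V_2|^2)$ rounds, and the resulting fixed point is precisely the maximal $\F$-bisimulation or $\F$-simulation.

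The one point worth checking — and where I expect the only subtlety to lie — is the per-round cost. The characteristic expressions in Lemmas~\ref{karakter} and~\ref{karaktersim} used explicit quantifiers over subsets of $V_1$, which would be exponential if invoked directly; however, the bisimulation and simulation \emph{conditions} themselves, as stated in Sections~\ref{subconditions} and~\ref{subsim} and in their degree-unrestricted variants, quantify only over individual nodes of $V_1$ or $V_2$ and involve only membership tests in the current $Z,W$ and in the precomputed $\fpaths$-relations. Each such ``for all $c$, there exists $c'$'' test therefore runs in $O(|V_1|+|V_2|)$ time per tuple and per condition, so a single refinement round is polynomial, the entire iteration is polynomial, and the final membership test for $(a_1,b_1,a_2,b_2)$ decides indistinguishability in $\C(\F)$.
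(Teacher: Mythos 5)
Your proposal is correct and follows essentially the same route as the paper: both compute a maximal (bi)simulation by successive refinement, observe that the decreasing sequence stabilizes after at most $|V_1|^2|V_2|^2$ rounds with each round computable in polynomial time, and then appeal to the degree-unrestricted Hennessy--Milner characterization to reduce indistinguishability to a membership test; the only cosmetic difference is that the paper iterates the degree-indexed sequence $\BiSim \F {\G_1}{\G_2}$ until it stabilizes rather than refining directly with the degree-unrestricted conditions, and these yield the same fixed point. (Your per-tuple cost estimate of $O(|V_1|+|V_2|)$ should really be $O(|V_1|\cdot|V_2|)$ for the ``for all $c_1$ there exists $c_2$'' tests, but this does not affect the polynomial bound.)
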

\begin{proof}
We give the proof for fragments with complement or difference;
the case for the other fragments is entirely analogous.
Let $\mG_1=(\G_1,a_1,b_1)$ and $\mG_2=(\G_2,a_2,b_2)$
be finite marked structures.  Let $n_1$ ($n_2$) be the number of
nodes of $\G_1$ ($\G_2$).  Let $\BiSim \F {\G_1} {\G_2} = Z_0,Z_1,\dots$.
Recall from the proof of Theorem~\ref{hennessymilner} that
there exists $p$ such that $Z_{q+1}=Z_q$ for all $q\geq p$, and
$\mG_1 \indist \F \mG_2$ if and only if $(a_1,b_1,a_2,b_2) \in
Z_p$.  Let $p$ be the smallest such $p$.  Then in the worst case, for
each $i<p$, there is exactly one element less in $Z_{i+1}$
compared to $Z_i$.  Hence, $p \leq n_1^2n_2^2$ which is
polynomial in $n_1$ and $n_2$.  Moreover, it is evident from
Definition~\ref{defmax} that $Z_{i+1}$ can be computed from $Z_i$
in time polynomial in $n_1$, $n_2$, and the size of $Z_i$ which
is itself bounded by $n_1^2n_2^2$.  Hence, we can
compute $Z_p$ by performing a polynomial number of iterations
where each iteration takes polynomial time, and we are done.
\end{proof}

\section{Concluding remarks}
\label{sec:conclusion}

We have always included the identity relation and the three
operations union, intersection and composition in the logics that
we consider.  As already mentioned in the Introduction, it is an
interesting topic for further research to see what happens if
some of these operators are left out.

The results of the present paper provide the tools to continue
the research on the relative expressive power of fragments of the
calculus of relations.  In earlier work \cite{rafragments} the
precise relationships between all fragments were clarified,
ignoring residuals however.  Since in the present paper we have
fully integrated the residuals, an interesting direction for further
research is now to throw the residuals in the picture.

For example, it is an intriguing question how the
fragments $\C(-)$ and $\C(\lres,\rres)$ relate to each other 
in their power to express boolean queries (where `true' is
represented by any nonempty answer and `false' is represented by
the empty answer).  Both fragments extend the basic fragment $\C$
with significant nonmonotonic operators, viz., difference on the
one hand and the residuals on the other hand.
We actually conjecture that the two expressive
powers are incomparable.  For example, following the standard approach
\cite{EF99,Libkin04,ahv_book}, one may try to prove that
$\C(\lres,\rres)$ is not subsumed by $\C(-)$ for boolean queries
by exhibiting an expression $e$ in $\C(\lres,\rres)$ and,
for each degree $k$, two structures $\A_k$ and
$\B_k$ such that for each $k$ the following holds:
\begin{itemize}
\item
$e(\A_k) \neq \emptyset = e(\B_k)$;
\item
for any pair $(a,b)$ of nodes of $\A_k$ there exists a pair
$(a',b')$ of nodes of $\B_k$ such that
$(\A_k,a,b) \mysim {\{\lres,\rres\}} k (\B_k,a',b')$.
\end{itemize}
This indeed implies that no expression $e'$ in
$\C(\lres,\rres)$ can correctly express the boolean query
expressed by $e$.  For, because $e(\A_k) \neq \emptyset$,
any such expression should return at least one pair $(a,b)$
on $\A_k$, but then by the similarity 
relationship, it will also return a pair $(a',b')$ on $\B_k$, in
contradiction with $e(\B_k)=\emptyset$.

For fragments with set difference or complementation, we have
worked with bisimulations; for fragments without set difference,
we have worked with simulations.  In modal logic there are
results \cite[Theorem 2.78]{blackburn_modallogic} that turn this
around, showing that formulas invariant under simulations are in
fact equivalent to positive-existential formulas.  It is an
interesting direction to see how such a result could be
formulated in the setting considered in the present paper.

\section*{Acknowledgment}

We thank the two anonymous referees for their critical comments
on an earlier draft of this paper.


\end{document}